\documentclass[a4paper]{article}
\usepackage{amsmath,amssymb,amsthm,mathtools,cancel,mathdots,adjustbox,mathtools,color,tikz,todonotes,graphicx,framed}
\usepackage[utf8]{inputenc}
\usepackage[T1]{fontenc}
\usepackage{eufrak}
\usepackage{geometry}
\usepackage{amsmath}
\usepackage{amssymb}
\usepackage{amsthm}
\usepackage{makeidx}
\usepackage{mathtools}
\usepackage{mathabx, mathrsfs, dsfont}
\usepackage{cases}
\usepackage{braket}
\usepackage[toc,page]{appendix}
\usepackage{pdfsync}
\usepackage{hyperref}
\usepackage{graphicx}
\usepackage{psfrag}
\usepackage{epstopdf}
\usepackage{graphicx}
\usepackage{fancyhdr}
\usepackage{math}
\usepackage{cite}
\usepackage{color}

\renewcommand{\meanval}[1]{\bE\left[#1\right]}

\renewcommand{\mod}{{\,{\rm mod}\,}}

\newtheorem{lemma}{Lemma}[section]
\newtheorem{theorem}[lemma]{Theorem}
\newtheorem{proposition}[lemma]{Proposition}
\newtheorem{corollary}[lemma]{Corollary}
\newtheorem{remark}[lemma]{Remark}

\newtheorem{definition}[lemma]{Definition}

\numberwithin{equation}{section}

\title{Adiabatic invariants for the FPUT and Toda chain in the thermodynamic limit}
%
%

\author{
T. Grava
\footnote{
International School for Advanced Studies (SISSA), Via Bonomea 265,   34136, Trieste, Italy and School of Mathematics,  University of Bristol,  Fry Building,  BS8 1UG,   UK\newline
 \textit{Email: } \texttt{grava@sissa.it} 
}  ,
 A. Maspero
 \footnote{
International School for Advanced Studies (SISSA), Via Bonomea 265,  34136  Trieste, Italy \newline
 \textit{Email: } \texttt{alberto.maspero@sissa.it} 
 }  ,
 G. Mazzuca
 \footnote{
International School for Advanced Studies (SISSA), Via Bonomea 265,  34136 Trieste, Italy \newline
 \textit{Email: } \texttt{guido.mazzuca@sissa.it} 
 }  ,
 A. Ponno
 \footnote{
Universit\'a di Padova, Dipartimento di Matematica ''Tullio Levi-Civita``, Via Trieste 63, 35121, Padova,  Italy
 \newline
 \textit{Email: } \texttt{ponno@math.unipd.it} 
 }
}
\date{\today}

\begin{document}

\maketitle

\begin{abstract}
We consider the  Fermi-Pasta-Ulam-Tsingou   (FPUT)  chain composed by $N \gg 1$ particles and periodic boundary conditions, and endow the phase space with the Gibbs measure at small temperature $\beta^{-1}$. 
Given a fixed ${1\leq m \ll N}$, we prove that the first $m$  integrals  of motion of the periodic Toda chain are adiabatic  invariants of FPUT (namely they are approximately constant along the Hamiltonian flow of the FPUT) for times of order $\beta$,  for initial data in a set of large measure.  \\
We also prove that special linear combinations of the harmonic energies are adiabatic invariants of the FPUT on the same time scale, whereas they become adiabatic invariants for all times for the Toda dynamics.
\end{abstract}

{\bf Keywords: } Fermi-Pasta-Ulam-Tsingou  chain, Toda chain, adiabatic invariants, Gibbs measure.
\section{Introduction and Main Results}
The FPUT chain with $N$ particles is the system with  Hamiltonian  
\begin{equation}
H_{F}(\bp,\bq)= \sum_{j=0}^{N-1}\frac{p_j^2}{2} + \sum_{j=0}^{N-1} V_F(q_{j+1}-q_{j})  \ , \quad V_F(x) = \frac{x^2}{2} -   \frac{x^3}{6}+ \tb  \frac{x^4}{24} \ ,
\label{fpu}
\end{equation}
which we consider with periodic boundary conditions $q_N=q_0\,$, $p_N=p_0$ and  $\tb > 0$.  We observe  that any generic nearest neighborhood quartic potential can  be set in the form of $V_F(x)$ through a canonical change of coordinates.
{\color{black}Over the last 60 years the FPUT system   has been the object of intense numerical and analytical research. Nowadays it is well understood that the system displays, on a relatively short time scale, an integrable-like behavior, first uncovered by Fermi, Pasta, Ulam and Tsingou \cite{Fermi},\cite{Fermi0} and later interpreted in terms of closeness to a nonlinear integrable
system by some authors, e.g. the Korteweg-de Vries (KdV) equation by Zabusky and Kruskal \cite{ZK}, the Boussinesq equation by Zakharov \cite{Zakharov}, and the Toda chain by Manakov first \cite{Manakov}, and then by Ferguson, Flaschka and McLauglin \cite{ferguson}. On larger time scales the system displays instead an ergodic behavior and approaches its micro-canonical equilibrium state (i.e. measure), unless the energy is so low to enter a
KAM-like regime \cite{Chirikov,HK,RINK}.

 In the present work  we  show that a family of first integrals of the Toda system are  adiabatic invariants (namely almost constant quantities) for the FPUT system. 
We bound their variation for times of order $\beta$,   where $\beta$ is the inverse of the temperature of the chain. 
Such estimates hold for a large set of initial data with respect to the Gibbs measure  of the chain  and they are  uniform in the number of particles, thus they persist in the thermodynamic limit.

In the last few years, there has been   a lot of activity  in the problem of constructing  adiabatic invariants  of nonlinear  chain  systems  in the thermodynamic limit, 
see \cite{carati,CaratiMaiocchi,BCM14,giorgilli12,giorgilli15,maiocchi}.
In particular adiabatic invariants in measure for the FPUT chain have been recently introduced by Maiocchi, Bambusi, Carati \cite{BCM14} by considering the FPUT chain a perturbation of the linear harmonic chain.
 Our approach  is based on the remark \cite{Manakov,ferguson} that the FPUT chain 
\eqref{fpu} can be regarded as a perturbation of  the  (nonlinear)  Toda chain  \cite{toda}
\begin{equation}
H_{T}(\bp,\bq):=\frac{1}{2} \sum_{j=0}^{N-1}{p_j^2} + \sum_{j=0}^{N-1}V_T(q_{j+1}-q_{j} )\ , \quad  V_T(x) =  e^{- x} +  x - 1\, ,
\label{toda}
\end{equation}
which we consider again with periodic boundary conditions $q_N=q_0\,$, $p_N=p_0$.
{The equations of motion of \eqref{fpu} and \eqref{toda} take the form
\begin{equation}
\label{Todaeq}
\dot{q}_j=\dfrac{\partial H}{\partial p_j}=p_j,\quad \dot{p}_j=-\dfrac{\partial H}{\partial q_j}=V'(q_{j+1}-q_{j}) -V'(q_{j}-q_{j-1}), \;\;j=0,\dots,N-1,
\end{equation}}
where $H$ stands for  $H_F$ or $H_T$ and  $V$ for  $V_F$ and $V_T$ respectively.\\
According to the values of $\tb$ in \eqref{fpu}, the Toda chain is either an approximation of the FPUT chain of   third order (for $\tb \neq 1$), or  fourth order (for $\tb = 1$).
We remark that the Toda chain  is the only  nonlinear integrable FPUT-like chain \cite{Sako,Dubrovin08}. 

%
The Toda chain admits {several families of } $N$ integrals of motion in involution (e.g. \cite{VanMoerbeke1976,FM,HK2}). 
Among  the various families of  integrals  of motion, the ones  constructed by Henon \cite{Henon} and Flaschka \cite{Flaschka1974} are explicit and easy to compute, being the trace of the powers of the Lax matrix associated to the Toda chain. In the following we refer to them simply as {\em Toda integrals} and denote them by  $J^{(k)}$, $1 \leq k \leq N$ (see \eqref{integrals}).

As the $J^{(k)}$'s are conserved along the Toda flow, and the FPUT chain  is a perturbation of the Toda one, the Toda integrals  are good candidates to be adiabatic invariants when computed along the FPUT flow.
This intuition is supported by several numerical simulations, the first by  Ferguson-Flaschka-McLaughlin \cite{ferguson} and more recently by other authors \cite{PCSF, benettin-ponno11,benettin-ponno13,GK,ChEh19}.
Such simulations  show that the variation of the Toda integrals  along the FPUT flow is very small on long times  for initial data of small specific energy. In particular, the numerical results in \cite{benettin-ponno11,benettin-ponno13,GK} suggest that  such phenomenon  should persists in the thermodynamic limit and for ``generic'' initial conditions.

Our first result is a quantitative,  analytical proof of this phenomenon. 
More precisely, we fix an arbitrary  $m \in \N$  and 
provided $N$ and $\beta$    sufficiently large,  we bound the variations of the   first $m$ Toda integrals computed along the flow of FPUT,    for  times of order 
\begin{equation}
\label{bound}
\frac{\beta}{\left((\tb-1)^2 + C_1 \beta^{-1} \right)^{\frac12}} ,
\end{equation}
where  $C_1$ is a positive constant, independent of $\beta, N$.
Such a  bound holds for initial data
in a large set with respect to the  Gibbs measure. Note that the bound \eqref{bound}  improves to $\beta^{\frac{3}{2}}$ when  $\tb = 1$, namely when  the Toda chain becomes a fourth order approximation of the FPUT chain. 
Such analytical time-scales are compatible with (namely smaller than) the numerical ones determined in
\cite{benettin-ponno11,benettin-ponno13,BPP18}.}

An interesting question is whether the Toda integrals $J^{(k)}$'s control  the normal modes of FPUT, namely the action of the linearized chain.
It turns out that this is indeed the case: 
we prove that the quadratic parts  $J^{(2k)}_2$  (namely the Taylor polynomials of order 2) of the  integral of motions  $J^{(2k)}$, are linear combinations of the normal modes. Namely  one has 

\begin{equation}
\label{struc}
J^{(2k)} = \sum_{j=0}^{N-1} \wh c_j^{(k)}\,  E_j  + O((\wh \bp, \wh \bq)^3) , 
\end{equation}
where $E_j$ is the $j^{th}$ normal mode  (see \eqref{he} for its formula), $(\wh \bp, \wh \bq)$ are the discrete Hartley transform  of $(\bp,\bq)$ (see  definition below in \eqref{dht}) and  $\wh\bc^{(k)}$ are real coefficients.

So we consider  linear combinations of the normal modes  of the form
\begin{equation}
\label{lin.comb}
\sum_{j=0}^{N-1} \wh g_j  E_j
\end{equation}
where $(\wh g_j)_j$ is the discrete Hartley transform of  a vector $\bg \in \R^N$ which has only   $2\floor{\frac{m}{2}}+2$ nonzero entries with $m$ independent from $N$, here $\floor{\frac{m}{2}}$ is the integer part of $\frac{m}{2}$. 
Our second result shows that  linear combinations of the form \eqref{lin.comb}, when computed along the FPUT flow,  are  
adiabatic invariants  for the same time scale as in \eqref{bound}.\\
Further we also show that linear combinations of the harmonic modes as  in  \eqref{lin.comb}, are approximate invariant for the Toda dynamics (with large probability).

Examples of linear combinations \eqref{lin.comb} that we control are
\begin{equation}
\label{sincos}
\sum_{j=1}^N \sin^{2\ell}\left(\frac{j\pi}{N}\right) \, E_j , \qquad 
\sum_{j=1}^N \cos^{2\ell}\left(\frac{j\pi}{N}\right) \, E_j , \qquad \forall \ell=0, \ldots, \Big\lfloor\frac{m}{2}\Big\rfloor .  
\end{equation}
These linear combinations weight in different ways low and high energy modes.

{
Finally we note that, in the  study of the FPUT problem, one usually measures the time the system takes to approach the equilibrium  when   initial conditions very far from equilibrium are considered.
On the other hand, 
our result indicates that, despite initial states
are sampled from a thermal distribution, nonetheless complete thermalization
is expected to be attained, in principle, over a time scale that increases
with decreasing temperature.}\\
Our results are mainly based on two ingredients. The first one is a detailed study of the algebraic properties of the Toda integrals. 
The second ingredient comes from adapting  to our case, methods of statistical mechanics developed  by Carati \cite{carati} and Carati-Maiocchi \cite{CaratiMaiocchi}, and also in \cite{BCM14,giorgilli12,giorgilli15,maiocchi}. 

\bigskip

\section{Statement of results}
\subsection{Toda integrals as adiabatic invariants for FPUT}

We come to a precise statements of the main results of the present paper.
 We consider the FPUT chain \eqref{fpu} and the Toda chain \eqref{toda} in the subspace 
\begin{equation}
\label{m_qp}
\cM:= \left\lbrace (\bp,\bq) \in \R^{N}\times \R^N \colon \ \ \ 
\sum_{j=0}^{N-1} q_j ={\mathcal  L}  \ \, ,\sum_{j=0}^{N-1} p_j = 0  \right\rbrace \, ,
\end{equation}
which is invariant for the dynamics.  Here  ${\mathcal L}$ is a positive constant. 

Since both $H_F$ and $H_T$ depend just on the relative distance  between $q_{j+1}$ and $q_j$, it is natural to introduce on $\cM$ the variables $r_j$'s as
\begin{equation}
\label{eq:r_change}
	r_j := q_{j+1} - q_j , \qquad 0 \leq j \leq N -1 \, ,
\end{equation}
which are naturally constrained to
\begin{equation}
\label{sumr}
\sum_{j=0}^{N-1} r_j = 0 \, ,
\end{equation} 
due to the periodic boundary condition $q_N = q_0$. We observe  that the change of coordinates \eqref{eq:r_change} together with the condition \eqref{sumr} is well defined on the phase space $\cM$, but not on the whole  phase space  $\R^N\times \R^N$.
In these variables the phase space $\cM$ reads
\begin{equation}
\label{media}
\cM:= \left\lbrace (\bp,\br) \in \R^{N}\times \R^N \colon \ \ \ 
\sum_{j=0}^{N-1} r_j = \sum_{j=0}^{N-1} p_j = 0  \right\rbrace \, .
\end{equation}
We endow $\cM$  by the Gibbs measure of $H_F$  at temperature $\beta^{-1}$, namely we put
\begin{equation}\label{eq:misura_vera}
\di \mu_F :=    \frac{1}{Z_F(\beta)} \  e^{-\beta {H_F(\bp,\br)}} \,  \ \delta\left(\sum_{j=0}^{N-1} r_j =0\right) \ \delta\left(\sum_{j=0}^{N-1} p_j =0\right)  \ \di \bp \,  \di \br, 
\end{equation}
where
  as usual $Z_F(\beta)$ is the partition function which normalize the measure, namely
  \begin{equation}
Z_F(\beta) := \int_{\R^N \times \R^N}    e^{-\beta{H_F(\bp,\br)}} \ \delta\left(\sum_{j=0}^{N-1} r_j =0\right) \ \delta\left(\sum_{j=0}^{N-1} p_j =0\right)  \ \di \bp \,  \di \br . 
\end{equation}
{We remark that we can consider the measure $\di \mu_F$ as the weak limit, as $\epsilon \to 0$, of the measure 
\begin{equation*}
	\di \mu_\epsilon = 
	\frac{
	e^{-\beta H_F(\bp, \br)} \ 
	e^{-\left(\sum_{j=0}^{N-1}r_j/\epsilon \right)^2  
	    - \left(\sum_{j=0}^{N-1}p_j/\epsilon \right)^2   }}{\left(\int_{\R^{2N}}e^{-\beta H_F(\bp, \br)} \ 
	e^{-\left(\sum_{j=0}^{N-1}r_j/\epsilon \right)^2  
	    - \left(\sum_{j=0}^{N-1}p_j/\epsilon \right)^2   } \  \di \bp \, \di \br \right) } \  \di \bp \, \di \br \, . 
\end{equation*} 
}
Given a function $f\colon \cM \to \C$, we will use the probability \eqref{eq:misura_vera}  to compute its average $\la f \ra$, its $L^2$ norm $\norm{f}$, its variance $\sigma_f^2$ defined as
\begin{align}
\label{average}
    & \la f \ra := \meanval{f} \equiv \int_{\R^{2N}} f(\bp,\br)  \, \, \di \mu_F , \\
    & \norm{ f}^2 := \meanval{|f|^2} \equiv \int_{\R^{2N}} |f(\bp,\br)|^2 \, \di \mu_F , \\
    & \sigma_f^2 :=   \norm{f - \la f \ra}^2 . 
\end{align}

In order to state our first theorem we must introduce the Toda integrals of motion. It is well known that the Toda chain is an integrable system \cite{toda,Henon}.
The standard way to prove its  integrability is to put it in a Lax-pair form. 
The  Lax form was introduced by  Flaschka in  \cite{Flaschka1974} and Manakov \cite{Manakov} and it is obtained through the change of coordinates 
\begin{equation}
b_j := -p_j \, , \qquad a_j:=  e^{\frac{1}{2}(q_j-q_{j+1})} \equiv e^{- \frac{1}{2} r_j} , \qquad 0 \leq j \leq N-1 \, .
\label{bavariable}
\end{equation}
By the geometric constraint \eqref{sumr} and the momentum conservation $\sum_{j=0}^{N-1} p_j = 0$ (see \eqref{m_qp}), such variables are  constrained by the conditions 
\begin{equation*}
    \sum_{j=0}^{N-1}{b_j}=0, \,  \qquad \prod_{j=0}^{N-1}{a_j}=1 \ .
    \end{equation*}
The Lax operator for the Toda chain is the periodic Jacobi matrix
 \cite{VanMoerbeke1976}
\begin{equation} \label{jacobi}
L(b,a) := \left( \begin{array}{ccccc}
b_{0} & a_{0} & 0 & \ldots &  a_{N-1} \\
a_{0} & b_{1} & a_{1} & \ddots & \vdots \\
0 & a_{1} & b_{2} & \ddots & 0 \\
\vdots & \ddots & \ddots & \ddots & a_{N-2} \\
 a_{N-1} & \ldots & 0 & a_{N-2} & b_{N-1} \\
\end{array} \right) .
\end{equation}
We introduce  the matrix  $A=L_+-L_-$  where  for a square matrix
 $X$  we call  $X_+$  the upper triangular part of $X$ 
$$
\left(X_+\right)_{ij} =\left\{ \begin{array}{cc} X_{ij}, & i\leq j \\
0, & \mbox{otherwise}\end{array}\right.
$$
and in a similar way by $X_-$ the lower triangular part of $X$
$$
\left(X_-\right)_{ij} =\left\{ \begin{array}{cc} X_{ij} , & i\geq j \\
0, & \mbox{otherwise.}\end{array}\right.
$$
A straightforward calculation shows that  the Toda equations of motions \eqref{Todaeq} are equivalent to 
$$ \dfrac{d L}{dt}= [A,L].$$
It then follows that the eigenvalues of $L$ are  integrals of motion in involutions.

In particular,  the trace of powers of $L$, 
\begin{equation}
\label{integrals}
J^{(m)} := \frac{1}{m} \tr{L^m} , \qquad \forall 1 \leq m \leq N 
\end{equation}
are  $N$ independent, commuting,  integrals of motions in involution.
 Such integrals were first introduced by  Henon \cite{Henon} (with a different method), and we refer to them as {\em Toda integrals}.  We give the first few of them explicitly, written in the variables $(\bp, \br)$:
 \begin{equation}
 \label{t.int}
\begin{aligned}
&J^{(1)}(\bp) := -\sum_{i=0}^{N-1} p_i,  \qquad \qquad 
 J^{(2)}(\bp,\br):= \sum_{i=0}^{N-1}\left[ \frac{p_i^2}{2} + e^{-r_i}\right], \\
 & J^{(3)}(\bp,\br):= -\sum_{i=0}^{N-1} \left[\frac{1}{3} p_i^3 + (p_i + p_{i+1}) e^{-r_i}\right],  \\
&J^{(4)}(\bp,\br):= \sum_{i=0}^{N-1} \left[\frac{1}{4} p_i^4 + (p_i^2 + p_i p_{i+1} + p_{i+1}^2) e^{-r_i} + \frac{1}{2} e^{-2r_i} + e^{-r_i - r_{i+1}}\right] . 
\end{aligned}
\end{equation}
Note that $J^{(2)}$ coincides with the Toda Hamiltonian $H_T$. \\

Our first result shows that the Toda integral $J^{(m)}$, computed along the Hamiltonian flow $\phi^t_{H_F}$ of the FPUT chain, is an  adiabatic invariant  for long times and  for a set of initial data in a set of large Gibbs measure.
Here the precise statement:
\begin{theorem}\label{thm:goal}
Fix $m \in \N$.  There exist constants  $N_0, \beta_0, C_0, C_1>0$ (depending on $m$),  such that for any $N > N_0$, $\beta > \beta_0$, and any $\delta_1,\delta_2>0$ one has
\begin{equation}
\label{eq:main_res_Toda}
\bP\left( \abs{J^{(m)}\circ\phi^t_{H_F} - J^{(m)}} > \delta_1\sigma_{J^{(m)}}\right)\leq   \delta_2 C_0 \, ,
\end{equation}
for every time $t$ fulfilling 
\begin{equation}
\label{bound2}
|t| \leq \frac{ \delta_1\sqrt{\delta_2}}{\Big((\tb - 1)^2 + C_1 \beta^{-1} \Big)^{1/2}}  \beta\, .
\end{equation}
In \eqref{eq:main_res_Toda}  $\bP$ stands for the probability with respect to the Gibbs measure \eqref{eq:misura_vera}.\end{theorem}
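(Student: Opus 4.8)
The plan is to reduce the adiabatic invariance to two static estimates under the Gibbs measure—an upper bound on the $L^2$ norm of the ``drift'' $\left\{ J^{(m)}, H_F\right\}$ and a lower bound on the variance $\sigma_{J^{(m)}}^2$—and then to feed these into an abstract probabilistic lemma of Carati--Maiocchi type \cite{carati,CaratiMaiocchi}. First I would differentiate along the FPUT flow, $\frac{\di}{\di t}\left(J^{(m)}\circ\phi^t_{H_F}\right) = \left\{J^{(m)}, H_F\right\}\circ\phi^t_{H_F}$. Since $J^{(m)}$ Poisson-commutes with the Toda Hamiltonian $H_T = J^{(2)}$, one has $\left\{J^{(m)}, H_F\right\} = \left\{J^{(m)}, H_F - H_T\right\} =: \Phi$, so the drift is governed entirely by the difference of potentials $V_F - V_T = (\tb - 1)\frac{x^4}{24} + O(x^5)$. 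Integrating in time gives $J^{(m)}\circ\phi^t_{H_F} - J^{(m)} = \int_0^t \Phi\circ\phi^s_{H_F}\, \di s$.

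For the abstract step, I would exploit that $\di\mu_F$ is invariant under $\phi^t_{H_F}$: Cauchy--Schwarz in $s$ together with this invariance yields
$$
\left\langle\left(\int_0^t \Phi\circ\phi^s_{H_F}\,\di s\right)^{\!2}\right\rangle \leq |t| \int_0^t \left\langle (\Phi\circ\phi^s_{H_F})^2\right\rangle \di s = t^2\, \norm{\Phi}^2 ,
$$
whence Markov's inequality applied to $\bigl(J^{(m)}\circ\phi^t_{H_F} - J^{(m)}\bigr)^2$ gives $\bP\!\left(\abs{J^{(m)}\circ\phi^t_{H_F} - J^{(m)}} > \delta_1 \sigma_{J^{(m)}}\right) \leq t^2 \norm{\Phi}^2/(\delta_1^2 \sigma_{J^{(m)}}^2)$. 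Imposing that the right-hand side be $\leq \delta_2 C_0$ produces exactly a time scale $\abs{t}\leq \sqrt{\delta_2 C_0}\,\delta_1\, \sigma_{J^{(m)}}/\norm{\Phi}$; matching this with \eqref{bound2} reduces the whole theorem to the single quantitative bound $\norm{\Phi}/\sigma_{J^{(m)}} \leq C\,\beta^{-1}\bigl((\tb-1)^2 + C_1\beta^{-1}\bigr)^{1/2}$.

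The heart is then the two estimates. After rescaling $r_j, p_j \sim \beta^{-1/2}$ the measure $\di\mu_F$ becomes a small perturbation of a standard Gaussian. For the numerator I would use locality and translation invariance: writing $\Phi = -\sum_j (V_F-V_T)'(r_j)\,(\partial_{p_{j+1}} - \partial_{p_j})J^{(m)}$, each per-site term is of order $(\tb-1)\beta^{-3/2}\cdot\beta^{-1/2} = (\tb-1)\beta^{-2}$ from the quartic mismatch, with the residual $x^5$ term contributing at order $\beta^{-5/2}$ even when $\tb=1$; summing the $N$ (weakly correlated) terms gives $\norm{\Phi}^2 \leq C\,N\beta^{-4}\bigl((\tb-1)^2 + C_1\beta^{-1}\bigr)$. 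For the denominator I would invoke the structure \eqref{struc}, which identifies the quadratic part of $J^{(2k)}$ with a genuine nontrivial linear combination of normal modes, to prove $\sigma_{J^{(m)}}^2 \geq c\,N\beta^{-2}$. Dividing, the factors of $N$ cancel—which is precisely what makes the estimate uniform in the thermodynamic limit—and yields the required ratio.

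The main obstacle is the variance lower bound. Whereas $\norm{\Phi}^2$ is an upper bound reachable by fairly mechanical moment computations, bounding $\sigma_{J^{(m)}}^2$ from below demands ruling out hidden cancellations in the fluctuations of the \emph{full} nonlinear $J^{(m)}$ (not merely its quadratic truncation), controlling the non-Gaussian corrections to $\di\mu_F$, and—most delicately—establishing enough decay of correlations that the variance genuinely grows like $N$ as $N\to\infty$. Dealing with the two linear constraints $\sum_j r_j = \sum_j p_j = 0$ built into $\di\mu_F$ is a further technical subtlety I would expect to absorb a good part of the effort.
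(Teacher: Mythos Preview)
Your proposal is correct and follows essentially the same route as the paper: Chebyshev/Carati--Maiocchi reduces the theorem to the ratio $\norm{\{J^{(m)},H_F-H_T\}}^2/\sigma_{J^{(m)}}^2$, the numerator is bounded above by $CN\beta^{-4}\bigl((\tb-1)^2+C_1\beta^{-1}\bigr)$ via locality and correlation decay, the denominator below by $cN\beta^{-2}$ via the quadratic part of $J^{(m)}$ plus a remainder estimate, and the $N$'s cancel. The paper organises the two static bounds exactly as you anticipate---including the approximation of $\di\mu_F$ by a product measure to handle the constraints and obtain the needed correlation decay, and the splitting $J^{(m)}=J^{(m)}_0+J^{(m)}_2+J^{(m)}_{\geq3}$ with $\sigma_{J^{(m)}_{\geq3}}=O(\sqrt{N}\beta^{-3/2})$ to pass from the quadratic variance lower bound to the full one.
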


We observe that the time scale \eqref{bound2} increases to $\beta^{\frac{3}{2}}$  for  $\tb = 1$, namely if the Toda chain 
is a fifth order approximation of the FPUT chain.

\begin{remark}
By choosing $0 < \varepsilon < \frac{1}{4}$,  $\delta_1=\beta^{-\epsilon}$ and $\delta_2=\beta^{-2\epsilon}$
the statement of the above theorem becomes:
\begin{equation}
\label{eq:main_res_Toda2}
\bP\left( \abs{J^{(m)}\circ\phi^t_{H_F} - J^{(m)}} > \frac{\sigma_{J^{(m)}}}{\beta^\varepsilon} \right)\leq   \frac{C_0}{\beta^{2\varepsilon}} \, ,
\end{equation}
for every time $t$ fulfilling 
\begin{equation}
\label{bound22}
|t| \leq \frac{ \beta^{1-2\varepsilon}}{\Big((\tb - 1)^2 + C_1 \beta^{-1} \Big)^{1/2}} \, .
\end{equation}
\end{remark}

\begin{remark}
 We observe that   our estimates  in \eqref{eq:main_res_Toda}  and \eqref{bound2} are  independent from the number of particles $N$.
Therefore we can claim that the result  of theorem~\ref{thm:goal}  holds true in the thermodynamic limit, i.e. when $\lim_{N\to \infty} \frac{\langle H_F\rangle}{N} = e > 0$
where $\langle H_F\rangle$ is the average over the Gibbs measure \eqref{eq:misura_vera} of the FPUT  Hamiltonian $H_F$.
The same observation applies to  theorem~\ref{thm:main}   and theorem~\ref{thm:main2}  below.

\end{remark}

Our Theorem \ref{thm:goal}  gives a quantitative, analytical proof of the adiabatic invariance of the Toda integrals, at least for a set of initial data of large measure. It is an interesting question whether other  integrals of motion of the Toda chain are adiabatic invariants for the FPUT chain. Natural candidates are the actions and spectral gaps.

Action-angle coordinates and the related Birkhoff coordinates  (a cartesian version of action-angle variables) were constructed analytically by Henrici and Kappeler \cite{HK1, HK2} for any  finite  $N$, and by Bambusi and one of the author  \cite{BM16} uniformly in $N$, but in a regime of specific energy going to 0 when $N$ goes to infinity (thus not the thermodynamic limit).\\
The difficulty in dealing with these other  sets of  integrals is that they are not explicit in the  physical variables $(\bp, \br)$. As a consequence, it appears very difficult  to compute their averages with respect to the Gibbs measure of the system. 
 
Despite these analytical challenges, recent numerical simulations by Goldfriend and Kurchan \cite{GK} suggest that the spectral gaps of the Toda chain are adiabatic invariants for the FPUT chain for long times also in the thermodynamic limit.

\subsection{Packets of normal modes}

Our second result concerns adiabatic invariance of some special linear combination of normal modes.
To state the result, we first  introduce the  normal modes through the discrete Hartley transform. Such  transformation, which we denote by $\cH$, is defined as 
\begin{equation}
\label{dht}
	\wh \bp  := \cH \bp,  \, \quad  \cH_{j,k} := \frac{1}{\sqrt{N}}\left(\cos\left(2\pi \frac{jk}{N}\right) + \sin\left(2\pi \frac{jk}{N}\right) \right) , \qquad  j,k = 0,\ldots, N-1
\end{equation}
and one  easily verifies that it fulfills 
\begin{equation}
\label{dht.2}
\cH^2 = \uno , \qquad \cH^\intercal  = \cH .
\end{equation}
The Hartley transform is closely related to the classical Fourier transform $\cF$, whose matrix elements are $\cF_{j,k}:= 
\frac{1}{\sqrt{N}} e^{- \im 2\pi j k/N }$, as  one has  $\cH = \Re \cF - \Im \cF$. The advantage of the Hartley transform is that it maps real variables into real variables, a fact which will be useful when calculating averages of quadratic Hamiltonians (see Section \ref{sec:lbvar}).

A consequence of \eqref{dht} is that the  change of coordinates
$$
\R^N \times \R^N \to \R^N \times \R^N, \quad  (\bp, \bq) \mapsto (\wh \bp, \wh \bq) := (\cH \bp, \cH \bq)
$$
 is a  canonical one. Due to  $\sum_j p_j =0, \, \sum_j q_j = {\mathcal L}$, one has also  $\wh p_0  =0, \, \wh q_0 = {\mathcal L}/\sqrt{N}$.  In these variables the quadratic part   $H_2$ of the  Toda  Hamiltonian \eqref{toda}, i.e. its  Taylor expansion of order two nearby the origin,   takes the form
\begin{equation} \label{quadratic_toda}
	H_2(\wh \bp, \wh \bq) := \sum_{j=1}^{N-1} \frac{\wh p_j^2 + \omega_j^2 \wh q_j^2}{2} , \qquad
	\omega_j := 2\sin\left(\pi\frac{j}{N}\right) .
\end{equation}
 We observe that \eqref{quadratic_toda} is exactly the Hamiltonian of the Harmonic Oscillator chain.  We define 
\begin{equation}
\label{he}
E_j := \frac{\wh p_j^2 + \omega_j^2 \wh q_j^2}{2} , \qquad j = 1, \ldots, N-1 \, ,
\end{equation}
 the $j^{th}$ {\em normal mode}.

To state our second result we need the following definition:
\begin{definition}[$m$-admissible vector]
\label{def:ad}
Fix $m \in \N$   and $ \wt m : = \left\lfloor\frac{m}{2}\right\rfloor $, where   $\left\lfloor\frac{m}{2}\right\rfloor $ is the integer part of $\frac{m}{2}$. 
For any $N > m$, a   vector $\bx\in \R^N$  is said to be  {\em $m$-admissible} if 
there exits   a  non zero vector  $\by=(y_0, y_1, \ldots, y_{\wt m}) \in \R^{\wt m +1}$ with $K^{-1} \leq  \sum_j |y_j| \leq K$, $K$ independent from  $N$, such that
$$ x_k= x_{N-k} = y_k, \mbox{ for  $0 \leq k \leq \wt m$   and  $x_{k} = 0$ otherwise.}    $$
\end{definition}

We are ready to state our second  result, which shows that special linear combinations of normal modes are adiabatic invariants for the FPUT dynamics for long times. Here the precise statement:
{\begin{theorem}
	\label{thm:main}
	Fix $m\in \mathbb{N}$ and  let $\boldsymbol{g}=(g_0,\dots,g_{N-1})\in\R^N$ be a $m$-admissible vector  (according to Definition \ref{def:ad}).
	Define 
\begin{equation}
\label{Phi}
\Phi := \sum_{j=0}^{N-1} \wh g_j E_j , 
\end{equation}
	 where $ \wh{\bg}$ is the discrete Hartley transform \eqref{dht}   of $\bg$,   and $E_j$ is the harmonic energy \eqref{he}. 
	Then  there exist $N_0, \beta_0, C_0, C_1>0$ (depending on $m$),  such that for any $N > N_0$, $\beta > \beta_0$, $0 < \varepsilon < \frac{1}{4}$, one has
	\begin{equation}
	\label{Hamornic_E}
	\bP\left( \abs{\Phi\circ\phi^t_{H_F} - \Phi} > \frac{\sigma_{\Phi}}{\beta^\varepsilon} \right)\leq    \frac{C_0}{\beta^{2\varepsilon}} \, , 
	\end{equation}
	for every time $t$ fulfilling 
\begin{equation}
\label{time2}
	 |t| \leq \frac{\, \beta^{1-2\varepsilon}}{\Big((\tb -1)^2 + C_2 \beta^{-1} \Big)^{1/2}}.
	 \end{equation}
\end{theorem}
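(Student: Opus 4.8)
The plan is to reduce Theorem~\ref{thm:main} to the adiabatic invariance of the Toda integrals already proved in Theorem~\ref{thm:goal}, using the algebraic structure \eqref{struc}. Since $\bg$ is $m$-admissible it is supported on the symmetric band $\{0,\dots,\wt m\}\cup\{N-\wt m,\dots,N-1\}$, so its Hartley transform $\wh g_j$ is a polynomial of degree $\le \wt m$ in $\cos(2\pi j/N)$. By \eqref{struc} the quadratic parts $J^{(2k)}_2=\sum_j \wh c^{(k)}_j E_j$, for $k=1,\dots,\wt m+1$, have coefficients $\wh c^{(k)}_j$ that are polynomials of degree $k-1$ in $\cos(2\pi j/N)$; hence they span exactly the space of admissible weight vectors, and I can write $\Phi=\sum_{k=1}^{\wt m+1}\alpha_k\,J^{(2k)}_2$. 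The decisive observation is the size of the $\alpha_k$: because $\bg$ has $O(1)$ entries while the physical vectors $\bc^{(k)}$ behind $\wh c^{(k)}$ have entries of size $\sqrt N$, one has $\alpha_k=O(N^{-1/2})$ uniformly in $N$. I would record this scaling carefully, since it is what makes every subsequent bound $N$-uniform.

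I would then set $\Psi:=\sum_{k=1}^{\wt m+1}\alpha_k J^{(2k)}$ and $R:=\Psi-\Phi=\sum_k\alpha_k\big(J^{(2k)}-J^{(2k)}_2\big)$, the latter collecting only terms of degree $\ge 3$, and split
\[
\Phi\circ\phi^t_{H_F}-\Phi=\big(\Psi\circ\phi^t_{H_F}-\Psi\big)-\big(R\circ\phi^t_{H_F}-R\big).
\]
For the first term I would apply Theorem~\ref{thm:goal} to each $J^{(2k)}$ with $\delta_1=\beta^{-\varepsilon}$, $\delta_2=\beta^{-2\varepsilon}$, and combine by the triangle inequality and a union bound. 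Since $\sum_k|\alpha_k|\,\sigma_{J^{(2k)}}\lesssim N^{-1/2}\cdot\sqrt N\,\beta^{-1}\simeq \sigma_\Phi$, this produces a deviation of the correct order $\sigma_\Phi\beta^{-\varepsilon}$ with probability $\le C_0\beta^{-2\varepsilon}$, on exactly the time scale \eqref{time2}, whose $\tb$-dependence is inherited from \eqref{bound2}. Equivalently, one may bound $\norm{\Psi\circ\phi^t-\Psi}\le |t|\,\norm{\{\Psi,H_F\}}$ using $\mu_F$-invariance of $\phi^t_{H_F}$ and $\{\Psi,H_T\}=0$, so that $\{\Psi,H_F\}=\{\Psi,H_F-H_T\}$ is of high order with $\norm{\{\Psi,H_F\}}\lesssim \beta^{-2}\big((\tb-1)^2+C\beta^{-1}\big)^{1/2}$.

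The remainder term is controlled statically. Since $\phi^t_{H_F}$ preserves $\mu_F$, the variable $R\circ\phi^t_{H_F}$ has the same law as $R$, so $\la R\circ\phi^t-R\ra=0$ and $\norm{R\circ\phi^t-R}\le 2\sigma_R$. Because $R$ has degree $\ge 3$ and carries the coefficients $\alpha_k=O(N^{-1/2})$, its variance is $N$-uniformly small, $\sigma_R\lesssim \beta^{-3/2}$, and Chebyshev's inequality gives
\[
\bP\Big(\,|R\circ\phi^t_{H_F}-R|>\tfrac12\,\sigma_\Phi\,\beta^{-\varepsilon}\Big)\le \frac{16\,\sigma_R^2\,\beta^{2\varepsilon}}{\sigma_\Phi^2}\lesssim \beta^{2\varepsilon-1}\le C_0\,\beta^{-2\varepsilon},
\]
the last inequality holding precisely when $4\varepsilon<1$, which is the source of the restriction $\varepsilon<\tfrac14$. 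A union bound over the two contributions, after adjusting constants to reach the threshold $\sigma_\Phi\beta^{-\varepsilon}$, then yields \eqref{Hamornic_E}.

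The real work, and the main obstacle, lies in the $N$-uniform variance estimates underlying every step. One needs the lower bound $\sigma_\Phi\gtrsim K^{-1}\beta^{-1}$ (this is exactly where the admissibility condition $K^{-1}\le \sum_j|y_j|\le K$ enters, through $\sigma_\Phi^2\simeq \beta^{-2}\norm{\bg}^2$ at leading order and Hartley orthogonality), the upper bounds $\sigma_{J^{(2k)}}\lesssim \sqrt N\,\beta^{-1}$ and $\sigma_R\lesssim \beta^{-3/2}$, and the bound on $\norm{\{\Psi,H_F-H_T\}}$ with the sharp dependence on $\tb$. All of these must be established for the genuinely non-Gaussian FPUT Gibbs measure \eqref{eq:misura_vera}, uniformly in $N$, which requires controlling the correlations of the constrained, weakly dependent variables $(\bp,\br)$, presumably via the independence/transfer-operator structure of $\di\mu_F$. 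This probabilistic heart is what supports the clean algebraic reduction above.
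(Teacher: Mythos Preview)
Your proposal is correct and follows essentially the same route as the paper: express $\Phi$ (equivalently $G_2=2\sqrt{N}\,\Phi$) as an $N$-uniform linear combination of the quadratic parts $J_2^{(2k)}$ via Lemma~\ref{G.lin.comb}, interpolate through the full Toda integrals $J^{(2k)}$, control the degree-$\ge 3$ remainder by measure invariance and the bound $\sigma_{J_{\ge 3}^{(2k)}}\lesssim \sqrt{N}\beta^{-3/2}$ (Lemma~\ref{cor:stima_coda}), and control the drift of $J^{(2k)}$ by Proposition~\ref{lem:numeden}, with the lower bound $\sigma_{G_2}^2\gtrsim N\beta^{-2}$ from Proposition~\ref{prop:ICTP} closing the Chebyshev estimate and forcing $\varepsilon<1/4$. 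The only cosmetic difference is that the paper works throughout with variances and applies Chebyshev once at the end, whereas you phrase the combination via a union bound on probabilities (and also note the variance alternative); both packagings are equivalent.
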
}

Again when  $\tb = 1$  the time scale improves by a  factor  $\beta^{\frac{1}{2}}$.

Finally we consider the Toda dynamics generated by the Hamiltonian $H_T$ in \eqref{toda}.
In this case we endow $\cM$ in \eqref{media} 
 by the Gibbs measure of $H_T$  at temperature $\beta^{-1}$, namely we put
\begin{equation}\label{eq:misura_vera_toda}
\di \mu_T :=    \frac{1}{Z_T(\beta)} \  e^{-\beta {H_T(\bp,\br)}} \,  \ \delta\left(\sum_j r_j =0\right) \ \delta\left(\sum_j p_j =0\right)  \ \di \bp \,  \di \br, 
\end{equation}
where
  as usual $Z_T(\beta)$ is the partition function which normalize the measure, namely
  \begin{equation}
Z_T(\beta) := \int_{\R^N \times \R^N}    e^{-\beta{H_T(\bp,\br)}} \ \delta\left(\sum_j r_j =0\right) \ \delta\left(\sum_j p_j =0\right)  \ \di \bp \,  \di \br . 
\end{equation}

We prove that  the quantity \eqref{Phi}, computed along the Hamiltonian flow $\phi^t_{H_T}$ of the Toda chain,  is an adiabatic invariant  {\em for all times} and for a large set of initial data:

{\begin{theorem}
	\label{thm:main2}
	Fix $m\in \mathbb{N}$; let $\bg\in \R^N$ be an $m$-admissible vector and define $\Phi$ as in \eqref{Phi}.
	Then  there exist $N_0, \beta_0, C>0$   such that for any $N > N_0$, $\beta > \beta_0$,  any $\delta_1 >0$  one has
	\begin{equation}
	\label{eq:toda_finale}
	\bP\left( \abs{\Phi\circ\phi^t_{H_T} - \Phi} > \delta_1 {\sigma_{\Phi}} \right)\leq    \frac{C}{\delta_1^2 \, \beta }  \, ,
	\end{equation}
	for all times.
\end{theorem}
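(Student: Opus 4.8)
The plan is to exploit the algebraic relation \eqref{struc} to realize $\Phi$ as the \emph{exact} quadratic part of a genuine Toda constant of motion, thereby reducing the whole problem to controlling a higher order remainder, with no secular term to worry about. First I would show that every $m$-admissible $\Phi$ lies in the span of the quadratic parts $J^{(2k)}_2$, $k=1,\dots,\wt m +1$. Indeed, since $\bg$ is supported on $\{0,\dots,\wt m\}\cup\{N-\wt m,\dots,N-1\}$ with the symmetry $g_k=g_{N-k}$, a direct computation of \eqref{dht} shows that the odd (sine) part cancels and $\wh g_j=\tfrac{1}{\sqrt N}\big(y_0+2\sum_{k=1}^{\wt m} y_k\cos(2\pi jk/N)\big)$, which is a polynomial of degree $\wt m$ in $\cos(2\pi j/N)$, hence in $\cos^2(\pi j/N)$ (equivalently in $\omega_j^2$). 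On the other hand, by \eqref{struc} the quadratic parts $J^{(2k)}_2=\sum_j \wh c_j^{(k)}E_j$ have coefficients $\wh c^{(k)}_j$ that, as $k$ ranges over $1,\dots,\wt m+1$, span exactly the space of degree-$\wt m$ polynomials in $\cos^2(\pi j/N)$ (as illustrated by the examples \eqref{sincos}). The change of basis from these powers to the monomials is a fixed $(\wt m+1)\times(\wt m+1)$ matrix depending only on $m$, hence there are coefficients $\alpha_1,\dots,\alpha_{\wt m+1}$, determined by $\bg$ and bounded uniformly in $N$ thanks to the normalization $K^{-1}\le\sum_j|y_j|\le K$, such that $\Phi=\sum_k \alpha_k J^{(2k)}_2$ as an identity between quadratic forms.

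I would then set $\Psi:=\sum_k\alpha_k J^{(2k)}$ and $R:=\Psi-\Phi=\sum_k\alpha_k\,(J^{(2k)}-J^{(2k)}_2)$. By construction $R$ collects only terms of order $\ge 3$ in $(\bp,\br)$, while $\Psi$ is an exact integral of the Toda flow, so $\Psi\circ\phi^t_{H_T}=\Psi$ for all $t$ and consequently
\[
\Phi\circ\phi^t_{H_T}-\Phi=(\Psi-R)\circ\phi^t_{H_T}-(\Psi-R)=R-R\circ\phi^t_{H_T}.
\]
This is the crucial point yielding a bound valid for \emph{all} times: the drift of $\Phi$ is carried entirely by the remainder $R$, which cannot grow. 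Applying Markov's inequality and using that $\mu_T$ is invariant under $\phi^t_{H_T}$, so that $\meanval{(R\circ\phi^t_{H_T})^2}=\meanval{R^2}$, I obtain, uniformly in $t$,
\[
\bP\!\left(\abs{\Phi\circ\phi^t_{H_T}-\Phi}>\delta_1\sigma_\Phi\right)
\le\frac{\meanval{(R-R\circ\phi^t_{H_T})^2}}{\delta_1^2\sigma_\Phi^2}
\le\frac{4\,\meanval{R^2}}{\delta_1^2\sigma_\Phi^2}.
\]

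It then remains to prove the quantitative estimate $\meanval{R^2}\le C\,\sigma_\Phi^2/\beta$, with $C$ independent of $N$, which together with the previous display gives \eqref{eq:toda_finale}. Heuristically this ratio is dictated by the temperature scaling of $\mu_T$: at inverse temperature $\beta$ the fields $(\bp,\br)$ fluctuate on scale $\beta^{-1/2}$, so the quadratic observable $\Phi$ has $\sigma_\Phi^2=O(\beta^{-2})$, whereas $R$, being at least cubic, satisfies $\meanval{R^2}=O(\beta^{-3})$, and the quotient is $O(\beta^{-1})$. I expect the main obstacle to be precisely this analytic step: one must expand $R$ as an explicit polynomial in $(\bp,\br)$, bound its second moment, bound the variance $\sigma_\Phi^2$ from below (nondegeneracy of $\Phi$, which is exactly what the admissibility normalization $K^{-1}\le\sum_j|y_j|$ guarantees), and show that the ratio is $O(\beta^{-1})$ \emph{uniformly in} $N$, so that the bound survives the thermodynamic limit. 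This requires careful control of site-to-site correlations and of the constrained partition function on the surface $\sum_j r_j=\sum_j p_j=0$ appearing in \eqref{eq:misura_vera_toda}; these Gibbs-measure estimates, rather than the structural reduction above, are where the real work lies.
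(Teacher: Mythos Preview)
Your approach is essentially the paper's: express $\Phi$ as a linear combination of the quadratic parts $J_2^{(2k)}$ (the paper does this via Lemma~\ref{G.lin.comb} and the circulant structure rather than your Fourier argument, but the content is the same), interpolate through the full integrals $J^{(2k)}$ which are constant along $\phi^t_{H_T}$, and reduce everything to bounding the variance of the cubic-and-higher remainder $J^{(2k)}_{\ge 3}$ by $CN/\beta^3$ (Lemma~\ref{cor:stima_coda}) against the lower bound $\sigma_\Phi^2\ge CN/\beta^2$ (Proposition~\ref{prop:ICTP}).

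One small slip: your $R=\Psi-\Phi=\sum_k\alpha_k(J^{(2k)}-J_2^{(2k)})$ is \emph{not} purely of order $\ge 3$; it also contains the constants $\sum_k\alpha_k J_0^{(2k)}$, which are of size $O(N)$ and would make $\meanval{R^2}$ blow up. The fix is immediate: since $\la R-R\circ\phi^t_{H_T}\ra=0$ by invariance of $\mu_T$, your Markov step actually bounds the left side by $4\sigma_R^2/(\delta_1^2\sigma_\Phi^2)$ rather than $4\meanval{R^2}/(\delta_1^2\sigma_\Phi^2)$, and the constants drop out of $\sigma_R^2$; this is exactly how the paper proceeds.
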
}

{
\begin{remark}
It is easy to verify that the functions $\Phi$ in \eqref{Phi} are linear combinations of 
\begin{equation}
\sum_{j=0}^{N-1} \cos\left( \frac{2\ell j \pi}{N}\right) \, E_j , \qquad \ell =0, \ldots, \Big\lfloor \frac{m}{2}\Big\rfloor  
\end{equation}  
(choose  $g_\ell = g_{N-\ell}=1$, $g_j = 0$ otherwise). Then, using the multi-angle trigonometric formula
$$
\cos(2nx) = (-1)^{n}T_{2n}(\sin x) , \qquad \cos(2nx) = T_{2n}(\cos x) ,
$$
where the $T_n$'s are the Chebyshev polynomial of the first kind, it follows that we can control  \eqref{sincos}.\\
{Actually these functions fall under the class considered in \cite{BCM14}.}
\end{remark}
}
Let us comment about the significance of Theorem \ref{thm:main} and Theorem \ref{thm:main2}.
   The study of the dynamics of the normal modes of FPUT goes back to the pioneering numerical simulations of Fermi, Pasta, Ulam and Tsingou \cite{Fermi}. 
   They observed that, corresponding to initial data with only the first normal mode excited, namely initial data with $E_1\neq 0$ and $E_j =0$ $\, \forall j \neq 1$, the dynamics of the normal modes develops a recurrent behavior, whereas their  time averages $\frac{1}{t}\int_0^t E_j\circ \phi^\tau_{H_F} \di \tau$  quickly relaxed to a sequence exponentially localized in $j$.
    This is what is known under the name of FPUT packet of modes. 
    
    Subsequent numerical simulations have investigated the persistence of the phenomenon for large $N$ and in different regimes of specific energies \cite{Ruffo1,Ruffo2, BERCHIALLA2, benettin-ponno11,benettin-ponno13,Onorato} (see also  \cite{BCMM15} for a survey of results about the FPUT dynamics).

Analytical results controlling packets of normal modes along the FPUT system are proven in \cite{BP06,BM16}. All these results deal with specific energies going to zero as the number of particles go to infinity, thus they do not hold in the thermodynamic limit. 
Our result controls linear combination of normal modes and holds in the thermodynamic limit.

\subsection{Ideas of the proof}
The starting point  of our analysis   is to estimate  the  probability that the time evolution  of an observable $\Phi(t)$, computed along {the Hamiltonian flow  of  $H$},   slightly deviates from its initial value.  In our application $\Phi$ is either the Toda integral of motion or a special linear combination of the harmonic energies and $H$ is either the  FPUT or Toda Hamiltonian. Quantitatively,   Chebyshev inequality gives
\begin{equation}
\label{cheb}
\bP\Big(\abs{\Phi(t) - \Phi(0)} > \lambda \sigma_{\Phi(0)} \Big)
\leq 
\frac{1}{\lambda^2} \frac{\sigma^2_{\Phi(t) - \Phi(0)}}{\sigma^2_{\Phi(0)} } \  , \qquad \forall \lambda > 0 .
\end{equation}
So our first task  is  to give  an   upper bound on the variance 
$\sigma_{\Phi(t) - \Phi(0)}$  and  a  lower bound on the variance 
$\sigma_{\Phi(0)}$.
Regarding the former  bound   we   exploit  the Carati-Maiocchi inequality \cite{CaratiMaiocchi}
\begin{equation}
\label{eq:car_maio}
\sigma_{\Phi(t) - \Phi(0)}^2 \leq \la \{ \Phi, H \}^2\ra  t^2 , \qquad \forall t \in \R ,
\end{equation}
where  $\{\Phi, H\}$, denotes the canonical Poisson bracket
{
\begin{equation}
\label{}
\{ \Phi,  H\} := (\partial_\bq \Phi)^\intercal \partial_\bp H - (\partial_\bp \Phi)^\intercal  \partial_\bq H 
\equiv
\sum_{i=0}^{N-1} \partial_{q_i} \Phi \, \partial_{p_i} H - \partial_{p_i} \Phi \, \partial_{q_i} H.
\end{equation}}
Next  we fix  $m \in \N$, consider  the $m$-th Toda  integral $J^{(m)}$, and  prove that   the quotient 
\begin{equation}
\label{quo}
\frac{\la \{ J^{(m)}, H_F\}^2\ra}{\sigma^2_{J^{(m)}} }
\end{equation}
scales appropriately in $\beta$ (as  $\beta \to \infty$) and it 
is bounded uniformly in $N$ (provided $N$ is large enough). 
It is quite  delicate to  prove that the quotient in \eqref{quo} is bounded uniformly in $N$ and for the purpose  we exploit the  rich structure of the Toda integral of motions.

This manuscript   is organized as follows.  In section 3   we study the structure of the Toda integrals.  In particular we prove that for any $m \in \N$ fixed, and $N $ sufficiently large, the $m$-th Toda  integral $J^{(m)}$  can be written as a sum  $\frac{1}{m}\sum_{j=1}^N h_j^{(m)}$ where each term depends only on at most $m$ consecutive variables, moreover  $h_j^{(m)}$ and $h_k^{(m)}$ have disjoint supports if the  distance between $j$ and $k$ is larger than $m$. Then we make the crucial observation that the quadratic part of the Toda integrals $J^{(m)}$ are quadratic forms in $\bp$ and $\bq$ generated by  symmetric circulant matrices. In section 3 we approximate the Gibbs measure with the measure  were all the variable are independent random variables.
and we calculate the error of our approximation.
In section 4 we obtain a  bound on the variance of $J^{(m)}(t)-J^{(m)}(0)$  with respect to the FPUT flow and a bound of linear combination of harmonic energies with respect to the FPUT flow and the Toda flow.
Finally in section 5 we prove our main results, namely 
Theorem~\ref{thm:goal}, Theorem~\ref{thm:main} and Theorem~\ref{thm:main2}.
We describe in the Appendices the more technical results.

\section{Structure of the Toda integrals of motion}
\label{sec:sH}

In this section we study the algebraic and the analytic properties of the Toda integrals defined in \eqref{integrals}. First we write them explicitly: 
	\begin{theorem}
	\label{LEM:STRUCT}
		For any {$1 \leq m \leq N-1$},  one has 
\begin{equation}
		\label{Jm.sum}
		J^{(m)}=  \frac{1}{m} \sum_{j=1}^N h_{j}^{(m)} \, ,
		\end{equation}		
		where $ h_{j}^{(m)}:= [L^m]_{jj}$ is given explicitly by
		\begin{equation}\label{eq:general_super_motzkin}
 h_{j}^{(m)} (\bp,\br)= \sum_{(\bn,\bk)\in \cA^{(m)}} (-1)^{|\bk|}\, \rho^{(m)}(\bn,\bk) \prod_{i = -\wt m }^{\wt m-1} e^{-{n_i} r_{j+i}} \prod_{i = -\wt m+1 }^{\wt m -1} p_{j+i}^{k_i} \, ,
		\end{equation}
where  it is understood $r_j \equiv r_{j \mod N}, \, p_j \equiv p_{j \mod N}$ and  $\cA^{(m)}$ is the set  

		\begin{equation}
		\label{cAm}
			\begin{split}
			\cA^{(m)} := \Big\{(\bn,\bk) \in \N^{\mathbb{Z}}_0 \times \N^{\mathbb{Z}}_0 \ \colon \ \ \ 
			& \sum_{i= -\wt m }^{\wt m-1} \left(2n_i + k_i\right) = m  , \\
& \forall i \geq 0, \ \ \ n_i = 0 \Rightarrow n_{i+1} = k_{i+1} = 0,  \,  
\\
& \forall i < 0, \ \ \ n_{i+1} = 0 \Rightarrow n_{i}= k_i = 0  
			\Big\}.
			\end{split}
		\end{equation}
		The quantity  $\wt m := \floor{m/2}$, $\N_0=\N\cup\{0\}$
		and $\rho^{(m)}(\bn, \bm) \in \N $ is 
		given by 
		\begin{align}
		\label{rhom}
								\rho^{(m)}(\bn,\bk) := &\binom{n_{-1} + n_0 + k_0}{k_0}\binom{n_{-1} + n_0}{n_0}
	\prod_{i=-\wt m \atop i \neq -1}^{ \wt m -1}\binom{n_i + n_{i+1} +k_{i+1} -1}{k_{i+1}}\binom{n_i + n_{i+1} -1}{n_{i+1}} \, .
\end{align}	
	\end{theorem}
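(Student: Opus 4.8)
The plan is to prove \eqref{Jm.sum}--\eqref{eq:general_super_motzkin} by identifying the diagonal entries of powers of the Jacobi matrix with weighted closed walks, and then carrying out the enumeration of those walks. Since $J^{(m)} = \frac1m \tr{L^m} = \frac1m \sum_{j} [L^m]_{jj}$, formula \eqref{Jm.sum} is immediate and it suffices to establish \eqref{eq:general_super_motzkin} for $h_j^{(m)} = [L^m]_{jj}$. Expanding the matrix product gives
\[
[L^m]_{jj} = \sum_{v_0 = j,\, v_m = j} \ \prod_{s=0}^{m-1} L_{v_s v_{s+1}} ,
\]
a sum over closed walks of length $m$ based at $j$ in the weighted graph underlying $L$, whose only nonzero entries are the diagonal $L_{ii}=b_i$ and the cyclic nearest-neighbour entries $L_{i,i\pm 1}=a_{\min(i,i\pm1)}$. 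Each step of a walk is therefore either a self-loop at a vertex (weight $b_i$) or the crossing of an edge (weight $a_i$).

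Next I would exploit the hypothesis $m \le N-1 < N$: a closed walk of length $m$ cannot wind around the cycle, so its trace is confined to the window $\{j-\wt m, \ldots, j+\wt m\}$ and effectively lives on a \emph{path}. On a path every closed walk crosses each edge an even number of times, since removing an edge disconnects the path and the walk must return to its base. Writing $2n_i$ for the number of crossings of the edge between $j+i$ and $j+i+1$, and $k_i$ for the number of self-loops at $j+i$, the Flaschka substitution $a_i = e^{-r_i/2}$, $b_i = -p_i$ turns the weight of such a walk into $(-1)^{|\bk|}\prod_i e^{-n_i r_{j+i}}\prod_i p_{j+i}^{k_i}$, and counting steps gives the length constraint $\sum_i (2n_i+k_i)=m$. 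Connectedness of the walk's trace forces the set of used edges to be an interval abutting the base vertex, which is exactly what the implications in \eqref{cAm} encode; this pins down the admissible set $\cA^{(m)}$ and the monomial appearing in \eqref{eq:general_super_motzkin}.

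It then remains to count, for each fixed $(\bn,\bk)\in\cA^{(m)}$, the number $\rho^{(m)}(\bn,\bk)$ of distinct closed walks realizing the prescribed crossing and loop multiplicities, and this combinatorial enumeration is the crux of the argument. I would peel the walk layer by layer outward from the base vertex: the $n_i$ round trips across edge $i$ organize the visits to $j+i+1$, and at each such vertex one must both interleave the $n_{i+1}$ onward sub-excursions among the available visits and insert the $k_{i+1}$ self-loops, each choice being a composition count that produces the binomial factors $\binom{n_i+n_{i+1}-1}{n_{i+1}}$ and $\binom{n_i+n_{i+1}+k_{i+1}-1}{k_{i+1}}$ of \eqref{rhom}. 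The base vertex must be treated separately, since excursions leave it on \emph{both} sides, which yields the distinguished factors $\binom{n_{-1}+n_0}{n_0}$ and $\binom{n_{-1}+n_0+k_0}{k_0}$ and also explains the left/right asymmetry of the support conditions. The main obstacle is getting this bookkeeping exactly right at the base vertex and at the turning points, where the $-1$ shifts in the binomials reflect that the outermost return cannot initiate a further sub-excursion; the cleanest way to make it rigorous is an induction on $m$ (equivalently on the depth of the excursion tree), or a bijection of the walks with suitably decorated Motzkin-type paths, from which the product formula \eqref{rhom} follows.
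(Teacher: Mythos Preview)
Your approach is correct and essentially identical to the paper's: the paper recasts closed walks on the cyclic Jacobi graph as ``super Motzkin paths'' in $\N_0\times\Z$ (the height being the displacement $v_s-j$), which is precisely your walk picture, and then counts paths with prescribed step multiplicities $(\bn,\bk)$ by the same slot-filling argument you outline---placing the $k_{i+1}$ horizontal steps among the $n_i+n_{i+1}$ arrival positions at height $i+1$ (a multiset choice), distributing the $n_{i+1}$ onward excursions likewise, and handling height $0$ separately to get the factors without the $-1$. The only difference is packaging: the paper invokes the Motzkin-path language and cites \cite{VandeJeugt}, whereas you phrase everything in terms of walks on the underlying graph; the enumeration and the treatment of the base vertex are the same.
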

	We give the proof of this theorem in Appendix \ref{app:motzkin_path}.
\begin{remark}
The structure of $J^{(N)}$ is slightly different, but we will not use it here.
	\end{remark}
	
We   now describe  some properties of the Toda integrals which we will use several times.
The   Hamiltonian density $ h_{j}^{(m)} (\bp,\br)$ depends on the set $\cA^{(m)}$ and the  coefficient $\rho^{(m)}(\bn, \bk)$ which are independent from  the index $j$. This implies that $h_j^{(m)}$ is obtained by $h_1^{(m)}$ just by shifting $1\to j$; {in \cite{giorgilli12,giorgilli15} this property was formalized with the notion of cyclic functions, we will lately recall it for completeness.}
	 
A second immediate property, as one sees inspecting  the  formulas \eqref{cAm} and \eqref{rhom}, is that 
there exists $C^{(m)} >0$ (depending only on $m$)  such that 
\begin{equation}
\label{card.Am}
|\cA^{(m)}|\leq  C^{(m)} ,  \quad \rho^{(m)}(\bn, \bk) \leq C^{(m)} , 
\end{equation}
namely the cardinality of the set $\cA^{(m)}$ and the values of the coefficients $\rho^{(m)}(\bn, \bk)$ are {\em independent of $N$}.

	The last  elementary property, which  follows from the condition $2|\bn| + |\bk| = m$ in \eqref{cAm}, is that 
\begin{equation}
\label{rem:hmj.p}
\begin{aligned}
&m \text{ even} \quad \Longrightarrow \quad h^{(m)}_j \text{ contains only even polynomials in } \bp, \\
&m \text{ odd} \quad \Longrightarrow \quad h^{(m)}_j \text{ contains only odd polynomials in } \bp.
\end{aligned}
\end{equation}

Now we describe three other important properties of the Toda integrals, which are less trivial and require some preparation. Such properties are 
\begin{itemize}
\item[$(i)$]   {\em cyclicity};
\item[$(ii)$] {\em uniformly bounded support};
\item[ $(iii)$ ] the  quadratic parts of the Toda integrals are represented by {\em circulant matrices}. 
\end{itemize}
We first define each of these  properties rigorously, and then we show that the Toda integrals enjoy them.

\paragraph{Cyclicity.}
Cyclic  functions are   characterized by being  invariant under left and right cyclic shift. 
 For any $\ell \in \Z$,  and $\bx=(x_1, x_2, \ldots, x_N)\in \mathbb{R}^N$  we define the {\em cyclic shift of order $\ell$} as the map  
\begin{equation}
\label{shift}
S_\ell \colon \R^N \to \R^N, \qquad (S_\ell x)_j := x_{(j+\ell)\mod N} . 
\end{equation}
For example $S_1$ and $S_{-1}$ are  the left respectively right shifts:
$$
S_1(x_1, x_2, \ldots, x_N) := (x_2, \ldots, x_N,  x_{1}), \qquad
S_{-1}(x_1, x_2, \ldots, x_N) := (x_N, x_1, \ldots, x_{N-1}).
$$
It is immediate to check that for  any $\ell, \ell' \in \Z$, cyclic shifts fulfills:
\begin{equation}
\label{prop:shift}
S_{\ell} \circ S_{\ell'} = S_{\ell + \ell'}, \qquad S_{\ell}^{-1} = S_{- \ell} , \qquad S_0 = \uno , \qquad
S_{\ell + N } = S_{\ell} . 
\end{equation}
Consider now a  a function $H\colon \R^N \times \R^N \to \C$;  we shall denote  $S_\ell H\colon \R^N \times \R^N \to \C$  the operator\begin{equation}
\label{cyc.func}
(S_\ell H)(\bp, \br) := H(S_\ell \bp, S_\ell \br) , \qquad \forall  (\bp, \br) \in \R^N \times \R^N.
\end{equation}
Clearly $S_\ell$ is a linear operator.
We can now define cyclic functions:
\begin{definition}[Cyclic functions]
\label{def:cyclic}
A function $H\colon \R^N \times \R^N \to \C$ is called {\em cyclic} if $S_1 H = H$.
\end{definition}
It is clear from the definition that a cyclic function fulfills $S_\ell H = H$ $\, \forall \ell \in \Z$.\\
It is easy to construct cyclic functions as follows: given a function $h \colon \R^N \times \R^N \to \C$  we define the  new function $H$ by
\begin{equation}
\label{seed}
H(\bp, \br)  := \sum_{\ell = 0}^{N-1} (S_{\ell} h)(\bp, \br) .
\end{equation}
$H$ is clearly  cyclic and  we say  that  $H$ is {\em generated } by $h$.

\paragraph{Support.}  Given a differentiable function $F \colon \R^N \times \R^N \to \C$, we define its {\em support} as the set 
\begin{equation}
\label{def:supp}
{\rm supp }\, F := \left\{ \ell \in \{ 0, \ldots, N-1\} \colon \ \ \   \frac{\partial F}{\partial p_\ell}\notequiv 0  \ \ 
\mbox{ or }
\ \   \frac{\partial F}{\partial r_\ell}\notequiv 0 \right\}
\end{equation}
and its  {\em diameter}  as 
\begin{equation}
\label{diameter}
{\rm diam} \left({\rm supp }\, F\right) := \sup_{i, j \in {\rm supp}\, F} \td(i,j) + 1 , 
\end{equation}
where $\td$ is the  {\em periodic distance}  
\begin{equation}
\label{p.dist}
\td(i,j) := \min \left( |i-j|, \ N- |i-j| \right) . 
\end{equation}
Note that $0\leq \td(i,j) \leq \floor{N/2}$.

We often use the following property:   if $f$ is a function with diameter $K \in \N$, and $K \ll N$, then
 \begin{equation}
\label{disj.supp}
\td(i,j) > K \quad \Longrightarrow \quad 
{\rm supp }\, S_j f \cap {\rm supp }\, S_i f = \emptyset ,
\end{equation}
where $S_j$ is the  shift  operator  \eqref{shift}.
With the above notation and definition we arrive to the following elementary result.
\begin{lemma}
\label{rem:diam.hj}
 Consider  the Toda integral $J^{(m)}= \frac{1}{m} \sum_{j=1}^N h_{j}^{(m)} \,$, $1 \leq m \leq N$ in \eqref{Jm.sum}. Then  $J^{(m)}$ is a cyclic function generated by  $\frac{1}{m} h^{(m)}_1$,
namely 
\begin{equation}
\label{J_cyclic}
J^{(m)}(\bp,\br)=\frac{1}{m}\sum_{j=1}^N S_{j-1} h_1^{(m)}(\bp,\br).
\end{equation}
Further,  each term $h_j^{(m)}$ has diameter at most $m$.
In particular $h_j^{(m)}$ and $h_k^{(m)}$ have disjoint supports provided $\td(j,k) > m$.
\end{lemma}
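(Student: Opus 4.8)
The plan is to prove Lemma~\ref{rem:diam.hj} in three steps, using the explicit formula \eqref{eq:general_super_motzkin} for the density $h_j^{(m)}$ as the starting point. I first establish cyclicity, then identify the generating density, and finally read off the diameter bound directly from which variables appear in the product.

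\textbf{Cyclicity.} The first observation is that, by Theorem~\ref{LEM:STRUCT}, the coefficients $\rho^{(m)}(\bn,\bk)$ and the index set $\cA^{(m)}$ appearing in \eqref{eq:general_super_motzkin} do not depend on $j$; only the indices of the variables $r_{j+i}$ and $p_{j+i}$ are shifted by $j$. Comparing \eqref{eq:general_super_motzkin} for index $j$ with the same formula for index $1$, and recalling the convention $r_j \equiv r_{j\bmod N}$, $p_j \equiv p_{j\bmod N}$, I would verify directly that
\[
h_j^{(m)}(\bp,\br) = h_1^{(m)}(S_{j-1}\bp, S_{j-1}\br) = (S_{j-1} h_1^{(m)})(\bp,\br),
\]
where the shift $S_\ell$ acting on functions is the one defined in \eqref{cyc.func}. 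Substituting this into \eqref{Jm.sum} immediately yields the representation \eqref{J_cyclic}, exhibiting $J^{(m)}$ as the function generated by $\frac1m h_1^{(m)}$ in the sense of \eqref{seed}. That $J^{(m)}$ is then cyclic (i.e.\ $S_1 J^{(m)} = J^{(m)}$) follows from the group property \eqref{prop:shift} of the shifts together with $S_N = S_0 = \uno$: applying $S_1$ merely reindexes the sum $\sum_{\ell=0}^{N-1} S_\ell$ cyclically.

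\textbf{Diameter bound.} Next I read off the support of $h_j^{(m)}$ from \eqref{eq:general_super_motzkin}. The monomials there involve $r_{j+i}$ for $-\wt m \leq i \leq \wt m - 1$ and $p_{j+i}$ for $-\wt m + 1 \leq i \leq \wt m - 1$, with $\wt m = \floor{m/2}$. Hence every variable entering $h_j^{(m)}$ has index in the window $\{j-\wt m, \ldots, j+\wt m - 1\}$ (modulo $N$), so $\mathrm{supp}\, h_j^{(m)}$ is contained in a set of at most $2\wt m = 2\floor{m/2} \leq m$ consecutive indices. By the definition of diameter \eqref{diameter} via the periodic distance \eqref{p.dist}, two indices in such a window of $2\wt m$ consecutive sites are at periodic distance at most $2\wt m - 1 \leq m - 1$ (for $N$ large enough that the window does not wrap around and the min in \eqref{p.dist} selects $|i-j|$), whence $\mathrm{diam}(\mathrm{supp}\, h_j^{(m)}) \leq m$. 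Here I use $m \leq N$, or more safely $m \ll N$, to guarantee the window is a genuine interval and the periodic distance coincides with the ordinary one.

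\textbf{Disjointness.} The final claim follows from the general property \eqref{disj.supp}: since each $h_j^{(m)} = S_{j-1} h_1^{(m)}$ is a shift of a single density $h_1^{(m)}$ of diameter $K \leq m$, the implication \eqref{disj.supp} gives $\mathrm{supp}\, h_j^{(m)} \cap \mathrm{supp}\, h_k^{(m)} = \emptyset$ whenever $\td(j,k) > m$. The only point requiring a little care is the passage from ``window of $2\wt m$ consecutive indices'' to diameter ``at most $m$'': one must check the parity bookkeeping ($2\floor{m/2}$ equals $m$ for even $m$ and $m-1$ for odd $m$) and confirm that the $+1$ in the definition \eqref{diameter} is absorbed correctly, but this is elementary. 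I expect no genuine obstacle here; the lemma is essentially a direct reading of the explicit structure provided by Theorem~\ref{LEM:STRUCT}, and the main subtlety is merely keeping the modular/periodic indexing consistent and ensuring $N$ is large enough that supports behave like intervals rather than wrapping around the ring.
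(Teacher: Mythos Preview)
Your proof is correct and follows essentially the same approach as the paper, which treats the lemma as an elementary consequence of the explicit formula \eqref{eq:general_super_motzkin}: the coefficients $\rho^{(m)}(\bn,\bk)$ and the index set $\cA^{(m)}$ do not depend on $j$, giving cyclicity, and the support is contained in a window of $2\wt m \leq m$ consecutive indices, giving the diameter bound. Your write-up is in fact more detailed than the paper's, which simply calls the result elementary and later (in Appendix~\ref{app:struc}) notes in passing that ``$h_1^{(m)}$ has diameter $2\wt m \leq m$.''
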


\paragraph{Circulant symmetric matrices.} 
We begin recalling the definition of circulant matrices (see e.g.  \cite[Chap. 3]{gray2006toeplitz}).
\begin{definition}[Circulant matrix]
\label{circulant}
An  $N \times N$ matrix $A$ is said to be {\em circulant} if 
there exists a vector $\ba=(a_j)_{j=0}^{N-1} \in \R^N$ such that 
$$
A_{j,k} = a_{(j-k) \mod N} .
$$
We will say that $A$ is {\em represented by the vector $ \ba$}.
\end{definition}
In particular circulant matrices have all the form
\begin{equation*}
A = {\begin{bmatrix}
 a_{0}&  a_{{N-1}}&\dots & a_{{2}}& a_{{1}}
\\
 a_{{1}}& a_{0}& a_{{N-1}}&& a_{{2}}
\\
\vdots & a_{{1}}& a_{0}&\ddots &\vdots 
\\
 a_{{N-2}}&&\ddots &\ddots & a_{{N-1}}
\\
 a_{{N-1}}& a_{{N-2}}&\dots & a_{{1}}& a_{0}
\\
\end{bmatrix}}
\end{equation*}
where each row is the right shift of the row above. \\
Moreover,    $A$ is  circulant symmetric if and only if its representing vector $\ba$ is even, i.e. one has  
\begin{equation}
\label{parity}
a_{k} =  a_{N-k}\ , \quad \forall k .
\end{equation}
One of the most remarkable property of circulant matrices is that they are all diagonalized by the discrete Fourier transform (see e.g.  \cite[Chap. 3]{gray2006toeplitz}).
We show now that circulant symmetric matrices are diagonalized by the Hartley transform:
\begin{lemma}
Let $A$ be a circulant symmetric matrix represented by the vector $\ba \in \R^N$. Then 
\begin{equation}
\label{dht.circulant}
\cH A \cH^{-1} = \sqrt{N} \, {\rm diag } \{\wh a_j \colon \ 0 \leq j \leq N-1 \},
\end{equation}
where 
$\widehat{\ba}= \cH \ba$.
	\end{lemma}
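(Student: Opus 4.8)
The plan is to exhibit the columns of $\cH$ directly as eigenvectors of $A$ and then read off the eigenvalues. Since $\cH^2 = \uno$ and $\cH^\intercal = \cH$ by \eqref{dht.2}, the Hartley transform is a real orthogonal involution, so $\cH^{-1} = \cH$; it therefore suffices to produce, for each $j$, a scalar $\lambda_j$ with $A\,\psi_j = \lambda_j\,\psi_j$, where $\psi_j$ denotes the $j$-th column of $\cH$, i.e. $(\psi_j)_k = \frac{1}{\sqrt N}\big(\cos(2\pi jk/N) + \sin(2\pi jk/N)\big)$. Indeed this gives $A\cH = \cH\,{\rm diag}\{\lambda_j\}$, and multiplying on the left by $\cH^{-1}=\cH$ yields \eqref{dht.circulant} once we identify $\lambda_j = \sqrt N\,\wh a_j$.

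First I would recall the classical diagonalization of circulant matrices by the Fourier basis. For $0 \le j \le N-1$ let $u_j \in \C^N$ be the Fourier mode $(u_j)_k := e^{\im 2\pi jk/N}$. Using $A_{l,k} = a_{(l-k)\mod N}$ and the substitution $s = (l-k)\mod N$, a direct computation gives
\begin{equation*}
(A u_j)_l = \sum_{k=0}^{N-1} a_{(l-k)\mod N}\, e^{\im 2\pi jk/N} = \Big(\sum_{s=0}^{N-1} a_s\, e^{-\im 2\pi js/N}\Big)\, e^{\im 2\pi jl/N} = \lambda_j\, (u_j)_l,
\end{equation*}
so $u_j$ is an eigenvector with eigenvalue $\lambda_j := \sum_{s} a_s\, e^{-\im 2\pi js/N} = \sqrt N\,(\cF\ba)_j$.

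The crucial step, where the symmetry hypothesis enters, is to pass from the complex Fourier modes to the real Hartley columns. Using the evenness $a_s = a_{N-s}$ of \eqref{parity}, the sine contribution to $\lambda_j$ cancels under $s \mapsto N-s$, so $\lambda_j = \sum_s a_s\cos(2\pi js/N)$ is real and satisfies $\lambda_j = \lambda_{N-j}$; for the same reason the sine part of the transform drops out and $(\cF\ba)_j = (\cH\ba)_j = \wh a_j$, whence $\lambda_j = \sqrt N\,\wh a_j$. Now I would observe that $\psi_j$ is a linear combination of $u_j$ and $u_{N-j}=u_{-j}$, since $\cos(2\pi jk/N) = \tfrac12\big((u_j)_k + (u_{-j})_k\big)$ and $\sin(2\pi jk/N) = \tfrac{1}{2\im}\big((u_j)_k - (u_{-j})_k\big)$. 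Because $u_j$ and $u_{N-j}$ are eigenvectors of $A$ sharing the common eigenvalue $\lambda_j = \lambda_{N-j}$, the plane they span lies in a single eigenspace, so every combination of them — in particular $\psi_j$ — is again an eigenvector with eigenvalue $\lambda_j$. The degenerate indices $j=0$ and, for even $N$, $j=N/2$, where $u_j = u_{N-j}$, are handled directly: there $\psi_j$ is proportional to $u_j$ itself.

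Assembling the columns gives $A\,\psi_j = \sqrt N\,\wh a_j\,\psi_j$ for all $j$, hence $A\cH = \cH\,{\rm diag}\{\sqrt N\,\wh a_j\}$ and therefore \eqref{dht.circulant}. The only genuine subtlety is the middle step: the Hartley ``cas'' vectors do \emph{not} diagonalize a general circulant matrix, and it is precisely the pairing of the $+j$ and $-j$ Fourier modes into one eigenvalue — available only because $\ba$ is even — that forces the real vector $\psi_j$ to be an eigenvector. I expect this to be the main point to get right; the remaining computations are routine.
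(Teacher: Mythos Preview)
Your proof is correct, but it takes a different route from the paper's. The paper argues via the convolution identity: since a circulant matrix acts as $A\bx = \ba \star \bx$, one applies the Hartley transform of a convolution,
\[
[\cH(\ba \star \bx)]_k = \tfrac{\sqrt N}{2}\big((\wh a_k + \wh a_{N-k})\wh x_k + (\wh a_k - \wh a_{N-k})\wh x_{N-k}\big),
\]
and then uses that $\cH$ preserves evenness, so $\wh a_k = \wh a_{N-k}$ and the off-diagonal term vanishes. Your argument instead goes through the classical Fourier diagonalization of circulant matrices and then observes that each Hartley column $\psi_j$ lies in the span of the paired Fourier modes $u_j, u_{N-j}$, which by symmetry share the eigenvalue $\lambda_j = \sqrt N\,\wh a_j$. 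Your approach is more self-contained in that it does not invoke the Hartley convolution formula as a black box, and it makes transparent exactly where symmetry is needed (to collapse the $\pm j$ eigenvalues); the paper's approach is shorter if one takes that formula for granted. Both isolate the same essential point: without the evenness of $\ba$ the Hartley basis mixes the $k$ and $N-k$ components and fails to diagonalize $A$.
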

\begin{proof}
First remark that 
a circulant matrix acts on a vector $\bx \in \R^N$ as a periodic discrete convolution, 
\begin{equation}
\label{circ.conv}
A \bx = \ba \star \bx , \qquad (\ba \star \bx)_j := \sum_{k = 0}^{N-1} a_{j-k}\,  x_{k} , \qquad 0 \leq j  \leq N-1 ,
\end{equation}
where it is understood $a_\ell \equiv  a_{\ell \mod N}$.
As the Hartley transform of a discrete convolution is given by
$$
[\cH( \ba \star \bx)]_k = \frac{\sqrt{N}}{2} \Big( (\wh  a_k  + \wh a_{N-k})\wh x_k + (\wh  a_k  - \wh a_{N-k})\wh x_{N-k})\Big), 
$$
we obtain \eqref{dht.circulant}, {using that  the Hartley transform maps even vectors (see \eqref{parity}) in even vectors.}
\end{proof}

Our interest in circulant matrices comes from the following fact:
{\em quadratic cyclic functions are represented by circulant matrices}. More precisely consider a quadratic function of the form 
\begin{equation}
\label{cyc.quad}
Q(\bp, \br) = \frac{1}{2}\bp^\intercal A \bp + \frac{1}{2} \br^\intercal B \br + \bp^\intercal C \br,
\end{equation}
where  $A, B, C$ are $N \times N$ matrices. Then one has
\begin{equation}
\label{cyc.quad2}
Q \text{ is cyclic } \quad \Longleftrightarrow \quad A, B, C \text{ are circulant } .
\end{equation}
This result, which is well known (see e.g. \cite{gray2006toeplitz}), follows from the fact that $Q$ cyclic is equivalent to $A, B, C$ commuting with the left cyclic shift $S_1$, and that the set of matrices which commute with $S_1$ coincides with the set of circulant matrices.

We conclude this section collecting some properties of Toda integrals.
Denote  by $J_2^{(m)}$ the Taylor polynomial of order 2 of $J^{(m)}$ at zero; being a quadratic, symmetric, cyclic function, it is represented by circulant symmetric matrices.
We have the following lemma.
\begin{lemma}
\label{JM.STRUC}
Let us consider  the Toda integral 
\begin{equation*}
J^{(m)}(\bp,\br)=\frac{1}{m}\sum_{j=1}^N S_{j-1} h_1^{(m)}(\bp,\br).
\end{equation*}
Then  $h_1^{(m)}(\bp,\bq)$ has the following Taylor expansion at $\bp = \br = 0$:
\begin{equation}
\label{hj2}
h_1^{(m)}(\bp,\br) = \vf_0^{(m)}+ \vf_1^{(m)}(\bp,\br) + \vf_2^{(m)}(\bp,\br) +  \vf_{\geq 3}^{(m)}(\bp,\br)
\end{equation}
where   each  $\vf_k^{(m)}(\bp,\br)$  is a homogeneous polynomial of degree $k=0,1,2$ in  $\bp$  and $\br$ of diameter $m$ and coefficients independent from $N$. The reminder $ \vf_{\geq 3}^{(m)}(\bp,\br)$  takes the form
\begin{equation}
\label{reminder1}
\vf_{\geq 3}^{(m)}(\bp,\br)  := 
\sum_{(\bk, \bn) \in \cA^{(m)} \atop |k| \geq 3 } \, (-1)^{|\bk| } \rho^{(m)}(\bn, \bk) \, \bp^{\bk} \left( 1 - \bn ^\intercal \br + \frac12 (\bn^\intercal \br)^2 + \frac{(\bn ^\intercal \br)^3}{2} \int_0^1 e^{-s\bn^\intercal r} \, (1-s)^2 \, \di s \right) \,,
\end{equation}
with  $ \cA^{(m)}$ and  $ \rho^{(m)}$ defined in \eqref{cAm} and \eqref{rhom} respectively.
Moreover the   Taylor expansion   of $J^{(m)}(\bp,\br)$ at $\bp = \br = 0$ takes the form
\begin{equation}
\label{Jm.exp}
J^{(m)}(\bp, \br) = J^{(m)}_0 + J^{(m)}_2(\bp,\br)+  J^{(m)}_{\geq 3}(\bp,\br),
\end{equation}
where 
\begin{itemize}
\item[-]  {
$ J^{(m)}_0=\begin{cases}
c \in \R,  &  m \mbox{ even}  \\
0\,, &  m \mbox{ odd .}
\end{cases}$
 }
\item[-] $J^{(m)}_2(\bp,\br)$ is a  cyclic function   of the form 
\begin{equation}
\label{Jm2.struct}
J^{(m)}_2(\bp,\br) = 
 \begin{cases}
  \bp^\intercal  A^{(m)} \bp + \br^\intercal  A^{(m)} \br , &  m \mbox{ even}  \\
 \bp^\intercal  B^{(m)} \br ,  &  m \mbox{ odd}  
 \end{cases}
\end{equation}
with $A^{(m)}, B^{(m)}$  circulant, symmetric  $N \times N$  matrices; their  representing vectors  $ \ba^{(m)}$, $\bb^{(m)}$  are $m$-admissible (according to Definition \ref{def:ad})  and  
\begin{equation}
\label{ab}
a_k^{(m)} = a_{N-k}^{(m)} > 0 , \qquad
b_k^{(m)} = b_{N-k}^{(m)} > 0 , \qquad \forall 0 \leq k \leq  \wt m : =\Big\lfloor\frac{m}{2}\Big\rfloor . 
\end{equation}
\item[-]    {  The reminder  $J^{(m)}_{\geq 3}$ 
   is a cyclic function generated by $\frac{\varphi_{\geq 3}^{(m)}}{m}$}.
\end{itemize} 
\end{lemma}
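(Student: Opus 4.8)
The plan is to read the whole statement off the closed formula \eqref{eq:general_super_motzkin} for $h_1^{(m)}$, in which the only non-polynomial dependence sits in the exponentials $\prod_i e^{-n_i r_{1+i}} = e^{-\bn^\intercal \br}$, while the factor $\bp^{\bk}$ is already a monomial of degree $|\bk|$. First I would Taylor expand each exponential at $\br=0$ using the second-order integral form of the remainder,
\[
e^{-x} = 1 - x + \tfrac{x^2}{2} - \tfrac{x^3}{2}\int_0^1 e^{-sx}(1-s)^2\,\di s , \qquad x=\bn^\intercal\br ,
\]
and then regroup the sum over $\cA^{(m)}$ by total homogeneous degree in $(\bp,\br)$. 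The pieces of degree $0,1,2$ define $\vf_0^{(m)},\vf_1^{(m)},\vf_2^{(m)}$, while everything of degree $\ge 3$ is collected into $\vf_{\geq 3}^{(m)}$: the monomials with $|\bk|\ge 3$ contribute their full exponential through the integral remainder, giving \eqref{reminder1}. Since $\cA^{(m)}$ and the weights $\rho^{(m)}$ are independent of $N$ (see \eqref{card.Am}) and $h_1^{(m)}$ has diameter $m$ (Lemma \ref{rem:diam.hj}), each $\vf_k^{(m)}$ automatically has diameter $\le m$ and $N$-independent coefficients.

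Next I would push this decomposition through the cyclic sum $J^{(m)} = \tfrac1m\sum_{j=1}^N S_{j-1} h_1^{(m)}$, using that $S_{j-1}$ is linear and preserves homogeneous degree. The constant $\vf_0^{(m)}$ produces $J_0^{(m)}=\tfrac Nm\vf_0^{(m)}$, the quadratic $\vf_2^{(m)}$ produces the cyclic quadratic $J_2^{(m)}$, and $\vf_{\geq 3}^{(m)}$ produces a cyclic function generated by $\vf_{\geq 3}^{(m)}/m$, which is the last bullet of \eqref{Jm.exp}. The parity property \eqref{rem:hmj.p} does most of the structural work: for $m$ odd, $h_1^{(m)}$ (hence $J^{(m)}$) is odd in $\bp$, so $J_0^{(m)}=0$ and the quadratic part can only be a $\bp$–$\br$ cross term; for $m$ even it is even in $\bp$, so the constant survives and the quadratic part splits into a pure-$\bp$ and a pure-$\br$ form with no cross term. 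Finally, the cyclic linear term $\tfrac1m\sum_j S_{j-1}\vf_1^{(m)}$ is a cyclic linear function, hence a multiple of $\sum_j p_j$ (for $m$ odd) or of $\sum_j r_j$ (for $m$ even); both vanish identically on the invariant manifold $\cM$ of \eqref{media}, which is exactly why no linear term appears in \eqref{Jm.exp}.

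It remains to pin down the matrices. Being quadratic and cyclic, $J_2^{(m)}$ is represented by circulant matrices by the criterion \eqref{cyc.quad2} together with the diagonalization lemma for circulant matrices; their representing vectors inherit from $\vf_2^{(m)}$ a support inside $\{0,\dots,\wt m\}$, so they are $m$-admissible in the sense of Definition \ref{def:ad}. To get the signs \eqref{ab} I would observe that the quadratic coefficients are sign-definite with no cancellation: the $\bp$–$\bp$ entries come from the $|\bk|=2$ monomials with weight $(-1)^{|\bk|}\rho^{(m)}=+\rho^{(m)}\ge 0$, while the $\br$–$\br$ entries come from the term $\tfrac12(\bn^\intercal\br)^2$ in the exponential, again with positive weight, so the admissible entries are strictly positive. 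For even $m$ the coincidence $A^{(m)}=D^{(m)}$ of the two blocks is best obtained by identifying $J_2^{(m)}$ with the $m$-th integral of the linearized (harmonic) chain and diagonalizing by the Hartley transform: in the $(\wh\bp,\wh\br)$ variables one has $\omega_j^2\wh q_j^2=\wh r_j^2$, so the harmonic energies read $E_j=\tfrac12(\wh p_j^2+\wh r_j^2)$ and, via \eqref{struc}, $J_2^{(m)}$ becomes a combination $\sum_j \wh c_j E_j$ carrying equal weights on $\wh p_j^2$ and $\wh r_j^2$, forcing the two circulant matrices to coincide.

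The main obstacle is precisely this last equality of the two quadratic blocks together with the exact sign statements \eqref{ab}: deriving $A^{(m)}=D^{(m)}$ directly from the combinatorial weights \eqref{rhom} would demand a nontrivial identity relating the coefficients attached to $(|\bk|=2,\bn)$ and to $(\bk=0,|\bn|=m/2)$, whereas the harmonic/Hartley identification above bypasses this, also explaining the $\omega_j$-weighting and making positivity transparent. Everything else—the degree bookkeeping, cyclicity, the diameter bound, and the explicit remainder \eqref{reminder1}—is routine term-chasing starting from \eqref{eq:general_super_motzkin}.
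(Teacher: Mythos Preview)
Your outline coincides with the paper's proof almost step for step: the Taylor expansion of $e^{-\bn^\intercal\br}$ with integral remainder, the parity splitting via \eqref{rem:hmj.p}, the vanishing of the cyclic linear part on $\cM$, and the circulant structure of $J_2^{(m)}$ via \eqref{cyc.quad2} are exactly what Appendix~\ref{app:struc} does.

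The one genuine gap is your argument for $A^{(m)}=D^{(m)}$ in the even case. Invoking \eqref{struc} here is circular: the identity $J_2^{(2k)}=\sum_j \wh c_j^{(k)} E_j$ announced in the introduction is precisely the content of this part of Lemma~\ref{JM.STRUC}, not an input to it. Your underlying idea---that $J_2^{(m)}$ is a quadratic integral of the linearized chain---is the right one, but it has to be made operational. The paper does this by using the involution of the Toda hierarchy: from $\{J^{(2)},J^{(m)}\}=0$ one extracts $\{J_2^{(2)},J_2^{(m)}\}=0$, writes both forms in the canonical Hartley variables $(\wh\bp,\wh\bq)$, and computes
\[
0=\{J_2^{(2)},J_2^{(m)}\}=\sqrt{N}\sum_{j=1}^{N-1}\omega_j^2\,(\wh a_j-\wh d_j)\,\wh p_j\,\wh q_j,
\]
forcing $\wh a_j=\wh d_j$ for all $j\ne 0$. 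The remaining mode $j=0$ is then fixed by $m$-admissibility: since both representing vectors vanish at index $\wt m+1$, the constant difference $a^{(m)}_k-d^{(m)}_k=(\wh a_0-\wh d_0)/\sqrt{N}$ must be zero. This replaces your appeal to \eqref{struc}. A related caution: your pointwise identity $\omega_j^2\wh q_j^2=\wh r_j^2$ is not immediate, since the Hartley transform does not diagonalize the non-symmetric difference $S_1-I$; what holds is the quadratic-form identity $\br^\intercal D\br=\sqrt{N}\sum_j \wh d_j\,\omega_j^2\,\wh q_j^2$ for circulant symmetric $D$, which is all the paper needs.
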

The proof is postponed to   Appendix~\ref{app:struc}.
We conclude this section giving the definition of {\em $m$-admissible} functions and we prove a lemma that characterizes them in terms of $\{J_2^{(l)}\}_{l=1}^{N}$. 

\begin{definition}
\label{def:Gad}
$G_1,G_2:\R^N\times \R^N\to \R^N$ are called {\em $m$-admissible } functions of  the  first and second kind respectively if there exists a $m$-admissible vector $\boldsymbol{g}\in\R^N$   such that 
\begin{equation}
G_1 := \sum_{j,l=0}^{N-1} g_l  p_jr_{j+l} \, , \qquad G_2 := \sum_{j,l=0}^{N-1} g_l \left(p_j p_{j+l} + r_jr_{j+l}\right) \, .
\end{equation}
\end{definition}

\begin{remark}
	\label{rem:Gab.struct}
	From Definition \ref{def:Gad} and \eqref{cyc.quad2} one can deduce that both $G_1$ and $G_2$ can be represented with circulant and symmetric matrices. Indeed we have that $G_1=\bp^\intercal {\mathcal G}_1\br$ where ${(\mathcal G}_1)_{jk}=g_{(j-k) \mod N} $  and similarly for $G_2$.
\end{remark}


An immediate, but very useful,  corollary of Lemma \ref{JM.STRUC} , is the fact  that the quadratic parts of  Toda integrals are a basis of the vector space of $m$-admissible functions.
 \begin{lemma}
 \label{G.lin.comb}
 	Fix $m\in \N$ and let $G_1$ and $ G_2$  be  $m$-admissible functions of the first and second kind {
	defined by  a $m$-admissible  vector $\boldsymbol{g}\in\R^N$.  }	 Then there are two unique sequences $\{c_j\}_{j=0}^{\wt m}, \, \{d_j\}_{j=0}^{\wt m}$,  with $\max_j |c_j|, \, \max_j |d_j| $ independent from  $N$,  such that:
 	 \begin{equation}
 	 	\label{eq:linear_comb_quadratic}
 	G_1 = \sum_{l=0}^{\wt m} c_l J_{2}^{(2l+1)}, \quad 
 	G_2 = \sum_{l=0}^{\wt m} d_l J_{2}^{(2l + 2)},
 	\end{equation} 
	where $J_2^{(m)}$ is the quadratic part  \eqref{Jm2.struct}  of the Toda integrals  $J^{(m)}$ in  \eqref{Jm.sum}.
 \end{lemma}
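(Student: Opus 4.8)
The plan is to turn the two claimed functional identities into triangular linear systems for the representing vectors of circulant matrices, and then to solve those systems with constants independent of $N$.

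First I would translate everything into the language of circulant symmetric matrices. By Remark~\ref{rem:Gab.struct}, $G_1 = \bp^\intercal \mathcal G_1 \br$ and $G_2 = \bp^\intercal \mathcal G_2 \bp + \br^\intercal \mathcal G_2 \br$, where $\mathcal G_1,\mathcal G_2$ are circulant symmetric and are both represented by the $m$-admissible vector $\bg$, supported on the indices $0,\dots,\wt m$. By Lemma~\ref{JM.STRUC}, each $J_2^{(2l+1)}$ is itself an $m$-admissible function of the first kind, $\bp^\intercal B^{(2l+1)} \br$, and each $J_2^{(2l+2)}$ one of the second kind, $\bp^\intercal A^{(2l+2)} \bp + \br^\intercal A^{(2l+2)} \br$, the matrices $B^{(2l+1)},A^{(2l+2)}$ being circulant symmetric with admissible representing vectors $\bb^{(2l+1)},\ba^{(2l+2)}$. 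Since all of these vectors are supported in a window of length $\ll N$, a circulant symmetric quadratic form on $\cM$ is determined by its representing vector: two such vectors produce the same form on $\cM$ only if they differ by a constant vector, which cannot happen inside a short window unless they coincide. Hence $G_1 = \sum_l c_l J_2^{(2l+1)}$ and $G_2 = \sum_l d_l J_2^{(2l+2)}$ are equivalent to the vector identities $\bg = \sum_{l=0}^{\wt m} c_l\, \bb^{(2l+1)}$ and $\bg = \sum_{l=0}^{\wt m} d_l\, \ba^{(2l+2)}$.

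Second, I would solve these by exploiting the graded structure recorded in Lemma~\ref{JM.STRUC}: the representing vector of $J_2^{(k)}$ is supported on an interval of indices that grows with $k$, and by \eqref{ab} its leading (top-index) coefficient is strictly positive. Consequently, listing $\{\bb^{(2l+1)}\}_{l=0}^{\wt m}$ (respectively $\{\ba^{(2l+2)}\}_{l=0}^{\wt m}$) in order of increasing support produces a triangular family with nonzero diagonal, that is, a basis of the space of admissible representing vectors in which $\bg$ lives. The systems for $(c_l)$ and $(d_l)$ are therefore uniquely solvable by back-substitution, giving both the existence and the uniqueness asserted in the statement. Equivalently, in the Hartley symbol picture each $J_2^{(k)}$ has as symbol an even cosine polynomial of a definite, distinct degree, so $\{J_2^{(2l+1)}\}$ and $\{J_2^{(2l+2)}\}$ are graded bases and $G_1,G_2$, whose symbols have degree $\le \wt m$, expand uniquely.

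The one point requiring care, and the main obstacle, is the uniformity in $N$ of the constants $\max_l |c_l|$ and $\max_l |d_l|$. Here I would use that, by \eqref{card.Am}, every entry of $\bb^{(2l+1)},\ba^{(2l+2)}$ is bounded by a constant depending only on $m$, and that the combinatorial data $\cA^{(m)}$ and $\rho^{(m)}$ entering Lemma~\ref{JM.STRUC} are $N$-independent; in particular the positive diagonal coefficients supplied by \eqref{ab} are bounded below away from zero uniformly in $N$. The triangular change of basis is thus a fixed finite-dimensional isomorphism with $N$-independent matrix and $N$-independent, nondegenerate diagonal, so back-substitution expresses each $c_l$ and $d_l$ as a ratio of $N$-independent bounded quantities over a product of positive diagonal entries. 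This yields $\max_l|c_l|,\ \max_l|d_l| \le C(m)$ and completes the proof.
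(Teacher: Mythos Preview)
Your proposal is correct and follows essentially the same route as the paper: reduce the functional identities to identities between the representing vectors of circulant symmetric matrices, observe via Lemma~\ref{JM.STRUC} and \eqref{ab} that the resulting $(\wt m+1)\times(\wt m+1)$ system is triangular with nonzero diagonal, and solve. Your treatment is in fact slightly more careful than the paper's on two points (the passage from equality of quadratic forms on $\cM$ to equality of representing vectors, and the explicit argument for $N$-independence of the bounds), but the underlying idea is the same.
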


\begin{proof}
	{We will prove the statement  just for functions of the first kind. The proof for functions of the second kind  can be obtained in a similar way.	
Let  $J^{(2l+1)}_2 =  \bp^\intercal  B^{(2l+1)} \br$ where  the  circulant  matrix  $B^{(2l+1)} $ is represented by the vector  $\bb^{(2l+1)}$ and let $G_1=\bp^\intercal {\mathcal G}_1\br$ where ${(\mathcal G}_1)_{jk}=g_{(j-k) \mod N} $. Then 
\[
{\mathcal G}_1=\sum_{l=0}^{\wt m} c_l B^{(2l+1)}\Longrightarrow g_{k}=\sum_{l=0}^{\wt m} b^{(2l+1)}_kc_l \,.
\]
	
	From Lemma \ref{JM.STRUC}  the matrix $\frak{B}=[b^{(2l+1)}_k]_{k,l=0}^{\wt{m}}$ is upper triangular and the diagonal elements are always different from $0$ (see in particular formula \eqref{ab}). This implies that the  above linear
	 system is uniquely solvable for $(c_0,\dots, c_{\wt{m}})$.	}
\end{proof}

\section{Averaging and Covariance}
In this section we collect some properties  of the  Gibbs measure $\di \mu_F$ in \eqref{eq:misura_vera}.
The first property is the invariance with respect to the shift operator. Namely for a function  $f\colon \R^N \times \R^N \to \R$;
we have that 
\begin{equation}
\label{av.shift}
\la S_j f \ra = \la f \ra , \qquad \forall j = 0, \ldots, N-1\,,
\end{equation}
which follows from the fact that $(S_j)_*\di \mu_F = \di \mu_F$.

It is in general not possible to compute exactly the average of a function with respect to   the Gibbs measure $\di \mu_F$   in \eqref{eq:misura_vera}. This is mostly due to the fact that the variables $p_0, \ldots, p_{N-1}$ and $r_0, \ldots, r_{N-1}$ are 
not independent with respect to  the measure $\di \mu_F$, being constrained by the conditions $\sum_i r_i = \sum_i p_i = 0$.

We will therefore proceed as in \cite{BCM14}, by considering a new  measure $\di \mu_{F,\theta}$   on the extended phase space according to which  all variables are independent. 
We will be able to compute averages and correlations with respect to  this measure,  and estimate the error derived by this approximation.

For any $\theta  \in \R$, we define the measure $\di \mu_{F,\theta}$ on the  extended space $\R^N\times \R^N$ by 
 \begin{equation}
\label{eq:misura_falsa}
\di \mu_{F,\theta} :=  \frac{1}{Z_{F,\theta}(\beta)} \  e^{-\beta {H_F(\bp,\br)}} \,    e^{- \theta \sum_{j=0}^{N-1} r_j}  \ \di \bp \,  \di \br, 
\end{equation}
where  we define $Z_{F,\theta}(\beta)$  as the normalizing constant of $\di \mu_{F,\theta}$.
We denote the expectation of  a function $f$ with respect to $\di \mu_{F,\theta}$ by $
\la f \ra_\theta$.
We also denote by  
$$
\norm{f}^2_\theta := \int_{\R^{2N}} |f(\bp,\br)|^2 \, \di \mu_{F,\theta} .
$$
 If $\norm{f}_\theta < \infty$ we say that $f \in L^2(\di \mu_{F,\theta})$.\\

The measure $\di \mu_{F,\theta}$ depends on the  parameter $\theta \in \R$ and we fix it in such a way that 
\begin{equation}
\label{theta.vfp}
\int_\R r \, e^{- \theta r - \beta V_F(r)} \, \di r = 0 .
\end{equation}
Following  \cite{BCM14}, it is not difficult to prove that  there exists $\beta_0 >0$ and a compact set $\cI\subset \R$  such that for any $\beta > \beta_0$, there exists  $\theta = \theta(\beta) \in \cI$ for which 
\eqref{theta.vfp} holds true. We remark that \eqref{theta.vfp} is equivalent to require  that
$\la r_j \ra_\theta = 0$  for  $\, j=0,\dots, N-1$ and as a consequence
$
\la \sum_{j=0}^{N-1} r_j \ra_\theta = 0 .$
We observe that  $\la \sum_{j=0}^{N-1} r_j \ra = 0$ with respect to the  measure $\di \mu_F$. 

The main reason for introducing  the measure $\di \mu_{F, \theta}$ is that it approximates  averages  with respect to $\di \mu_F$ as the following result shows.
\begin{lemma}
\label{magic}
Fix $\wt\beta >0$  and let $f \colon \R^N\times \R^N \to \R$ have support of size $K$ (according to Definition \ref{def:supp}) and  finite second order moment with respect to $\di \mu_{F,\theta}$, uniformly for all $\beta > \wt\beta$.
Then there exist   positive constants $C, N_0$ and $\beta_0 $ such that
for all $N > N_0$, $\beta > max\{\beta_0,\tilde{\beta}\}$ one has  
\begin{equation}
\label{eq:magic}
\abs{\la f \ra - \la f \ra_\theta }\leq  C \frac{K}{N} \sqrt{\la f^2\ra_{\theta} - \la f \ra_\theta^2}\, .
\end{equation}
\end{lemma}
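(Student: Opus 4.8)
The plan is to compare the two measures $\di\mu_F$ and $\di\mu_{F,\theta}$ by expressing both averages through the marginal distribution of the finitely many variables on which $f$ depends. Since $f$ has support of size $K$, it depends only on the variables $(\bp_S,\br_S)$ indexed by a set $S$ with $|S|=K$. The key idea is that the two measures differ only through how they handle the two linear constraints $\sum_j p_j=0$ and $\sum_j r_j=0$: under $\di\mu_F$ these are enforced exactly by delta functions, while under $\di\mu_{F,\theta}$ they are relaxed, with $\theta$ tuned via \eqref{theta.vfp} so that $\la\sum_j r_j\ra_\theta=0$ (and the symmetry of $V_F$ in $\bp$ forces $\la\sum_j p_j\ra_\theta=0$ automatically). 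Because the constraints are \emph{global} sums over $N$ variables while $f$ sees only $K$ of them, conditioning on the constraint should perturb the law of $(\bp_S,\br_S)$ only by a relative amount of order $K/N$.

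First I would integrate out the $N-K$ variables not in $S$ in both the numerator and the denominator (partition function). Under $\di\mu_{F,\theta}$ the variables are i.i.d., so this is a clean marginalization. Under $\di\mu_F$, the delta functions $\delta(\sum r_j)\,\delta(\sum p_j)$ couple everything; here I would represent each delta function by its Fourier integral, $\delta(\sum_j r_j)=\frac{1}{2\pi}\int_\R e^{\im\lambda\sum_j r_j}\,\di\lambda$, and similarly for the $p$ constraint. After swapping the order of integration, the $N-K$ ``free'' variables each contribute an identical factor, producing a power $(\widehat{Z}(\lambda))^{N-K}$ times a remaining $K$-dimensional integral over $(\bp_S,\br_S)$ weighted by $f$. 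The ratio $\la f\ra/\la 1\ra$ then becomes a ratio of contour integrals in the dual variables $\lambda$ (for $r$) and its analogue (for $p$).

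The next step is a saddle-point / Laplace analysis in the dual variables. The single-variable characteristic function $\widehat{Z}(\lambda)=\int e^{-\beta V_F(r)-\theta r+\im\lambda r}\,\di r$ raised to the power $N-K$ concentrates sharply, as $N\to\infty$, at the saddle where $\di\log\widehat{Z}/\di\lambda=0$; the choice \eqref{theta.vfp} is precisely what places this saddle at $\lambda=0$, so that to leading order $\di\mu_F$ and $\di\mu_{F,\theta}$ agree. Expanding around the saddle and comparing the $K$-dependent prefactor (coming from the $K$ variables in $S$ that sit outside the large power) against the $N$-dependent Gaussian normalization from the $N-K$ free variables, the leading correction is of relative order $K/N$. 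Carrying the Cauchy--Schwarz bound through, the discrepancy in the average of $f$ is controlled by $\tfrac{K}{N}$ times the standard deviation $\sqrt{\la f^2\ra_\theta-\la f\ra_\theta^2}$, which is \eqref{eq:magic}.

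The main obstacle will be making the saddle-point estimate fully quantitative and \emph{uniform in $N$}, rather than merely asymptotic, so as to extract an honest constant $C$ independent of $N$ and of $f$. This requires controlling the Gaussian (second-order) and higher-order terms in the Laplace expansion of $(\widehat{Z})^{N-K}$ uniformly, which in turn needs the finiteness of the second moment of $f$ under $\di\mu_{F,\theta}$ (exactly the hypothesis assumed) together with the nondegeneracy of the saddle, i.e. that the variance of $r$ under the tilted one-site measure $e^{-\beta V_F(r)-\theta r}$ is bounded below uniformly for $\beta>\widetilde\beta$. Establishing these uniform bounds on the one-site integrals, and verifying that the error terms genuinely scale like $K/N$ and not worse, is where the real work lies; the algebraic rearrangement via the Fourier representation of the deltas is routine by comparison. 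I would follow the scheme of \cite{BCM14} closely here, adapting it to the two simultaneous constraints on $\bp$ and $\br$.
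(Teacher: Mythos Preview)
Your outline is in the right spirit---this is an equivalence-of-ensembles argument, and integrating out the $N-K$ irrelevant variables is exactly where the paper starts. But the paper does not pass through the Fourier representation of the $\delta$-functions and a saddle-point analysis of $(\widehat Z(\lambda))^{N-K}$. Instead it works directly with the \emph{structure functions}
\[
\Omega_{N}(x)=\int_{\sum x_j=x} e^{-\beta\sum V(x_j)}\,\di x_1\cdots\di x_N,
\qquad
U_N^{(\theta)}(x)=\frac{e^{-\theta x}\,\Omega_N(x)}{z_\theta(\beta)^N},
\]
observes that $U_N^{(\theta)}$ is the density of a sum of $N$ i.i.d.\ tilted variables, and then invokes a \emph{local central limit theorem with explicit remainder} (Petrov, \cite{petrov}) to get a uniform expansion of $U_{N-K}^{(\theta)}(x)/U_N^{(\theta)}(0)$ up to $O(N^{-3/2})$. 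This replaces your hand-made saddle-point estimate by a ready-made theorem; what you call ``the main obstacle'' (quantitative, uniform-in-$N$ Laplace bounds) is thereby outsourced entirely. For the $p$-constraint the structure function is Gaussian and computed exactly, so no asymptotics are needed there.

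Two specific points where your sketch is incomplete or off. First, you write ``carrying the Cauchy--Schwarz bound through'' but do not mention the key centering step: the paper writes $\la f\ra-\la f\ra_\theta=\int f\,\mathfrak U^{(\theta)}\,\di\tilde\mu$ with a kernel satisfying $\int\mathfrak U^{(\theta)}\,\di\tilde\mu=0$, then \emph{subtracts} this zero integral to replace $f$ by $f-\la f\ra_\theta$ before Cauchy--Schwarz. Without this subtraction you get $\|f\|_\theta$ on the right, not the standard deviation. Second, your claim that the one-site variance of $r$ is ``bounded below uniformly for $\beta>\widetilde\beta$'' is false: $\sigma^2\sim\beta^{-1}\to 0$. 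What the argument actually needs, and what the paper verifies through assumption (V2), is that the \emph{dimensionless} ratios such as $\gamma_3/\sigma^3$ stay bounded as $\beta\to\infty$; these control the correction terms $\tq_\nu(x)$ in the local CLT expansion. The $\sigma$ dependence of $U_N^{(\theta)}$ itself cancels in the ratio $U_{N-K}^{(\theta)}/U_N^{(\theta)}$.
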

The above  lemma is   an extension to the periodic case of a result from  \cite{BCM14}, and we shall prove it in Appendix  \ref{app:meas.app}.
As an example of applications of Lemma \ref{magic}, we  give a bound to  correlations  functions.
\begin{lemma}
\label{cov}
Fix $K \in \N$. Let $f, g \colon \R^{N}\times \R^N \to \C$ such that :
\begin{itemize}
\item[1.] $f, g$ and $ fg \in L^2(\di \mu_{F,\theta})$,
\item[2.] the supports of $f$ and $g$ have size at most $K \in \N$.
\end{itemize}
Then there exist $C, N_0, \beta_0 >0$ such that for all $N > N_0$, $\beta > \beta_0$ 
\begin{equation}
 \label{cov1}
{\abs{ \la f g \ra - \la f \ra \la g \ra } \leq 2\norm{f}_\theta \, \norm{g}_\theta+ \frac{C K}{N} \Big(\norm{f}_\theta \, \norm{g}_\theta + \norm{fg}_\theta\Big) . }
 \end{equation} 
 Moreover, if $f$ and $g$ have disjoint supports, then
 \begin{equation}
 \label{cov2}
\abs{ \la f g \ra - \la f \ra \la g \ra } \leq  \frac{C K}{N} \Big(\norm{f}_\theta \, \norm{g}_\theta + \norm{fg}_\theta \Big) .
\end{equation}
\end{lemma}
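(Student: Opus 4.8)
The plan is to compare every average appearing in the covariance, all of which are taken with respect to the constrained measure $\di \mu_F$, against the corresponding averages with respect to the product measure $\di \mu_{F,\theta}$, where the variables decouple and Cauchy--Schwarz together with independence apply directly. The bridge between the two measures is exactly Lemma~\ref{magic}. First I would record its consequences for $f$, $g$, and the product $fg$. Since $\sqrt{\la f^2\ra_\theta - \la f\ra_\theta^2}\leq \norm{f}_\theta$, Lemma~\ref{magic} applied to $f$ and to $g$ (each of support size at most $K$) gives
\begin{equation*}
\abs{\la f\ra - \la f\ra_\theta}\leq \frac{CK}{N}\norm{f}_\theta,\qquad
\abs{\la g\ra - \la g\ra_\theta}\leq \frac{CK}{N}\norm{g}_\theta.
\end{equation*}
The product $fg$ has support contained in ${\rm supp}\,f\cup{\rm supp}\,g$, hence of size at most $2K$; applying Lemma~\ref{magic} to $fg$ therefore yields $\abs{\la fg\ra - \la fg\ra_\theta}\leq \frac{2CK}{N}\norm{fg}_\theta$, with the factor $2$ absorbed into the constant.

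Next I would write the covariance as the telescoping sum
\begin{equation*}
\la fg\ra - \la f\ra\la g\ra
= \big(\la fg\ra - \la fg\ra_\theta\big)
+ \big(\la fg\ra_\theta - \la f\ra_\theta\la g\ra_\theta\big)
+ \big(\la f\ra_\theta\la g\ra_\theta - \la f\ra\la g\ra\big).
\end{equation*}
The first summand is controlled by the $fg$-estimate above. For the third I would telescope once more, writing $\la f\ra_\theta\la g\ra_\theta - \la f\ra\la g\ra = (\la f\ra_\theta - \la f\ra)\la g\ra_\theta + \la f\ra(\la g\ra_\theta - \la g\ra)$, then bound $\abs{\la g\ra_\theta}\leq\norm{g}_\theta$ by Cauchy--Schwarz and $\abs{\la f\ra}\leq (1+CK/N)\norm{f}_\theta\leq 2\norm{f}_\theta$ for $N>N_0$ (this last bound is why $\la f\ra$ must be transferred to $\la f\ra_\theta$ before anything else); the third summand is then $\leq \frac{CK}{N}\norm{f}_\theta\norm{g}_\theta$ up to a constant.

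The heart of the matter is the middle summand $\la fg\ra_\theta - \la f\ra_\theta\la g\ra_\theta$, which is precisely the covariance under the product measure. In general Cauchy--Schwarz gives $\abs{\la fg\ra_\theta}\leq\norm{f}_\theta\norm{g}_\theta$ and $\abs{\la f\ra_\theta}\,\abs{\la g\ra_\theta}\leq\norm{f}_\theta\norm{g}_\theta$, so this term is bounded by $2\norm{f}_\theta\norm{g}_\theta$; collecting the three contributions produces \eqref{cov1}. The improvement \eqref{cov2} comes from the observation that under $\di \mu_{F,\theta}$ all the variables $p_i,r_i$ are independent, so if $f$ and $g$ have disjoint supports they are independent random variables and the middle summand vanishes identically, leaving only the two $O(K/N)$ contributions.

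The main obstacle is not any single estimate but the bookkeeping needed to keep every error term genuinely of order $K/N$: one must check that applying Lemma~\ref{magic} to $fg$ legitimately uses support size $2K$, which is where the hypothesis $fg\in L^2(\di \mu_{F,\theta})$ enters to guarantee the finite-second-moment assumption of that lemma, and one must route $\la f\ra$ through $\la f\ra_\theta$ so that the true-measure average is controlled by $\norm{f}_\theta$ rather than by $\norm{f}$. A minor point is that $f,g$ are complex-valued whereas Lemma~\ref{magic} is stated for real functions; I would apply it to real and imaginary parts separately, which only changes the constants.
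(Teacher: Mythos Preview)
Your proof is correct and follows essentially the same approach as the paper: the same three-term telescoping of the covariance, the same application of Lemma~\ref{magic} to $f$, $g$, and $fg$, the same Cauchy--Schwarz bound $2\norm{f}_\theta\norm{g}_\theta$ on the $\theta$-covariance, and the same independence observation for disjoint supports. The only cosmetic difference is that the paper telescopes the third summand into three pieces (with a harmless $(K/N)^2$ cross term) rather than your two, thereby avoiding the need to bound $\abs{\la f\ra}$ by $\norm{f}_\theta$ explicitly.
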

\begin{proof}
We substitute  the measure  $\di \mu_F$ with $\di \mu_{F,\theta}$ and  then we  control the error by  using Lemma \ref{magic}. With this idea, we write
\begin{align}
\label{pezzo1}
\la f g \ra - \la f \ra \la g \ra  = & 
 \la f g  \ra - \la f g  \ra_{\theta} \\
 \label{pezzo2}
 & + \la f g  \ra_\theta - \la f \ra_\theta \la g \ra_\theta \\
 \label{pezzo3}
 & + \la f \ra_\theta \la g \ra_\theta - \la f \ra \la g \ra \, ,
\end{align}
and estimate the different terms.
We will  often use  the inequality 
\begin{equation}
\label{L1}
\abs{\la f \ra_\theta} \leq \norm{f}_\theta \, ,
\end{equation}
 valid for any function $f \in L^2(\di \mu_{F,\theta})$.
 
\noindent
{\sc Estimate of \eqref{pezzo1}:}  By Lemma \ref{magic}, and the assumption that $f g$ depends on at most  $2K$ variables,
\begin{align}
\notag
\abs{\la fg \ra - \la fg \ra_{\theta}} 
&\leq 
C \frac{2K}{N} \sqrt{\la (fg)^2\ra_\theta - \la fg\ra_\theta^2}
 \leq \frac{C' K }{N} \norm{fg}_{\theta} \, .
\end{align}

\noindent
{\sc Estimate of \eqref{pezzo2}:} By Cauchy-Schwartz  and \eqref{L1} we have
\begin{align}
  \label{pezzo2.est}
\abs{\la  fg  \ra_\theta - \la f \ra_\theta \la g \ra_\theta}
\leq  2 \norm{f}_\theta \norm{g}_\theta \, .
 \end{align} 

\noindent
{\sc Estimate of \eqref{pezzo3}:} We decompose further
\begin{align*}
 \la f \ra_\theta \la g \ra_\theta - \la f \ra \la g \ra
  = & \la g \ra_\theta \left( \la f\ra_\theta - \la f \ra\right) 
 + \left(\la g \ra_\theta - \la g \ra \right)\la f \ra_\theta  + \left(\la g \ra_\theta - \la g \ra \right)\left( \la f \ra - \la f\ra_\theta \right) \, ,
  \end{align*}
  again by Lemma \ref{magic} and \eqref{L1} we obtain
  \begin{align}
  \notag
  \abs{  \la f \ra_\theta \la g \ra_\theta - \la f \ra \la g \ra}
  & \leq 
   C\frac{K}{N} \norm{g}_\theta \norm{f}_\theta \, .
  \end{align}
{ Combining  the  three  bounds    above and redefining $C=\mbox{max}\{C,C'\}$ one obtains   \eqref{cov1}. }
  To prove \eqref{cov2} it is sufficient to observe that if $f$ and $g$ have disjoint supports, then $\la fg\ra_\theta = \la f\ra_\theta \la g \ra_\theta$ and consequently \eqref{pezzo2} is equal to zero. 
\end{proof}

In order to make Lemma \ref{cov} effective we need to show how to compute averages according to  the  measure \eqref{eq:misura_falsa}. 
\begin{lemma}
\label{LEM:ORDINE_BETA}
There exists $\beta_0 >0$ such that for any   $\beta > \beta_0$, the following holds true.  For  any fixed  multi-index $\bk,\bl,\bn,\bs \in \N_0^{N}$ and $d,d' \in \{ 0,1,2\}$,  there are two constants $C_{\bk,\bl}^{(1)} \in \R$ and $C_{\bk,\bl}^{(2)}>0$ such that 
\begin{equation*}
\frac{C^{(1)}_{\bk,\bl}}{\beta^{\frac{|\bk| + |\bl| }{2}}} \leq \la \bp^\bk \, \br^\bl \,  \left(\int_{0}^1  e^{- \xi  \bn^\intercal \br  }(1-\xi)^2\di \xi\right)^d  \left(\int_{0}^1  e^{- \xi  \bs^\intercal \br  }(1-\xi)^3\di \xi\right)^{d'} \ra_\theta \leq \frac{C^{(2)}_{\bk,\bl}}{\beta^{\frac{|\bk| + |\bl | }{2}}}
\end{equation*}
where $\bp^\bk=\prod_{j=1}^Np_j^{k_j}$ and $ \br^\bl=\prod_{j=1}^N r_j^{l_j} $.
Moreover:
\begin{itemize}
	\item[(i)] if $k_i$ is odd for some $i$ then $C^{(1)}_{\bk,\bl} = C^{(2)}_{\bk,\bl} = 0$;
	\item[(ii)] if $k_i, l_i$ are even for all $i$ then $C^{(1)}_{\bk,\bl} >0$.
\end{itemize}
\end{lemma}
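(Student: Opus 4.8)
The plan is to exploit the product structure of the measure $\di\mu_{F,\theta}$ together with a rescaling of all variables by $\sqrt\beta$. Since $H_F(\bp,\br)=\sum_j \frac{p_j^2}{2}+\sum_j V_F(r_j)$, the density in \eqref{eq:misura_falsa} factorizes completely, so that under $\la\cdot\ra_\theta$ the $2N$ variables $p_0,\dots,p_{N-1},r_0,\dots,r_{N-1}$ are mutually independent: the $p_j$ are i.i.d. centered Gaussians of variance $\beta^{-1}$, while each $r_j$ has marginal density proportional to $e^{-\beta V_F(r)-\theta r}$. In particular the average in the statement splits into a purely momentum factor times a purely position factor, and the $N$-dependence disappears since only finitely many indices occur in $\bk,\bl,\bn,\bs$.

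First I would treat the momentum factor $\la\bp^{\bk}\ra_\theta=\prod_j\la p_j^{k_j}\ra_\theta$. Each factor is a Gaussian moment, equal to $0$ when $k_j$ is odd and to $(k_j-1)!!\,\beta^{-k_j/2}$ when $k_j$ is even. This already yields claim $(i)$ — if some $k_i$ is odd the whole average vanishes, so $C^{(1)}_{\bk,\bl}=C^{(2)}_{\bk,\bl}=0$ — and otherwise contributes the factor $\beta^{-|\bk|/2}$ times a positive constant. For the position factor I would use Fubini to pull the $\xi$-integrals outside the expectation: writing $\bn^\intercal\br=\sum_i n_i r_i$ and expanding the $d$-th and $d'$-th powers as iterated integrals over $[0,1]^{d}$ and $[0,1]^{d'}$, the position average becomes
\[
\int_{[0,1]^{d+d'}}\Big(\textstyle\prod_a(1-\xi_a)^2\prod_b(1-\eta_b)^3\Big)\,\prod_j\la r_j^{l_j}\,e^{-(\Xi n_j+H s_j)r_j}\ra_\theta\,\di\xi\,\di\eta,
\]
where $\Xi=\sum_a\xi_a\in[0,d]$ and $H=\sum_b\eta_b\in[0,d']$ are bounded, and the single-site factors are genuinely one-dimensional thanks to independence. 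Rescaling $r=\tilde r/\sqrt\beta$ turns each factor into $\beta^{-l_j/2}$ times the ratio $\big(\int\tilde r^{l_j}e^{-c\tilde r/\sqrt\beta}e^{-\wt W_\beta(\tilde r)}\di\tilde r\big)\big/\big(\int e^{-\wt W_\beta(\tilde r)}\di\tilde r\big)$, where $\wt W_\beta(\tilde r)=\frac{\tilde r^2}{2}-\frac{\tilde r^3}{6\sqrt\beta}+\tb\frac{\tilde r^4}{24\beta}+\frac{\theta}{\sqrt\beta}\tilde r$ and $c=\Xi n_j+H s_j\ge0$ stays bounded. Collecting the momentum and position contributions produces exactly the prefactor $\beta^{-(|\bk|+|\bl|)/2}$.

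The remaining point — and the main technical obstacle — is to show that these rescaled ratios are bounded above and below uniformly in $\beta>\beta_0$ (uniformity in $N$ being automatic after factorization). For this I would establish a uniform quadratic lower bound $\wt W_\beta(\tilde r)\ge c_1\tilde r^2-c_2$ with $c_1,c_2>0$ independent of $\beta$ and of $\theta\in\cI$: splitting the line into $|\tilde r|\le\sqrt\beta$, where the cubic and linear corrections are dominated by the quadratic term (the cubic because $|\tilde r|/\sqrt\beta\le1$, the linear by Young's inequality), and $|\tilde r|>\sqrt\beta$, where the estimate reduces to the positivity of $V_F$ away from the origin through $\wt W_\beta(\tilde r)\approx\beta V_F(\tilde r/\sqrt\beta)$. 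This is the step that genuinely uses that the origin is the global minimum of $V_F$. The lower bound gives uniform integrability, hence a uniform upper bound $C^{(2)}_{\bk,\bl}$ on the absolute value of each ratio and thus on the whole average, and one may take $C^{(1)}_{\bk,\bl}=-C^{(2)}_{\bk,\bl}$ in general.

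Finally, for claim $(ii)$, when every $k_j$ and $l_j$ is even the integrand $\bp^{\bk}\br^{\bl}(\cdots)^d(\cdots)^{d'}$ is pointwise nonnegative. Restricting the expectation to the event $\{\tfrac12\le\sqrt\beta\,r_j\le1,\ \tfrac12\le\sqrt\beta\,p_j\le1\ \forall j\in{\rm supp}\}$, whose $\la\cdot\ra_\theta$-probability is bounded below uniformly for large $\beta$ (again by the uniform control of $\wt W_\beta$ near the origin), and using that the exponential-integral factors are bounded away from $0$ there, yields a strictly positive $C^{(1)}_{\bk,\bl}$ carrying the correct power $\beta^{-(|\bk|+|\bl|)/2}$. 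The same localization argument, now applied to the full integrand, confirms that the limiting value of each rescaled ratio is the corresponding Gaussian moment, which makes the constants explicit if desired.
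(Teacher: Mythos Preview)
Your argument is correct and shares the paper's core strategy --- factorize via independence, compute the Gaussian $\bp$-moments explicitly, and rescale $r\mapsto r/\sqrt\beta$ to extract the power of $\beta$ from the position integrals --- but handles the integral-remainder factors differently. The paper bounds $\big(\int_0^1 e^{-\xi\bn^\intercal\br}(1-\xi)^v\,\di\xi\big)^d$ pointwise between $\tfrac{1}{4^d}\prod_{j}\min(e^{-dn_jr_j},1)$ and $\tfrac{1}{3^d}\prod_{j}\max(e^{-dn_jr_j},1)$, which immediately factorizes over sites and leaves only parameter-free one-dimensional expectations such as $\la r^{l}\max(e^{-nr},1)\ra_\theta$ to estimate after rescaling. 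Your route via Fubini keeps the $\xi$-integrals outside and produces parametrized single-site expectations $\la r^{l_j}e^{-c_jr}\ra_\theta$ with $c_j=\Xi n_j+H s_j$ ranging over a bounded set, so you need an additional (routine) uniformity check in $c_j$; in exchange you make the uniform-integrability step --- the quadratic lower bound on $\wt W_\beta$ via the splitting $|\tilde r|\lessgtr\sqrt\beta$ --- more explicit than the paper, which simply rescales and asserts the resulting ratios are bounded for large $\beta$. Either device works and neither is materially shorter.
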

The lemma is proved in Appendix \ref{app:process}.

\begin{remark}
Actually   all the results of this section hold true (with different constants)  also when we endow $\cM$ with the Gibbs measure of the Toda chain in \eqref{eq:misura_vera_toda} and we use as approximating measure 
 \begin{equation}
\label{eq:misura_falsaT}
\di \mu_{T,\theta} :=  \frac{1}{Z_{T,\theta}(\beta)} \  e^{-\beta {H_T(\bp,\br)}} \,    e^{- \theta \sum_{j=0}^{N-1} r_j}  \ \di \bp \,  \di \br;
\end{equation}
here $\theta$ is selected in such a way that
\begin{equation}
\label{theta.vt}
\int_\R r \, e^{- \theta r - \beta V_T(r)} \, \di r = 0 .
\end{equation}
We show in  Appendix  \ref{app:process} that it is always possible to choose $\theta$ to fulfill \eqref{theta.vt} (see Lemma \ref{lem:mom_r}) and we also prove Lemma \ref{LEM:ORDINE_BETA} for Toda. In Appendix  \ref{app:meas.app}  we prove Lemma \ref{magic} for the Toda  chain.
\end{remark}

%
%
%
%
%
%

\section{Bounds on the Variance}
In this section we prove  upper and lower bounds on the variance of the  quantities  relevant to prove our main theorems.

\subsection{Upper bounds on the variance of $J^{(m)}$ along the flow of FPUT}
In this subsection we only consider the case $\cM$  endowed by the FPUT Gibbs measure. 
{We denote by $J^{(m)}(t) := J^{(m)}\circ \phi^t_{H_F}$ the Toda integral computed along the Hamiltonian flow $\phi^t_{H_F}$ of the FPUT Hamiltonian.}
The aim  is to prove the following result:
\begin{proposition}
\label{lem:numeden}
Fix $m \in \N$.  There exist $N_0, \beta_0, C_0, C_1>0$  such that for any $N > N_0$, $\beta > \beta_0$, one has
\begin{align}
\label{num}
& \sigma_{J^{(m)}(t) - J^{(m)}(0)}^2 \leq C_0  N \left(\frac{(\tb - 1)^2}{\beta^4} + \frac{C_1}{\beta^{5}}\right) t^2  , \qquad \forall t \in \R .
\end{align}
\end{proposition}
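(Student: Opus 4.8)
The plan is to start from the Carati--Maiocchi inequality \eqref{eq:car_maio}, which reduces the problem to bounding the single average $\la \{J^{(m)}, H_F\}^2 \ra$, since
$$
\sigma_{J^{(m)}(t) - J^{(m)}(0)}^2 \leq \la \{ J^{(m)}, H_F\}^2\ra \, t^2 .
$$
Thus everything hinges on showing $\la \{ J^{(m)}, H_F\}^2\ra \leq C_0 N \big( (\tb-1)^2 \beta^{-4} + C_1 \beta^{-5}\big)$. The crucial algebraic input is that $J^{(m)}$ is a Toda integral, so $\{J^{(m)}, H_T\} = 0$. Writing $H_F = H_T + (H_F - H_T)$, the Poisson bracket collapses to $\{J^{(m)}, H_F\} = \{J^{(m)}, H_F - H_T\}$, where $H_F - H_T$ is the difference of the two potentials summed over sites. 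Since $V_F(x) - V_T(x) = -\frac{x^3}{6} + \tb\frac{x^4}{24} - (e^{-x} + x - 1)$, the Taylor expansion at the origin begins at order $x^4$ with coefficient proportional to $(\tb - 1)$, and the next order is $x^5$; this is precisely the source of the two terms $(\tb-1)^2$ and $\beta^{-1}$ in the bound, and explains the improvement when $\tb = 1$.

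Next I would exploit the cyclic structure from Lemma~\ref{rem:diam.hj}: write $J^{(m)} = \frac1m \sum_{j=1}^N S_{j-1} h_1^{(m)}$, so that
$$
\{ J^{(m)}, H_F\} = \frac1m \sum_{j=1}^N \{ S_{j-1} h_1^{(m)}, \, H_F - H_T\} =: \frac1m \sum_{j=1}^N F_j ,
$$
where each $F_j$ is a shift of a single function $F_1 := \{h_1^{(m)}, H_F - H_T\}$ of diameter bounded by a constant depending only on $m$ (the bracket of a diameter-$m$ function with a nearest-neighbor potential difference has diameter $O(m)$). Expanding the square gives $\la \{J^{(m)}, H_F\}^2\ra = \frac{1}{m^2}\sum_{i,j} \la F_i F_j\ra$. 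By shift-invariance of the measure \eqref{av.shift} there are only $O(N)$ essentially distinct correlation terms $\la F_i F_j \ra$, and by the uniformly bounded support each $F_j$ correlates nontrivially with only $O(m)$ neighbors; for the remaining pairs the supports are disjoint. I would then apply Lemma~\ref{cov} to bound each correlation, and Lemma~\ref{magic}/Lemma~\ref{LEM:ORDINE_BETA} to pass to the independent measure $\di\mu_{F,\theta}$ and extract the $\beta$-scaling. Counting gives the overall factor $N$.

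The heart of the matter is the $\beta$-power bookkeeping. Under $\di\mu_{F,\theta}$ each factor $p_j$ or $r_j$ contributes a factor $\beta^{-1/2}$ by Lemma~\ref{LEM:ORDINE_BETA}. Now $F_1 = \{h_1^{(m)}, H_F - H_T\}$: the leading part of $H_F - H_T$ is a quartic (order $4$) when $\tb \neq 1$ and a quintic (order $5$) when $\tb = 1$, and taking a Poisson bracket lowers the total degree by one. I expect $\la F_1^2 \ra_\theta$ to scale like $(\tb-1)^2 \beta^{-4} + C_1 \beta^{-5}$: the lowest-degree surviving term in $F_1$ has degree $3$ (from the quartic) carrying $(\tb-1)$, so $\la F_1^2\ra_\theta \sim (\tb-1)^2 \beta^{-3}$ naively --- but one must track that the gradient of a Toda integral against the leading resonant quartic partially cancels, pushing the effective degree up by one and yielding $\beta^{-4}$; the genuinely quintic remainder then gives the $\beta^{-5}$ term. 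The main obstacle, therefore, is verifying this extra degree of cancellation carefully and confirming that it holds uniformly in $N$, which is exactly where the detailed structure of $J^{(m)}_2$ as a circulant quadratic form (Lemma~\ref{JM.STRUC}) and the parity property \eqref{rem:hmj.p} must be used to rule out lower-order contributions.
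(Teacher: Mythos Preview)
Your overall architecture is correct and matches the paper: Carati--Maiocchi reduces to bounding $\la\{J^{(m)},H_F\}^2\ra$; Toda integrability kills $\{J^{(m)},H_T\}$; you decompose the remaining bracket into $N$ shifted local pieces and use Lemma~\ref{cov} to control correlations, picking up the global factor $N$. The paper also inserts one step you omit: since $\la\{J^{(m)},H_F\}\ra=0$ by invariance of $\di\mu_F$, the second moment equals the variance, so the sum over pairs is genuinely a sum of covariances and the disjoint-support bound \eqref{cov2} applies to the far pairs.

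The gap is exactly where you flag it: the $\beta$-power count. Your $F_1=\{h_1^{(m)},H_F-H_T\}$ genuinely has a degree-$3$ leading part when $m$ is odd (the $p$-linear piece $\vf_1^{(m)}$ of $h_1^{(m)}$ gives a nonzero \emph{constant} after $\partial_p$, which then multiplies the cubic $R'(x)\sim(\tb-1)x^3/6$), so $\|F_1\|_\theta^2\sim\beta^{-3}$ and your covariance argument as written only yields $N\beta^{-3}$. Your proposed fix --- appealing to the circulant structure of $J_2^{(m)}$ and the parity property \eqref{rem:hmj.p} --- does not supply the missing cancellation: those tools constrain the \emph{quadratic} part of $J^{(m)}$, whereas the obstruction sits in the \emph{constant} part of $\partial_p h_1^{(m)}$.

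The paper resolves this with a different decomposition. Instead of splitting through $h_j^{(m)}$, it uses the Lax identity
\[
\frac{\partial J^{(m)}}{\partial p_{j-1}}=-[L^{m-1}]_{j,j}=-h_j^{(m-1)}
\]
(formula \eqref{Jm_der_pj}) to write $\{J^{(m)},H_F-H_T\}=\sum_j h_j^{(m-1)}\big(R'(r_{j-2})-R'(r_{j-1})\big)$. The constant $h_j^{(m-1)}(\boldsymbol 0,\boldsymbol 0)$ is $j$-independent and multiplies a telescoping difference, so it drops out, leaving local pieces $H_j^{(m)}=\big(h_j^{(m-1)}-h_j^{(m-1)}(\boldsymbol 0,\boldsymbol 0)\big)\big(R'(r_{j-2})-R'(r_{j-1})\big)$ that are \emph{individually} of degree $\ge 1+3=4$; hence $\|H_j^{(m)}\|_\theta^2\lesssim(\tb-1)^2\beta^{-4}+\beta^{-5}$ and the covariance sum gives the claimed bound. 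An equivalent rescue of your own decomposition is to note that the degree-$3$ parts of your $F_j$'s sum identically to zero by the same telescoping, so $\mathrm{Var}\big(\sum_j F_j\big)=\mathrm{Var}\big(\sum_j\tilde F_j\big)$ with each $\tilde F_j$ of degree $\ge4$; but this has to be stated and used explicitly, and it is not a consequence of circulancy or parity.
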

\begin{proof}
As   explained in the introduction, applying formula 
 \eqref{eq:car_maio}  we get 
\begin{equation}
\label{num2}
\sigma_{J^{(m)}(t) - J^{(m)}(0)}^2 \leq \la \{ J^{(m)}, H_F\}^2\ra  t^2 , \qquad \forall t \in \R.
\end{equation} 
 Therefore we need  to  bound 
$\la \{J^{(m)}, H_F\}^2 \ra$.
For the purpose we rewrite this term in a more  convenient form.
Since
$\la \cdot \ra$ is an invariant measure with respect to the Hamiltonian flow of $H_F$,  one has 
\begin{equation}
\label{J_mH_F}
\la \{ J^{(m)}, H_F \} \ra = 0 .
\end{equation}
Furthermore, since  $J^{(m)}$ is an integral of motion of the Toda Hamiltonian $H_T$, we have
		\begin{equation}
		\label{J_mH_T}
			\left\{J^{(m)}, H_T\right\} = 0  .
		\end{equation}
We apply identities \eqref{J_mH_F} and \eqref{J_mH_T} to   write
		\begin{align}
	\la\left\{J^{(m)}, H_F\right\}^2\ra 
\label{Jm2}
	= \la\left\{J^{(m)},  H_F - H_T\right\}^2 \ra - \la \{ J^{(m)}, H_F- H_T \} \ra^2 \, .
	\end{align}
The  above expression enables us to exploit the fact that the FPUT system is a  { fourth order} perturbation  of the Toda chain.
To proceed with the proof we need the following technical result.
\begin{lemma}
\label{lem:J_mR}
One  has 
\begin{equation}
\label{J_mR}
\{J^{(m)}, H_F-H_T \}= \sum_{j=1}^N H_j^{(m)} \, ,
\end{equation} 
where the functions $H_j^{(m)}$ fulfill
\begin{itemize}
\item[(i)]$H_j^{(m)} = S_{j-1} H_1^{(m)}$ $\ \forall j $, moreover  the diameter of the  support of $H_j^{(m)}$  is  at most $m$;
\item[(ii)] there exist $N_0, \beta_0, C, C'>0$ such that for any $N > N_0$, $\beta > \beta_0$,  any $i, j = 1, \ldots, N$, the following estimates hold true:
 \begin{align}
  \label{avgHi}
\norm{H_j^{(m)}}_\theta \leq C\left( \frac{(\tb-1)^2}{\beta^4} + \frac{C'}{\beta^5} \right)^{1/2},  \qquad 
 \norm{ H_i^{(m)} H_j^{(m)} }_\theta  \leq  C\left(\frac{\left(\tb-1\right)^4}{\beta^8} + \frac{C'}{\beta^{10}}\right)^{1/2} .
\end{align}
\end{itemize}
\end{lemma}
The proof of the lemma is postponed at the end of the subsection.\\
  We are now ready to finish the proof of  Proposition~\ref{lem:numeden}.
Substituting \eqref{J_mR} in \eqref{Jm2} we obtain
\begin{align}
\label{JmHF2}
	\la\left\{J^{(m)}, H_F\right\}^2\ra &  = 
		\sum_{j,i=1}^{N}
\left[\la H_i^{(m)} H_j^{(m)} \ra - \la H_i^{(m)} \ra \la H_j^{(m)} \ra\right] .
\end{align}
Therefore estimating $\la\left\{J^{(m)}, H_F\right\}^2\ra$ is equivalent to estimate the  correlations between  $H_i^{(m)}$ and $H_j^{(m)}$. 
Exploiting Lemma \ref{cov} and observing that if $\td(i,j) > m$ then $H_i^{(m)}$ and $H_j^{(m)}$ have disjoint supports (see  Lemma \ref{lem:J_mR} $(i)$ and \eqref{disj.supp}), we get that  there are 
positive constants that for convenience we still call  $C$ and  $C'$, such that  $\forall N, \beta$ large enough
\begin{align}
\label{est1}
&\abs{\la H_i^{(m)} H_j^{(m)}\ra - \la H_i^{(m)} \ra \la H_j^{(m)} \ra} \leq C \left(\frac{(\tb- 1)^2}{\beta^4}  + \frac{C'}{\beta^5} \right) , \qquad \forall i, j , \\
\label{est2}
&\abs{\la H_i^{(m)} H_j^{(m)}\ra - \la H_i^{(m)} \ra \la H_j^{(m)} \ra}  \leq \frac{C}{N} \left(\frac{(\tb - 1)^2}{\beta^4}  + \frac{C'}{\beta^5} \right) , \qquad \forall i, j \colon \td(i, j) > m\,.
\end{align}
From \eqref{JmHF2} we split the sum in two terms:
\begin{align*}
	\la\left\{J^{(m)}, H_F\right\}^2\ra &  = 
  \sum_{\td(i,j) \leq m}\left[
\la H_i^{(m)} H_j^{(m)} \ra - \la H_i^{(m)} \ra \la H_j^{(m)} \ra \right]\\
& \quad + \sum_{\td(i,j)>m}\left[
\la H_i^{(m)} H_j^{(m)} \ra - \la H_i^{(m)} \ra \la H_j^{(m)} \ra\right] \, .
\end{align*}
We now apply estimates \eqref{est1}, \eqref{est2} to get 
\begin{align}
\nonumber
\la\left\{J^{(m)}, H_F\right\}^2\ra &  \leq N  C\left(\frac{(\tb - 1)^2}{\beta^4}  + \frac{C'}{\beta^5} \right)  + N^2 \frac{\wt C}{N} \left(\frac{(\tb-1)^2}{\beta^4}  + \frac{C'}{\beta^5} \right) \\&\leq N C_1 \left(\frac{(\tb - 1)^2}{\beta^4}  + \frac{C_2}{\beta^5} \right) \,
\end{align}
for some positive constants $C_1$ and $C_2$.
\end{proof}

\subsubsection{Proof of Lemma \ref{lem:J_mR}}

We start by  writing  the Poisson bracket $\{J^{(m)}, H_F- H_T\}$ in an explicit form.
First we observe that  for any $1 \leq m  < N$ one has from \eqref{integrals}
\begin{equation}
\label{Jm_der_pj}
\frac{\partial J^{(m)}}{\partial p_{j-1}} = \frac{1}{m} \frac{\partial \text{Tr}\left(L^m\right)}{\partial p_{j-1}} =  \text{Tr}\left( L^{m-1}\frac{\partial L}{\partial p_{j-1}}\right) = - [L^{m-1}]_{j,j}  = -h_{j}^{(m-1)} , 
\end{equation}
 for all $j=1, \ldots, N$.  {In the above relation $ h_{j}^{(m-1)}$ is the generating function of the $m-1$ Toda integral defined in \eqref{eq:general_super_motzkin}.}

Next we observe that 
\begin{equation}
\label{def:R}
 H_F(\bp,\bq) -   H_T(\bp,\bq)  =   \sum_{j=0}^{N-1} R(q_{j+1}-q_j), \qquad R(x) :=  \frac{x^2}{2}- \frac{x^3}{6} + \tb \frac{x^4}{24} - (e^{-x} -1 + x) .
\end{equation}
This implies also that
\begin{equation}
\label{eq:Pbra}
\begin{split}
\left\{J^{(m)}, H_F - H_T\right\} &=  \sum_{j=1}^{N} h_{j}^{(m-1)}(\bp,\br) \,  \left( R'(r_{j-2}) - R'(r_{j-1}) \right)\\
&= \sum_{j=1}^{N}( h_{j}^{(m-1)}(\bp,\br)-h_{j}^{(m-1)}(\mathbf{0},\mathbf{0})) \,  \left( R'(r_{j-2}) - R'(r_{j-1}) \right)
\end{split}\end{equation}
where, to obtain the second identity, we are using  that  $h_{j}^{(m-1)}(\mathbf{0},\mathbf{0})$ is  by \eqref{J_cyclic} and \eqref{hj2}  a constant independent from $j$ and the second term in the last relation is a telescopic sum.
 Define
 \begin{equation}
 \label{def:Hj}
 H_j^{(m)}:=  \left( h_{j}^{(m-1)}(\bp,\br) -  h_{j}^{(m-1)}(\mathbf{0},\mathbf{0})\right) \left( R'(r_{j-2}) - R'(r_{j-1}) \right), \quad j=1,\ldots,N ;
 \end{equation}
then   item (i)  of  Lemma~\ref{lem:J_mR}  follows  because clearly  $H_j^{(m)}=S_{j-1}H_1^{(m)}$. Furthermore, since $h_j^{(m-1)}$ has diameter bounded by $m-1$, the same property applies to  $H_j^{(m)}$.

%
%

To prove item $(ii)$ we start by expanding $R'(r_{j-1}) - R'(r_j)$ in Taylor series with integral remainder. Since
$$
R'(x) = \frac{(\tb - 1)}{6} x^3 + \frac{x^4}{6}\int_{0}^1 e^{-\xi x}(1-\xi)^3\di \xi \, ,
$$
we get that 
{
\begin{align}
\label{R'.exp}
& R'(r_{j-2}) - R'(r_{j-1})    = \frac{(\tb - 1)}{6}S_{j-1}\psi_3(\br) + \frac{1}{6}S_{j-1}\psi_4(\br)\, , 
\end{align}
where explicitly
\begin{align}
\label{psi3}
&\psi_3(\br) := r_{N-1}^3 - r_{0}^3 \, ,\\
\label{psi4}
&\psi_4(\br) := r_{N-1}^4\int_{0}^1 e^{-\xi r_{N-1}}(1-\xi)^3\di \xi -  r_{0}^4\int_{0}^1 e^{-\xi r_{0}}(1-\xi)^3\di \xi \,.
\end{align}
Combining \eqref{hj2} with \eqref{R'.exp} we  rewrite $H_j^{(m)}$ in \eqref{def:Hj} in the form
}
\begin{equation*}
	H^{(m)}_j  = \frac{S_{j-1}}{6}\left((\vf^{(m)}_{1}+\vf^{(m)}_{2}+\vf^{(m)}_{\geq3})\Big( (\tb - 1) \psi_3  + \psi_4 \Big) \right) \,,
		\end{equation*}	
		where $\vf^{(m)}_{j}$, $ j=0,1,2$, are defined in \eqref{hj2}.
	Thus the squared $L^2$ norm of $H_j$  is given by (we suppress the superscript to simplify the notation)
	\begin{align}
	\label{Hj.norm1}
	\norm{H_j}_\theta^2 & = \frac{1}{36}(\tb -1 )^2\Big(\sum_{\ell, \ell' = 1}^{2} \,  \la \psi_3^2 \, \vf_{\ell} \,  \vf_{\ell'}\ra_{\theta} +\,  \la \psi_3^2 \vf_{\geq 3}\left( \vf_{\geq 3} + 2\vf_{1} + 2\vf_{2}\right)\ra_\theta\Big) \\ 
	\label{Hj.norm2}
	&+\frac{\tb -  1 }{18} \Big(\sum_{\ell, \ell' = 1}^{2}\,  \la \psi_3 \psi_4  \, \vf_{\ell} \,  \vf_{\ell'} \ra_\theta  + \la \psi_3 \psi_4  \vf_{\geq 3}\left( \vf_{\geq 3} + 2\vf_{1} + 2\vf_{2}\right) \ra_\theta\Big)\\
	\label{Hj.norm3}
	& + \frac{1}{36}\sum_{\ell, \ell' = 1}^{2}\la \psi_4^2 \, \vf_{\ell} \,  \vf_{\ell'} \ra_\theta +\frac{1}{36}\la \psi_4^2 \, \vf_{\geq 3}\left( \vf_{\geq 3} + 2\vf_{1} + 2\vf_{2}\right)\ra_\theta \, .
\end{align}	
Consider now the terms in \eqref{Hj.norm1}; 
by \eqref{Jm2.struct},\eqref{reminder1}  and \eqref{psi3},  we know that each element is  a linear combinations of functions of the form
\begin{equation}
\label{Hj.norm4}
\bp^\bk \, \br^\bl \,  \left(\int_{0}^1  e^{- \xi \bn^\intercal\br  }(1-\xi)^2\di \xi\right)^d  \left(\int_{0}^1  e^{- \xi \bs^\intercal\br  }(1-\xi)^3\di \xi\right)^{d'}\, ,
\end{equation}
with $|\bk| + |\bl | \geq 6 + \ell + \ell' \geq  8$, $d,d'\in \{0,1,2\}$. The number of these functions and their  coefficients are independent from $N$ (see Lemma \ref{JM.STRUC}). By Lemma \ref{LEM:ORDINE_BETA} it follows that there exists a constant $C>0$,  depending only on $m$, such that
\begin{equation}
\abs{\mbox{r.h.s. of \eqref{Hj.norm1}}} \leq  C \, (\tb - 1)^2 \,  \beta^{-4} . 
\label{Hj.norm5}
\end{equation}
Analogously, line \eqref{Hj.norm2} is a linear combination of functions of the form \eqref{Hj.norm4} with $|\bk| + |\bl | \geq 9$, $d,d'\in \{0,1,2\}$. 
Applying Lemma \ref{LEM:ORDINE_BETA} we get the estimate
\begin{equation}
\abs{\eqref{Hj.norm2}} \leq C' \, |\tb -1| \,  \beta^{-9/2} \, 
\label{Hj.norm6}
\end{equation}
for some constant $C'>0$. In a similar way
 the expression   \eqref{Hj.norm3} is a linear combination of functions of the form \eqref{Hj.norm4} with $|\bk| + |\bl | \geq 10$, $d,d'\in \{0,1,2\}$. 
Applying Lemma \ref{LEM:ORDINE_BETA} we get the estimate
\begin{equation}
\abs{\eqref{Hj.norm3}} \leq C'' \,  \beta^{-5} \, ,
\label{Hj.norm7}
\end{equation}
for some constant $C''>0$.
Combining  \eqref{Hj.norm5},\eqref{Hj.norm6} and \eqref{Hj.norm7} we obtain  estimate  \eqref{avgHi} for $\norm{H_j}_\theta$.
 The estimate  for $\norm{ H_i^{(m)} H_j^{(m)} }_\theta $ can be  proved in an analogous way.

\qed

\subsection{Lower  bounds on the variance of $m$-admissible functions}
\label{sec:lbvar}
From now on  we consider $\cM$ endowed with either the  FPUT or the  Toda Gibbs measure;  the following result holds in both cases.
\begin{proposition}
\label{prop:ICTP}
Fix $m \in \N$,   let $G$ be an  $m$-admissible function of the first or  second kind (see Definition \ref{def:Gad}). There exist $N_0, \beta_0, C >0$ such that for any $N > N_0$, $\beta > \beta_0$, one has
\begin{align}
\label{eq:var_G1}
&\sigma^2_{G} =\la G^2\ra-\la G\ra^2 \geq  C\frac{N}{\beta^2} . 
\end{align}
\end{proposition}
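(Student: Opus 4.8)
The plan is to obtain the lower bound by first replacing the true Gibbs average $\la \cdot \ra$ with the product measure average $\la \cdot \ra_\theta$, for which the variables $p_j, r_j$ are independent, then computing the variance explicitly. By Remark \ref{rem:Gab.struct} an $m$-admissible function $G$ of either kind is a quadratic form represented by a circulant symmetric matrix determined by the $m$-admissible vector $\bg$; say $G = \bp^\intercal {\mathcal G} \br$ (first kind) or $G = \bp^\intercal {\mathcal G} \bp + \br^\intercal {\mathcal G} \br$ (second kind), with ${\mathcal G}_{jk} = g_{(j-k)\bmod N}$. The key structural fact I would exploit is that $G$ is a sum over $j$ of terms, each of diameter bounded by a constant depending only on $m$ (since only $\wt m + 1$ entries of $\bg$ are nonzero), so the support sizes are uniformly bounded in $N$ and Lemma \ref{cov} and Lemma \ref{magic} apply term-by-term.

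First I would compute $\sigma^2_G$ under the independent measure $\di \mu_{F,\theta}$. Writing $G = \sum_{j,l} g_l\, p_j r_{j+l}$ (first kind), independence and the vanishing of odd moments (Lemma \ref{LEM:ORDINE_BETA}(i)) kill all cross terms except the diagonal ones, leaving essentially $\sigma^2_{G,\theta} = \sum_{j,l} g_l^2 \la p_j^2\ra_\theta \la r_{j+l}^2\ra_\theta$ plus a bounded number of correlated contributions; each factor $\la p_j^2\ra_\theta$ and $\la r_{j+l}^2\ra_\theta$ is of order $\beta^{-1}$ and strictly positive by Lemma \ref{LEM:ORDINE_BETA}(ii). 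Summing the $N$ translates and using $K^{-1} \leq \sum_l |y_l| \leq K$ to keep the coefficient vector bounded away from zero, this produces the lower bound $\sigma^2_{G,\theta} \geq C N \beta^{-2}$ with $C$ independent of $N$. For the second kind the same computation applies with $\la p_j^2\ra_\theta \la p_{j+l}^2\ra_\theta$ and $\la r_j^2\ra_\theta \la r_{j+l}^2\ra_\theta$ terms, again each of order $\beta^{-2}$.

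The remaining step is to transfer this lower bound from $\di \mu_{F,\theta}$ to the true measure $\di \mu_F$. I would write $\sigma^2_G = \la G^2\ra - \la G\ra^2$ and estimate the difference $|\sigma^2_G - \sigma^2_{G,\theta}|$ using Lemma \ref{magic} and Lemma \ref{cov}: since $G^2$ and $G$ decompose into sums of $O(N)$ terms each of uniformly bounded support, the per-term correction from swapping measures is $O(K/N)$ relative, and careful bookkeeping shows the total error is of the same order $N\beta^{-2}$ but with a smaller constant, or lower order in $\beta$. Thus for $N$ and $\beta$ large enough the error is dominated by the main term and the lower bound $\sigma^2_G \geq C N \beta^{-2}$ survives.

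The main obstacle I anticipate is the error transfer in the last step. The subtlety is that Lemma \ref{magic} controls $|\la f\ra - \la f\ra_\theta|$ by $\tfrac{K}{N}$ times the $\theta$-standard deviation of $f$, and when applied to $f = G^2$ one must check that $\norm{G^2}_\theta$ does not grow fast enough in $N$ to swamp the $\tfrac{1}{N}$ gain — i.e. that the relative error stays $o(1)$ rather than $O(1)$. Because $G$ itself has norm growing like $\sqrt{N}\,\beta^{-1}$, the quantity $\la G^2\ra$ is of order $N\beta^{-2}$, and I would need to verify that the measure-swap error is genuinely subleading (either by a factor $1/N$ or by an extra power of $\beta^{-1}$), rather than merely comparable. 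Establishing this cleanly, uniformly in $N$, is where the real work lies; the explicit moment computation under $\di \mu_{\theta}$ is routine by comparison.
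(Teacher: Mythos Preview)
Your proposal has a genuine gap, and you correctly locate it yourself: the measure-swap step does not produce a subleading error. When you decompose $G^2$ as $\sum_{j,k} h_j h_k$ and apply Lemma~\ref{cov} to the $O(N^2)$ off-diagonal pairs with disjoint supports, each pair contributes an error of order $\tfrac{1}{N}\beta^{-2}$, so the total is $O(N\beta^{-2})$ --- exactly the order of the main term. There is no mechanism forcing the implicit constant to be small, so the argument yields only $\sigma^2_G \geq (c - C)\tfrac{N}{\beta^2}$ with constants you do not control, which is useless as a lower bound. Your hope that ``careful bookkeeping'' gives a smaller constant or an extra $\beta^{-1}$ is unfounded: Lemma~\ref{magic} is a one-sided error bound with an unspecified constant and carries no sign information.

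The paper sidesteps this entirely by never passing to the product measure for the variance itself. The key observations are that under the true constrained measure the $\bp$- and $\br$-variables are still independent of each other (the Gibbs density factors), and that after the Hartley transform the momenta $\wh p_1,\ldots,\wh p_{N-1}$ are \emph{exactly} i.i.d.\ $\cN(0,\beta^{-1})$, the constraint $\sum p_j=0$ becoming simply $\wh p_0=0$. Diagonalising the circulant matrix $\cG_1$ by $\cH$ then reduces $\sigma^2_{G_1}$ exactly to $\tfrac{1}{\beta}\la \br^\intercal \cG_1^2 \br\ra$. Cyclic symmetry together with the constraint identity $\la(\sum_j r_j)^2\ra=0$ gives $\la r_i r_j\ra=-\tfrac{1}{N-1}\la r_0^2\ra$ for $i\neq j$, which collapses everything to $N\la r_0^2\ra$ times a fixed combination of the entries of $\bg\star\bg$. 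Only at this final stage --- for the single scalar $\la r_0^2\ra$, which has support of size one --- is Lemma~\ref{magic} invoked to get $\la r_0^2\ra\geq c\beta^{-1}$; the $m$-admissibility of $\bg$ then ensures $(\bg\star\bg)_0\geq c_m$ while the remaining contributions are $O(1/N)$ relative. In short, the paper avoids the $N^2$-term bookkeeping by an exact diagonalisation that respects the constraint, reducing the problem to a single bounded-support moment where measure approximation costs nothing.
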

\begin{proof}
We first prove  \eqref{eq:var_G1} when $G = G_1= \bp^\intercal \cG_1 \br$ where $\cG_1$ is a circulant, symmetric matrix represented by the $m$-admissible vector $\ba \in \R^N$.    We now make the change of coordinates $(\bp, \br) = (\cH \wh \bp, \cH \wh \br)$ which  {diagonalizes  the matrix $\cG_1$}  (see \eqref{dht.circulant}), getting 
\begin{equation*}
G_1(\wh \bp,\wh \br) = \sqrt{N}\sum_{j=0}^{N-1}\wh g_j \wh p_j\wh r_j.
\end{equation*}
So we have just to compute
\begin{align}
\notag
\sigma_{G_1}^2  & = N \la \sum_{i,j = 0}^{N-1} \wh g_j\wh g_i \wh p_j \wh p_i \wh r_i \wh r_j \ra - N\left(\la \sum_{j=0}^{N-1}\wh g_j \wh p_j\wh r_j\ra\right)^2 \\
\label{eq:first_step_G1}
& =    N  \sum_{i,j = 0}^{N-1} \wh g_j\wh g_i \la \wh p_j \wh p_i\ra  \la \wh r_i \wh r_j \ra - N \left(\sum_{j=0}^{N-1}\wh g_j \la \wh p_j \ra \la \wh r_j\ra\right)^2 \, ,
\end{align}
where we used  that $\wh p_k, \wh r_j$ are  random variables independent from each other.

We notice that $\wh p_1, \wh p_2,\ldots, \wh p_{N-1}$ are i.i.d. Gaussian random variable with variance $\beta^{-1}$,   $\wh p_0 = 0$ (see \eqref{m_qp}), so  that  we have $\la \wh p_j \ra = 0$ and   $\la \wh p_j\wh p_i\ra = \frac{\delta_{i,j}}{\beta} \, \, i,j = 1,\ldots, N-1$ (remark that this holds true both for the  FPUT and Toda's potentials as the $p$-variables have the same distributions).\\
As a consequence,   \eqref{eq:first_step_G1} becomes:
\begin{equation}
\label{eq:third_step_G1}
\sigma^2_{G_1} = \frac{N}{\beta}  \sum_{j = 1}^{N-1} \wh{g}_j^2 \la \wh r_j^2\ra  = \frac{1}{\beta} 
\la \wh \br^\intercal \cH \cG_1^2 \cH \wh  \br \ra =\frac{1}{\beta}  \la \br^\intercal \cG_1^2  \br \ra . 
\end{equation}
Since  $\cG_1$ is circulant  symmetric matrix  so is $\cG_1^2$ and its representing vector is  $\bd := \bg \star \bg $. \\
Next we  remark that   the identity  $\la \left(\sum_{j=0}^{N-1} r_j\right)^2\ra = 0$  implies 
\begin{equation*}
\la r_jr_i\ra = - \frac{1}{N-1}\la r_0^2 \ra , \quad \forall i\ne j \, .
\end{equation*}
Applying this property to \eqref{eq:third_step_G1} we get
\begin{align}
\nonumber
\sigma^2_{G_1} & = \frac{1}{\beta}
 \sum_{j,l=0}^{N-1} 
 d_l \, \la  r_j r_{j+l }\ra  = \frac{N}{\beta} \la r_0^2\ra d_0 + \frac{1}{\beta}\sum_{j, l \atop l \neq 0}^{N-1} d_l \la r_j r_{j+l} \ra 
\\
\label{sigma_est}
& = \frac{1}{\beta} \la r_0^2\ra \left(Nd_0 - \frac{N}{N-1}\sum_{l \neq 0}^{N}d_l \right) \,.
\end{align}
By Lemmas \ref{magic} and \ref{LEM:ORDINE_BETA} we have that, for $N$ sufficiently large, $\la r_0^2 \ra \geq c \beta^{-1}$. 
Finally, since the vectors $\bg,\bd$ are $m$-admissible and $2m$-admissible respectively we have that
\begin{equation}
\label{d0}
d_0 = (\bg \star \bg)_0 = \sum_{j=0}^{\wt m} g_j^2 \geq c_m , \qquad 
\sum_{l \neq 0}^{N-1}d_l  = \sum_{l \neq 0}^{2 \wt m}d_l  \leq C_m,
\end{equation}
for some constants $c_m>0$ and $C_m>0$. Plugging \eqref{d0} into \eqref{sigma_est}
we obtain \eqref{eq:var_G1} for the  case of 
 $m$-admissible functions of the first kind.

For the  case  of admissible functions of the second kind, one has  $G_2 = \bp^\intercal \cG_2 \bp + \br^\intercal \cG_2 \br$ with  $\cG_2$  circulant, symmetric and  represented by an  $m$-admissible vector. 
Since  $\bp$ and $ \br$ are independent   random variables one gets
$$\sigma_{G_2} =  \sigma_{\bp^\intercal \cG_2 \bp + \br^\intercal \cG_2 \br} = \sigma_{\bp^\intercal \cG_2 \bp}  +\sigma_{ \br^\intercal \cG_2 \br}\geq \sigma_{\bp^\intercal \cG_2 \bp} . $$
Then arguing as in the previous case one gets \eqref{eq:var_G1}.
\end{proof}
By applying Proposition \ref{prop:ICTP} to  the quantity $J^{(m)}_2$  that  is an $m$-admissible function of the first or second kind depending on the parity of $m$,
  we obtain the following result.

\begin{corollary}
\label{corollary_quad}
The quadratic part $J_2^{(m)}$ of the  Taylor expansion of the  Toda integral $J^{(m)}$  near $(\bp, \br ) = (0, 0) $  satisfies
\begin{equation}
	\label{rem:J2_admissible}
	\sigma^2_{J^{(m)}_2}\geq C \frac{N}{\beta^2}\,,
\end{equation}
for some constant $C>0$.
\end{corollary}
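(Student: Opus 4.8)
The plan is to recognize $J^{(m)}_2$ as an $m$-admissible function of the appropriate kind and then invoke Proposition~\ref{prop:ICTP} directly. By \eqref{Jm2.struct} in Lemma~\ref{JM.STRUC}, the quadratic part has the explicit shape $J^{(m)}_2 = \bp^\intercal B^{(m)} \br$ when $m$ is odd, and $J^{(m)}_2 = \bp^\intercal A^{(m)} \bp + \br^\intercal A^{(m)} \br$ when $m$ is even, where $A^{(m)}$ and $B^{(m)}$ are circulant symmetric matrices whose representing vectors $\ba^{(m)}, \bb^{(m)}$ are $m$-admissible.

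First I would match these two cases against Definition~\ref{def:Gad}. Using Remark~\ref{rem:Gab.struct}, an $m$-admissible function of the first kind is exactly $\bp^\intercal \cG_1 \br$ with $\cG_1$ circulant symmetric and represented by an $m$-admissible vector; hence for $m$ odd the identification $J^{(m)}_2 = G_1$ with representing vector $\bb^{(m)}$ is immediate. Likewise an $m$-admissible function of the second kind is $\bp^\intercal \cG_2 \bp + \br^\intercal \cG_2 \br$, so for $m$ even one reads off $J^{(m)}_2 = G_2$ with representing vector $\ba^{(m)}$. The positivity statement \eqref{ab} guarantees in both cases that the representing vector is genuinely nonzero with absolute entry sum bounded above and below uniformly in $N$, which is precisely the requirement in Definition~\ref{def:ad}.

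Having identified $J^{(m)}_2$ as an $m$-admissible function of the first kind (for $m$ odd) or of the second kind (for $m$ even), I would then apply Proposition~\ref{prop:ICTP}, which yields $\sigma^2_{J^{(m)}_2} \geq C N \beta^{-2}$ for all $N, \beta$ large enough; taking the smaller of the two constants coming from the two parities gives the uniform bound \eqref{rem:J2_admissible}. I expect no genuine obstacle in this argument: all the analytic content — the Gaussian evaluation of the variance and the lower bound $\la r_0^2 \ra \geq c \beta^{-1}$ — has already been established in the proof of Proposition~\ref{prop:ICTP}. The only point demanding care is the bookkeeping that the parity of $m$ selects the correct kind of admissible function, as dictated by \eqref{Jm2.struct}.
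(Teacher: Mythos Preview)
Your proposal is correct and follows exactly the paper's own approach: recognize $J^{(m)}_2$ via Lemma~\ref{JM.STRUC} as an $m$-admissible function of the first or second kind according to the parity of $m$, and then invoke Proposition~\ref{prop:ICTP}. The paper's proof is a one-line remark to this effect; your version just spells out the bookkeeping more carefully.
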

In a similar way we obtain a lower bound on the reminder  $J^{(m)}_{\geq 3}$ of the  Taylor expansion of the Toda integral $J^{(m)}$  near $\bp=0$ and $\br=0$.
\begin{lemma}
\label{cor:stima_coda}
Fix $m \in \N$. There exist $N_0, \beta_0, C >0$  such that for any $N > N_0$, $\beta > \beta_0$, one has
\begin{equation}
\label{est:Jm3}
\sigma^2_{J^{(m)}_{\geq 3}} \leq C \frac{ N}{\beta^3} . 
\end{equation}
\end{lemma}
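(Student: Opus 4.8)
The plan is to mimic the proof of Proposition~\ref{lem:numeden}, simply replacing the Poisson-bracket density by the degree-$\geq 3$ remainder. By the last item of Lemma~\ref{JM.STRUC} the reminder $J^{(m)}_{\geq 3}$ is cyclic, generated by $\frac{1}{m}\vf^{(m)}_{\geq 3}$, so I would first write
\begin{equation*}
J^{(m)}_{\geq 3} = \frac1m \sum_{j=1}^N g_j , \qquad g_j := S_{j-1}\vf^{(m)}_{\geq 3} ,
\end{equation*}
where, by \eqref{reminder1} together with Lemma~\ref{rem:diam.hj}, each $g_j$ has diameter at most $m$ and is a polynomial (dressed by the integral remainders) whose lowest total degree in $(\bp,\br)$ is $3$. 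Expanding the variance then gives
\begin{equation*}
\sigma^2_{J^{(m)}_{\geq 3}} = \frac{1}{m^2}\sum_{i,j=1}^N \Big(\la g_i g_j\ra - \la g_i\ra \la g_j\ra\Big) ,
\end{equation*}
and the task reduces to bounding this double sum of covariances exactly as in \eqref{JmHF2}.

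The second step is the pair of norm estimates. Since $\vf^{(m)}_{\geq 3}$ has degree $\geq 3$, the function $g_j^2$ has degree $\geq 6$ and $g_i^2 g_j^2$ degree $\geq 12$. Applying Lemma~\ref{LEM:ORDINE_BETA} monomial by monomial, the number of monomials and the size of their coefficients being independent of $N$ by \eqref{card.Am} and Lemma~\ref{JM.STRUC}, I expect to obtain
\begin{equation*}
\norm{g_j}_\theta \leq C \beta^{-3/2} , \qquad \norm{g_i g_j}_\theta \leq C \beta^{-3} ,
\end{equation*}
uniformly in $i,j$ and in $N$. These are the analogues of \eqref{avgHi}, now with the exponent $3$ in place of $4$.

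The third step splits the sum and counts terms. I would apply Lemma~\ref{cov} with $K=m$. For $\td(i,j) > m$ the supports of $g_i$ and $g_j$ are disjoint by \eqref{disj.supp}, so \eqref{cov2} gives a covariance bounded by $\frac{Cm}{N}\big(\norm{g_i}_\theta\norm{g_j}_\theta + \norm{g_i g_j}_\theta\big) = O(\beta^{-3}/N)$; there are $O(N^2)$ such pairs, contributing $O(N\beta^{-3})$. For $\td(i,j)\leq m$, \eqref{cov1} bounds the covariance by $2\norm{g_i}_\theta\norm{g_j}_\theta$ plus the smaller $O(1/N)$ correction, i.e. $O(\beta^{-3})$, and there are only $O(N)$ such pairs, again contributing $O(N\beta^{-3})$. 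Summing the two regimes and dividing by $m^2$ yields \eqref{est:Jm3}.

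The main obstacle is the estimate on $\norm{g_i g_j}_\theta$, because the products $g_i g_j$ contain powers of the integral remainders $\int_0^1 e^{-\xi \bn^\intercal \br}(1-\xi)^2\,\di\xi$ appearing in \eqref{reminder1}; when the supports of $g_i$ and $g_j$ overlap these can occur with powers exceeding the range $d,d'\in\{0,1,2\}$ covered by the statement of Lemma~\ref{LEM:ORDINE_BETA}. I expect to resolve this either by a routine extension of Lemma~\ref{LEM:ORDINE_BETA} to arbitrary powers $d$ (its proof in Appendix~\ref{app:process} goes through unchanged, since positivity of $\beta^{-1}$-Gaussian moments and boundedness of the remainders are all that is used), or by a H\"older inequality reducing the higher powers to the allowed ones. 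Once the two norm bounds are in hand, the two-regime counting is identical to that of Proposition~\ref{lem:numeden}, with degree $3$ replacing $4$ and hence $\beta^{-3}$ replacing $\beta^{-4}$.
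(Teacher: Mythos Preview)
Your proposal is correct and follows essentially the same route as the paper's proof: write $J^{(m)}_{\geq 3}$ as a cyclic sum of shifted copies of $\frac{1}{m}\vf^{(m)}_{\geq 3}$, expand the variance as a double sum of covariances, obtain the $\beta^{-3/2}$ and $\beta^{-3}$ norm bounds via Lemma~\ref{LEM:ORDINE_BETA}, and split near/far pairs with Lemma~\ref{cov}. The technical point you flag about higher powers of the integral remainders in $\norm{g_ig_j}_\theta$ is in fact glossed over in the paper (which just refers back to the argument of Lemma~\ref{lem:J_mR}); your proposed fix via a routine extension of Lemma~\ref{LEM:ORDINE_BETA} is exactly what is needed, and the proof in Appendix~\ref{app:process} indeed generalizes to arbitrary powers since it only uses the crude pointwise bound on $\int_0^1 e^{-\xi\bn^\intercal\br}(1-\xi)^v\,\di\xi$.
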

\begin{proof}
Recall from Lemma \ref{JM.STRUC} that $J^{(m)}_{\geq 3}$ is a cyclic function generated by   $\wt h_1^{(m)}:=\frac{1}{m} \vf^{(m)}_{\geq 3}$. Thus, denoting  $h^{(m)}_j := S_{j-1} \wt h^{(m)}_1$, we have 
$
J^{(m)}_{\geq 3} = \sum_{j = 1}^N \wt h^{(m)}_j $
and its variance is given by
\begin{equation}
\label{var.Jm3}
\sigma^2_{J_m^{\geq 3}} = \sum_{i,j=1}^N \la \wt h^{(m)}_i\wt h^{(m)}_j\ra - \la \wt h^{(m)}_i \ra \la \wt h^{(m)}_j \ra . 
\end{equation}  
We  can bound the correlations in \eqref{var.Jm3}  exploiting  Lemma \ref{cov}, provide we estimate first  the $L^2(\di \mu_{F,\theta})$ and $L^2(\di \mu_{T,\theta})$ norms of
$\wt h^{(m)}_i $ and $\wt h^{(m)}_i  \wt h^{(m)}_j$. 
Proceeding with the same arguments as in Lemma  \ref{lem:J_mR}, one proves that there exists $\tilde{C}>0$ such that  for any $N > N_0$, $\beta > \beta_0$, 
\begin{equation}
  \label{avgHi1}
\norm{\wt h^{(m)}_i}_\theta \leq \tilde{C} \beta^{-3/2}, \qquad
\norm{\wt h^{(m)}_i \, \wt h^{(m)}_j}_\theta \leq \tilde{C} \beta^{-3}.
\end{equation}
By Lemma \ref{JM.STRUC}, the function  $\wt h_1^{(m)}$ has diameter at most $m$, so in particular  if  $\td(i,j) > m$, the functions  $\wt h_i^{(m)}$ and $\wt h_j^{(m)}$ have disjoint supports (recall  \eqref{disj.supp}).\\
We are now in position to apply Lemma \ref{cov} and obtain 
\begin{align}
\label{est11}
&\abs{\la \wt h^{(m)}_i \wt h^{(m)}_j\ra - \la \wt h^{(m)}_i \ra \la \wt h^{(m)}_j \ra} \leq \frac{C'}{\beta^3}   , \qquad \forall i,j \\
\label{est22}
&\abs{\la \wt h^{(m)}_i\wt h^{(m)}_j\ra - \la \wt h^{(m)}_i \ra \la \wt h^{(m)}_j \ra}  \leq \frac{C'}{N\beta^3} , \qquad \forall i, j \colon \td(i,j) > m,
\end{align}
for some constant $C'>0$.
Thus we split the variance in \eqref{var.Jm3} in two parts
\begin{equation*}
\sigma^2_{J^{(m)}_{\geq 3}} = \sum_{\td(i,j)\leq m} \la \wt h^{(m)}_i\wt h^{(m)}_j\ra - \la \wt h^{(m)}_i \ra \la \wt h^{(m)}_j \ra + \sum_{\td(i,j) > m} \la \wt h^{(m)}_i\wt h^{(m)}_j\ra - \la \wt h^{(m)}_i \ra \la \wt h^{(m)}_j \ra 
\end{equation*}  
and apply estimates \eqref{est11}, \eqref{est22} to get  \eqref{est:Jm3}.
\end{proof}
Combining Corollary~\ref{corollary_quad} and Lemma~\ref{cor:stima_coda}  we arrive to the following crucial proposition.
\begin{proposition}
	\label{prop:sJm}
	Fix $m\in \N$.  There exist $N_0, \beta_0, C >0$ such that for any $N > N_0$, $\beta > \beta_0$, one has
	\begin{equation}
	\label{est:sJm}
	\sigma^2_{J^{(m)}}\geq C \frac{N}{\beta^2} .
	\end{equation}
\end{proposition}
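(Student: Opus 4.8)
The plan is to combine the lower bound on the quadratic part $J^{(m)}_2$ coming from Corollary~\ref{corollary_quad} with the upper bound on the higher order remainder $J^{(m)}_{\geq 3}$ coming from Lemma~\ref{cor:stima_coda}, using the elementary fact that the standard deviation is a seminorm. Concretely, recall from the Taylor expansion \eqref{Jm.exp} that
\begin{equation*}
J^{(m)} = J^{(m)}_0 + J^{(m)}_2 + J^{(m)}_{\geq 3},
\end{equation*}
where $J^{(m)}_0$ is a constant (possibly zero) and hence does not contribute to the variance. Since $\sigma_f = \norm{f - \la f \ra}$ is the $L^2(\di\mu_F)$ norm of the centered function $f - \la f \ra$, and since $f \mapsto f - \la f \ra$ is linear, the map $f \mapsto \sigma_f$ satisfies the triangle inequality. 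In particular it satisfies the reverse triangle inequality, which I would apply to the decomposition $J^{(m)} - J^{(m)}_0 = J^{(m)}_2 + J^{(m)}_{\geq 3}$ to get
\begin{equation*}
\sigma_{J^{(m)}} = \sigma_{J^{(m)}_2 + J^{(m)}_{\geq 3}} \geq \sigma_{J^{(m)}_2} - \sigma_{J^{(m)}_{\geq 3}}.
\end{equation*}

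Next I would insert the two available estimates. Corollary~\ref{corollary_quad} gives $\sigma_{J^{(m)}_2} \geq \sqrt{C_2}\,\frac{\sqrt{N}}{\beta}$ for some $C_2 > 0$, while Lemma~\ref{cor:stima_coda} gives $\sigma_{J^{(m)}_{\geq 3}} \leq \sqrt{C_3}\,\frac{\sqrt{N}}{\beta^{3/2}}$ for some $C_3 > 0$. Substituting, I obtain
\begin{equation*}
\sigma_{J^{(m)}} \geq \frac{\sqrt{N}}{\beta}\left( \sqrt{C_2} - \frac{\sqrt{C_3}}{\sqrt{\beta}} \right).
\end{equation*}
The key point is that the quadratic part scales like $\sqrt{N}/\beta$ whereas the remainder is smaller by a factor $\beta^{-1/2}$, so the remainder cannot cancel the leading term once $\beta$ is large. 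Thus for $\beta \geq \beta_0 := \max\{\beta_0', 4C_3/C_2\}$ (where $\beta_0'$ is large enough that both earlier results apply) one has $\sqrt{C_3}/\sqrt{\beta} \leq \sqrt{C_2}/2$, whence
\begin{equation*}
\sigma_{J^{(m)}} \geq \frac{\sqrt{C_2}}{2}\,\frac{\sqrt{N}}{\beta},
\end{equation*}
and squaring yields $\sigma^2_{J^{(m)}} \geq \frac{C_2}{4}\,\frac{N}{\beta^2}$, which is the claim with $C = C_2/4$.

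I do not expect any genuine obstacle here: all the hard analytic work (the uniform-in-$N$ lower bound on $\sigma_{J^{(m)}_2}$ via the $m$-admissibility of the representing vectors, and the correlation-decay estimate bounding $\sigma_{J^{(m)}_{\geq 3}}$) has already been carried out in Corollary~\ref{corollary_quad} and Lemma~\ref{cor:stima_coda}. The only point requiring a little care is the legitimacy of the reverse triangle inequality, which rests on $\sigma_{(\cdot)}$ being a seminorm; this is immediate once one writes it as $\sigma_f = \norm{f - \la f \ra}$ and uses linearity of the centering operation together with Minkowski's inequality for $\norm{\cdot}$. The remaining content is just choosing $\beta_0$ large enough so that the subleading $\beta^{-1/2}$ correction is absorbed.
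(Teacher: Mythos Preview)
Your proof is correct and follows essentially the same route as the paper: decompose $J^{(m)} = J^{(m)}_0 + J^{(m)}_2 + J^{(m)}_{\geq 3}$, drop the constant, apply the reverse triangle inequality for $\sigma_{(\cdot)}$, and plug in Corollary~\ref{corollary_quad} and Lemma~\ref{cor:stima_coda}. If anything, you give more detail than the paper in justifying the reverse triangle inequality and in making the choice of $\beta_0$ explicit.
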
 
\begin{proof}
By Lemma \ref{JM.STRUC}, we write  $J^{(m)} = J^{(m)}_0 + J^{(m)}_2+  J^{(m)}_{\geq 3}$ with $J^{(m)}_0$  constant.
By Corollary~\ref{corollary_quad} and Lemma~\ref{cor:stima_coda}  we deduce   that for $N$ and $\beta$ large enough, 
$$
\sigma_{J^{(m)}} = \sigma_{J^{(m)}_2 + J^{(m)}_{\geq 3}}  \geq \sigma_{J^{(m)}_2 }  - \sigma_{ J^{(m)}_{\geq 3}} \geq 
\frac{\sqrt{N}}{\beta}\left( \sqrt{C'} - \sqrt{\frac{C''}{\beta}}\right)\, ,
$$
which leads immediately to the claimed estimate \eqref{est:sJm}.
\end{proof}

\section{Proof of the Main Results}
In this section we give the proofs of the main theorems of our paper.

\subsection{Proof of Theorem \ref{thm:goal}}
The proof is a straightforward application of Proposition \ref{lem:numeden} and \ref{prop:sJm}.
	{Having fixed $m \in \N$,  we apply \eqref{cheb} with $\Phi = J^{(m)}$ and $\lambda = \delta_1$ to get 
\begin{align}
\label{eq:last_neq_thmgoal}
\bP\Big(\abs{J^{(m)}(t) - J^{(m)}(0)} \geq \delta_1 { \sigma_{J^{(m)}(0)}} \Big)
& \leq C_0  \left(\frac{|\tb - 1|^2}{\beta^2} + \frac{C_1}{\beta^{3}}\right)\frac{ t^2 }{\delta_1^2}
\end{align}
from which one deduces the the statement of Theorem \ref{thm:goal}.}


\subsection{Proof of Theorem \ref{thm:main} and Theorem \ref{thm:main2}}
{ The proofs of Theorem \ref{thm:main} and Theorem \ref{thm:main2} are quite similar and we develop them at the same time.
As in the proof of Theorem \ref{thm:goal},  the first step is to use  Chebyshev inequality to bound 
\begin{equation}
	\label{eq:chebicev0}
	\begin{split}
		\bP\left( \abs{\Phi(t) - \Phi} > \lambda \sigma_{\Phi} \right)
	\leq \frac{1}{\lambda^2} \frac{\sigma^2_{\Phi(t) - \Phi}}{ \sigma^2_{\Phi}} \,,
	\end{split} 
	\end{equation}
	where the time evolution is intended with respect to the FPUT flow $\phi^t_{F}$ or the Toda flow  $\phi^t_{T}$.
	Accordingly, the probability is  calculated  with respect to  the  FPUT Gibbs measure \eqref{eq:misura_vera} or the Toda   Gibbs measure  \eqref{eq:misura_vera_toda}.

Next we observe that the quantity $\Phi := \sum_{j=1}^{N-1} \wh g_j E_j $ defined in \eqref{Phi} can be written in the form
\begin{equation}
\label{Phi_rel}
\Phi(\bp,\br) = \sum_{j=1}^{N-1} \wh g_j E_j = 
\frac{1}{2 \sqrt{N}}\sum_{j,l=0}^{N-1} g_l \left( p_j p_{j+l} + r_j r_{j+l} \right) =\frac{1}{2 \sqrt{N}} G_2(\bp,\br),
\end{equation}
where  $\bg\in\R^N$ is a $m$-admissible vector and $G_2(\bp,\br)$ is a $m$-admissible function of the second kind, as in  Definition~\ref{def:Gad}. As the inequality \eqref{cheb} is scaling invariant, proving \eqref{eq:chebicev0} is equivalent to obtain that
\begin{equation}
	\label{eq:chebichev}
	\begin{split}
		\bP\left( \abs{G_2(t) - G_2} > \lambda \sigma_{G_2} \right)
	\leq \frac{1}{\lambda^2} \frac{\sigma^2_{G_2(t) - G_2}}{ \sigma^2_{G_2}} 
	\end{split} .
	\end{equation}
	 Applying Proposition \ref{prop:ICTP} we can estimate  $\sigma^2_{G_2}$.   We are  then  left to  give an  upper bound to $\sigma^2_{G_2(t) - G_2}$. 
By Lemma \ref{G.lin.comb}, there exists a unique sequence $\{c_j\}_{j=0}^{\wt m -1}$,  with $\max_j |c_j| $ independent from $N$,  such that 
 	$G_2(p,r) = \sum_{l=0}^{\wt m-1} c_lJ_{2}^{(2l+2)}$,  where $ J_{2}^{(2l+2)}$ are defined in \eqref{Jm2.struct}. Hence we bound 
\begin{equation*}
\label{}
	\sigma_{G_2(t) - G_2(0)} \leq \sum_{l=0}^{\wt m-1} |c_l| \,  \sigma_{J_{2}^{(2l+2)}(t) - J_{2}^{(2l+2)}(0)}  .
\end{equation*}

Next we interpolate  $J_{2}^{(2l)}$ with the  integrals $J^{(2l)}$ and exploit the fact that they are adiabatic invariants for the FPUT   flow and integrals of motion for the Toda flow.
More precisely
\begin{align}
\label{var.J.T}
\sigma_{J_{2}^{(2l)}(t) - J_{2}^{(2l)}(0)} & \leq 
 \sigma_{J_{2}^{(2l)}(t) - J^{(2l)}(t)}  +\sigma_{J^{(2l)}(0) - J_{2}^{(2l)}(0)}\\
 \label{var.J.F}
&\quad + \sigma_{J^{(2l)}(t) - J^{(2l)}(0)}.
\end{align}
By the invariance of the two   measures with respect to  their corresponding flow  and Lemma \ref{cor:stima_coda}, we get both for FPUT and Toda the estimate 
\begin{equation}
\label{}
 \sigma_{J_{2}^{(2l)}(t) - J^{(2l)}(t)} =  \sigma_{J_{2}^{(2l)}(0) - J^{(2l)}(0)} =  \sigma_{J_{\geq 3}^{(2l)}}  \leq \sqrt{\frac{\tilde{C}_1 N}{\beta^{3}}}, 
\end{equation}
for some constant $\tilde{C}_1>0$ and for $\beta>\beta_0$ and $N>N_0$.
As \eqref{var.J.F} is zero for the Toda flow  (being $J^{(2l)}(t)$ constant along the flow), we get 
\begin{equation}
	\label{var.G.T}
	\sigma_{G_2\circ \phi_T^t  -  G_2}^2 \leq \frac{C_1 N}{\beta^3},
	\end{equation}
	for some constant $C_1>0$ and for $\beta>\beta_0$ and $N>N_0$.
{ Combing Proposition \ref{prop:ICTP} with \eqref{var.G.T}  we conclude that 
\begin{equation}
	\label{eq:last_ineq_thm2}
	\begin{split}
		\bP\left( \abs{G_2\circ \phi^t_T  - G_2} > \delta_1 \sigma_{G_2} \right)
	\leq \frac{C_1}{\delta_1^2\beta},\quad  \forall \delta_1>0,
	\end{split} 
	\end{equation}
namely we have concluded the proof of Theorem~\ref{thm:main2}.}

We are left  to estimate \eqref{var.J.F} for FPUT, but this is exactly the quantity bounded in Proposition  \ref{lem:numeden}. 
We conclude that 
\begin{equation}
	\label{var.G.F}
	\sigma^2_{G_2\circ \phi_F^t -  G_2} \leq  \frac{C_1 N}{\beta^3} + C_3 N \left(\frac{|\tb - 1|^2}{\beta^4} + \frac{C_2}{\beta^{5}}\right) t^2,
	\end{equation}
	for some constant $C_j>0$, $j=1,2,3$ and for $\beta>\beta_0$ and $N>N_0$.

Combing Proposition \ref{prop:ICTP} with \eqref{var.G.F} we obtain
\begin{equation}
	\label{G2_est}
	\bP\left( \abs{G_2\circ \phi^t_F - G_2} > \lambda \sigma_{G_2} \right)\leq 
	\frac{C_1 }{\lambda^2\beta} + \frac{C_3}{\lambda^2}  \left(\frac{|\tb - 1|^2}{\beta^2} + \frac{C_2}{\beta^{3}}\right) t^2.
		\end{equation}
Choosing $\lambda=\beta^{-\varepsilon}$ with $0<\varepsilon<\frac{1}{4}$,  \eqref{G2_est} is equivalent to 

\begin{equation}
	\label{eq:final_fput}
	\bP\left( \abs{G_2\circ \phi^t_F - G_2} > \frac{\sigma_{G_2}}{\beta^\varepsilon} \right)\leq 
	 \frac{C_1}{\beta^{2\varepsilon}}  ,
	\end{equation}
		  for some redefine constant $C_1>0$ and for every time $t$ fulfilling  \eqref{time2}.


 We have thus concluded the proof of Theorem~\ref{thm:main}.
 
 }

\appendix

\section{Proof of Lemma~\ref{JM.STRUC} }
\label{app:struc}
{
In order to prove Lemma~\ref{JM.STRUC} we  describe more specifically the  Toda integrals and characterize their quadratic parts.
Equation  \eqref{J_cyclic} follows by the explicit expression of $h_j^{(m)}$ in \eqref{eq:general_super_motzkin}, as the coefficients $\rho^{(m)}(\bn, \bk)$ do not depend on the index $j$.
We recall that $h_1^{(m)}$ takes the form
$$
h_1^{(m)}(\bp,\br) = \sum_{( \bk, \bn) \in \cA^{(m)} } (-1)^{|\bk| } \rho^{(m)}(\bn, \bk) \, \bp^{\bk} e^{-\bn ^\intercal \br}  \, ,
$$
with
\begin{equation*}
	  {\rm supp }\, \bk , \ \ {\rm supp }\, \bn  \subseteq B^\td_\wt m(0):= \{ j \colon \td(0,j) \leq \wt m \}  , 
	   \qquad |\bk| + 2|\bn|  = m . 
\end{equation*}
In particular it is clear that $h_1^{(m)}$ has diameter $2\wt m \leq m$.

Now we  Taylor expand around $\br=0$ the exponential with integral remainder:
\begin{equation*}
e^{- \bn ^\intercal \br} = 1 - \bn ^\intercal \br + \frac12 (\bn ^\intercal \br)^2 + \frac{(\bn ^\intercal \br)^3}{2} \int_0^1 e^{-s\bn^\intercal \br} \, (1-s)^2 \, \di s
\end{equation*}
and we substitute it  in $h_1^{(m)}$, obtaining an expansion of the form:
\begin{equation*}
h_1^{(m)}(\bp,\br) = \sum_{ (\bk, \bn) \in \cA^{(m)} } (-1)^{|\bk| } \rho^{(m)}(\bn, \bk) \, \bp^{\bk} \left( 1 - \bn ^\intercal \br + \frac12 (\bn ^\intercal \br)^2 + \frac{(\bn ^\intercal \br)^3}{2} \int_0^1 e^{-s\bn^\intercal \br} \, (1-s)^2 \, \di s \right)\,.
\end{equation*} 
We can rewrite the above expression in the form
\begin{equation*}
h_1^{(m)}(\bp,\br) = \vf_0^{(m)} + \vf_1^{(m)}(\bp,\br) + \vf_2^{(m)}(\bp,\br) + \vf_{\geq 3}^{(m)}(\bp,\br) \, ,
\end{equation*}
where $\vf_\ell^{(m)}$, $\ell = 0,1,2$, are the Taylor polynomials at $(\bp, \br) = (\boldsymbol{0},\boldsymbol{0})$. Their explicit expressions read
\begin{align*}
& \vf_0^{(m)} = \sum_{(\boldsymbol{0}, \bn) \in \cA^{(m)}} \, \rho^{(m)}(\bn, \boldsymbol{0}) \, , \quad 
\vf_1^{(m)} = - \sum_{(\boldsymbol{0}, \bn) \in \cA^{(m)}} \, \rho^{(m)}(\bn,\boldsymbol{0}) \bn^\intercal \br \, - \sum_{(\bk, \bn )\in \cA^{(m)} \atop |k| = 1 } \, \rho^{(m)}(\bn,\bk)\bp^\bk, \\
& \vf_2^{(m)} = \sum_{(\boldsymbol{0}, \bn) \in \cA^{(m)}} \, \rho^{(m)}(\bn,\boldsymbol{0}) \frac{(\bn^\intercal \br)^2}{2} + \sum_{(\bk, \bn) \in \cA^{(m)} \atop |k| = 1 } \, \rho^{(m)}(\bn,\bk)\bp^\bk\bn^\intercal \br + \sum_{(\bk, \bn) \in \cA^{(m)} \atop |k| = 2 } \, \rho^{(m)}(\bn,\bk)\bp^\bk.
\end{align*}

We deduce from these explicit formulas that if $m$ is odd then $\vf_0^{(m)} \equiv 0$ as well as the first sum defining $\vf_1^{(m)}$ and the first and last one defining $\vf_2^{(m)}$.  Indeed the sums  are carried on  an  empty set. If $m$ is even the second sum defining $\vf_1^{(m)}$ and the second one defining $\vf_2^{(m)}$ are zero for the same reason.  
Concerning $\vf_{\geq 3}^{(m)}$, it has a zero of order greater { than } 3 in the variables $(\bp,\br)$, and it has the form
$$
\vf_{\geq 3}^{(m)}(\bp,\br)  := 
\sum_{(\bk, \bn) \in \cA^{(m)} \atop |k| \geq 3 } \, (-1)^{|\bk| } \rho^{(m)}(\bn, \bk) \, \bp^{\bk} \left( 1 - \bn ^\intercal \br + \frac12 (\bn ^\intercal \br)^2 + \frac{(\bn ^\intercal \br)^3}{2} \int_0^1 e^{-s\bn^\intercal r} \, (1-s)^2 \, \di s \right) \, .
$$
These, together with the explicit formula of $\rho^{(m)}(\bn,\bk)$,  prove \eqref{hj2}.

It is easy to see that defining
\begin{align*}
& J_0^{(m)} := \frac{1}{m}\sum_{j = 0}^{N-1} S_j \vf_0^{(m)}, \qquad J_1^{(m)} := \frac{1}{m} \sum_{j = 0}^{N-1} S_j \vf_1^{(m)}, \\
& J_2^{(m)} := \frac{1}{m}\sum_{j = 0}^{N-1} S_j \vf_2^{(m)}, \qquad
 J_{\geq 3}^{(m)} := \frac{1}{m}\sum_{j = 0}^{N-1} S_j \vf_{\geq 3}^{(m)},
\end{align*}

we immediately get that
$$
J^{(m)} = J_0^{(m)} + J_1^{(m)} + J_2^{(m)} + J_{\geq 3}^{(m)}.
$$

Clearly  $J_0^{(m)}$ it is a constant that is zero for $m$ odd; moreover thanks to the boundary condition \eqref{media} and the linearity of $J_1^{(m)}$ we have that $J_1^{(m)} = 0$.
Further,  $J_{\geq 3}^{(m)}$ is clearly a cyclic function.
In order to get \eqref{Jm2.struct} and \eqref{ab}  for $J_2^{(m)}$ we have to split the proof in two different cases.

\paragraph{Case $m$ odd.}

In this case thanks to the property of $\vf_2^{(m)}$, the definition of $J_2^{(m)}$ and \eqref{cyc.quad2} we  get that there exists a cyclic and symmetric matrix $B^{(m)}$ such that:
$$
J_2^{(m)} = \bp^\intercal B^{(m)} \br  .
$$
Moreover since the ${\rm diam} (\bk), {\rm diam }(\bn)$ defining $\vf^{(m)}_2$ are at most $\wt m$ (see Remark \ref{rem:diam.hj}) we have that the vector $\tb^{(m)}$ representing the matrix $B^{(m)}$ is $m$-admissible and from \eqref{rhom} we have that $\tb^{(m)}_j = \tb^{(m)}_{N-j}$ are positive integers for all  $j=0,\ldots, \wt m$.
\paragraph{Case $m$ even.} 
As before  there exist two matrices $A^{(m)}, D^{(m)}$ represented by $m$-admissible vectors such that:
\begin{equation*}
J_2^{(m)} = \bp^\intercal A^{(m)} \bp + \br^\intercal D^{(m)} \br \, , \, \quad
\ta_k^{(m)} = \ta_{N-k}^{(m)} \in \N \, ,
\quad \td_k^{(m)} = \td_{N-k}^{(m)} \in \N  \, ,
 \quad 0 \leq k \leq \wt m . 
\end{equation*}
 
We have just to prove that the two matrices are equal; to do this we {exploit the involution property} of the  Toda integrals.
Indeed we know that   $\left\{ J^{(j)}, J^{(k)}\right\} = 0, $ for any $j,k$.
It follows easily that  also their quadratic parts must commute:
\begin{equation}
\label{eq:com.quad}
 \left\{ J^{(k)}_{2}, J^{(j)}_{2}\right\} = 0, \qquad \forall \, k,j .
\end{equation}
To compute explicitly the  Poisson bracket we change coordinates via the  Hartley transform  \eqref{dht} getting that: 
	\begin{align*}
	J^{(m)}_2 &= \sqrt{N}\sum_{j=1}^{N-1} \wh a_j\wh p_j^2 + \wh d_j\wh r_j^2 = \sqrt{N}\sum_{j=1}^{N-1} \wh a_j\wh p_j^2 + \wh d_j\omega_j^2\wh q_j^2, \\
	J_2^{(2)} &=\frac{1}{2} \sum_j\wh p_j^2 + \omega_j^2 \wh q_j^2 , 
	\end{align*}
	where $\omega_j = 2\sin\left(\pi\frac{ j}{N}\right)$.
As the Hartley transform is a symplectic map, 	by \eqref{eq:com.quad} we get 
\begin{equation}
0 = \left\{ J^{(2)}_{2}, J^{(m)}_{2}\right\} = \sqrt{N}\sum_{j=1}^{N-1} \omega_j^2\left(\wh a_j - \wh d_j\right)\wh p_j \wh q_j ,
\end{equation}  
which implies that $\wh a_j=\wh d_j$ for all $j\ne0$. To prove that also $\wh a_0 = \wh d_0$ we come back to the original variables getting that:

\begin{equation}
\begin{split}
		 &\ta^{(m)}_j =\frac{1}{\sqrt{N}} \wh a_0 + \frac{1}{\sqrt{N}} \sum_{k=1}^{N-1} \wh a_j \left(\cos\left(2\pi \frac{jk}{N}\right) + \sin\left(2\pi \frac{jk}{N}\right) \right), \\ &\td^{(m)}_j = \frac{1}{\sqrt{N}} \wh d_0 + \frac{1}{\sqrt{N}} \sum_{k=1}^{N-1} \wh a_j \left(\cos\left(2\pi \frac{jk}{N}\right) + \sin\left(2\pi \frac{jk}{N}\right) \right) ,
\end{split}
\quad \forall \, j.
\end{equation}

This means that $ \ta^{(m)}_j -  \td^{(m)}_j = \frac{\wh a_0 - \wh d_0}{\sqrt{N}}$ for all $j=0,\dots,N-1$.
Since  $\ta^{(m)},\td^{(m)}$ are $m$-admissible it follows that $\ta^{(m)}_{\wt m + 1}=\td^{(m)}_{\wt m + 1}=0$ so that 
 $$\frac{\wh a_0 - \wh d_0}{\sqrt{N}} = \ta^{(m)}_{\wt m + 1} -  \td^{(m)}_{\wt m + 1} = 0,$$
which proves the statement.
}
\section{Proof of Lemma \ref{LEM:ORDINE_BETA}}
\label{app:process}
We prove the lemma  for both  the   FPUT and Toda measure.

First of all we observe  that for $d,v=2,3$:

\begin{equation}
\frac{1}{4^d}\prod_{j\in {\rm Supp} \, \bn}\min\left(e^{-dn_j r_j},1 \right)\leq \left(\int_{0}^1  e^{- \xi \bn^\intercal\br  }(1-\xi)^v\di \xi\right)^d \leq \frac{1}{3^d}\prod_{j\in {\rm Supp} \, \bn}\max\left(e^{-dn_j r_j},1 \right) \, .
\end{equation}

This means that we have actually to prove that for  any fixed  multi-index $\bk,\bl,\bn\in \N_0^{N}$ there exist two constants $C_{\bk,\bl}^{(1)} \in \R$ and $C_{\bk,\bl}^{(2)}>0$ such that:

\begin{align}
& \la \bp^\bk\br^\bl\prod_{j\in {\rm Supp} \, \bn}\min\left(e^{-n_j r_j},1 \right) \ra_\theta \geq C^{(1)}_{\bk,\bl}\beta^{-\frac{|\bk| + |\bl | }{2}} \, , \\
& \la \bp^\bk\br^\bl\prod_{j\in {\rm Supp} \, \bn}\max\left(e^{-n_j r_j},1 \right) \ra_\theta \leq  C^{(2)}_{\bk,\bl}\beta^{-\frac{|\bk| + |\bl | }{2}}\, .
\end{align}

Moreover since for the two measures $\di \mu_{F,\theta}, \di \mu_{T,\theta}$ all $\bp$ and $\br$ are independent random variables and moreover the $p_j$ are independent and normally  distributed according to $\cN(0,\beta^{-1})$, it follows
\begin{align}
& \la \bp^\bk\br^\bl\prod_{j\in {\rm Supp} \, \bn}\min\left(e^{-n_j r_j},1 \right) \ra_\theta =
 \la \bp^\bk \ra_\theta\la\br^l\prod_{j\in {\rm Supp} \, \bn}\min\left(e^{-n_j r_j},1 \right) \ra_\theta\\
& \la \bp^\bk\br^\bl\prod_{j\in {\rm Supp} \, \bn}\max\left(e^{-n_j r_j},1 \right) \ra_\theta =\la \bp^\bk \ra_\theta \la\br^\bl\prod_{j\in {\rm Supp} \, \bn}\max\left(e^{-n_j r_j},1 \right) \ra_\theta 
\end{align}
where 
\begin{equation}
\la \bp^\bk \ra_\theta =\la\prod_{i}p_i^{k_i} \ra_\theta =\begin{cases} \displaystyle{\prod_{i}\frac{(k_i-1)!!}{\beta^{\frac{k_i}{2}}} },&\quad k_i \text{ all even} \\ 0, &\quad  \mbox{ some $k_i$  odd}
\end{cases}
\end{equation}
Here $k!!$ denotes the double factorial.
Instead the distribution of the $r_j$ is different for the two measures, so we need to  calculate it separately for the FPUT and Toda chain.
\paragraph{FPUT chain.}

Let's start considering $ \la r^{l}\min\left(e^{-n r},1 \right) \ra_\theta$:
\begin{equation}
\begin{split}
\la r^{l}\min\left(e^{-n r},1 \right) \ra_\theta &=
\frac{\int_{\R^-}r^l e^{-\theta r -\beta \left(\frac{r^2}{2} + \frac{r^3}{3} + \frac{r^4}{4}\right)} \di r + \int_{\R^+} r^le^{-nr}e^{-\theta r -\beta \left(\frac{r^2}{2} + \frac{r^3}{3} + \frac{r^4}{4}\right)} \di r}{\int_{\R} e^{-\theta r -\beta \left(\frac{r^2}{2} + \frac{r^3}{3} + \frac{r^4}{4}\right)} \di r} \\
&= \beta^{-\frac{l}{2}} \frac{\int_{\R^-}r^l e^{-\frac{\theta}{\sqrt{\beta}} r - \left(\frac{r^2}{2} + \frac{r^3}{3\sqrt{\beta}} + \frac{r^4}{4\beta}\right)} \di r + \int_{\R^+} r^le^{-\frac{n}{\sqrt{\beta}}r}e^{-\frac{\theta}{\sqrt{\beta}} r -\left(\frac{r^2}{2} + \frac{r^3}{3\sqrt{\beta}} + \frac{r^4}{4\beta}\right)} \di r}{\int_{\R} e^{-\frac{\theta}{\sqrt{\beta}} r - \left(\frac{r^2}{2} + \frac{r^3}{3\sqrt{\beta}} + \frac{r^4}{4\beta}\right)} \di r} \\
& \geq \beta^{-\frac{l}{2}} \frac{\int_{\R^-}r^l e^{-\frac{\theta}{\sqrt{\beta}} r - \left(\frac{r^2}{2} + \frac{r^3}{3\sqrt{\beta}} + \frac{r^4}{4\beta}\right)} \di r}{\int_{\R} e^{-\frac{\theta}{\sqrt{\beta}} r - \left(\frac{r^2}{2} + \frac{r^3}{3\sqrt{\beta}} + \frac{r^4}{4\beta}\right)} \di r} \, .
\end{split}
\end{equation}

Since for $\beta$ large enough $\theta(\beta)$ is uniformly bounded, it follows that  there is a positive constant $C_{l}$ such that:
\begin{equation}
\label{eq:lower_beta}
\la r^{l}\min\left(e^{-n r},1 \right) \ra_\theta \geq (-1)^l \frac{C_{l}}{\beta^\frac{l}{2}} \, .
\end{equation}

We notice that if $l$ is even then the right end side of \eqref{eq:lower_beta} is positive. 
The proof for  $\la r^{l}\max\left(e^{-n r},1 \right) \ra_\theta $ follows in the same way so we get the claim for the FPUT chain.

\qed 
\paragraph{Toda chain.}

For the Toda chain the computation is a little bit more involved, so we prefer to split it in different parts.

\begin{lemma}
	\label{lem:mom_r}
	Consider the measure \ref{eq:misura_falsa}, then there exists a $\beta_0> 0$ such that for all $\beta>\beta_0$ there exists $\theta \equiv \theta(\beta)\in [1/3, 2]$ such that  
	\begin{equation}
	\la r_j^k \ra_\theta  = \begin{cases}
	0 &\quad k=1\\
	\cO\left(\frac{1}{\beta^{\frac{k}{2}}}\right) &\quad k \ne 1 
	\end{cases} \, .
	\end{equation}
\end{lemma}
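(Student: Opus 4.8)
The plan is to reduce the statement to a one-dimensional estimate and then control the resulting integrals uniformly in $\beta$ by a rescaling together with Dominated Convergence. Under $\di\mu_{T,\theta}$ in \eqref{eq:misura_falsaT} the variables are independent and each $r_j$ has the common marginal density proportional to $e^{-\theta r - \beta V_T(r)}$, with $V_T(r)=e^{-r}+r-1$; hence it suffices to treat a single variable $r$. The case $k=1$ is nothing but the defining relation \eqref{theta.vt}, so $\la r^k\ra_\theta=0$ holds there by the very choice of $\theta$, and the content of the lemma is (i) that such a $\theta$ can be found in $[1/3,2]$ and (ii) the size $\cO(\beta^{-k/2})$ of the higher moments. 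For (ii) I would perform the change of variables $r=s/\sqrt\beta$, after which the normalized $k$-th moment equals $\beta^{-k/2}$ times the ratio $\int_\R s^k e^{\phi_\beta(s)}\di s \big/ \int_\R e^{\phi_\beta(s)}\di s$, where $\phi_\beta(s):=-\theta s/\sqrt\beta-\beta V_T(s/\sqrt\beta)$.

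The crucial point is to bound this ratio uniformly for $\beta$ large and $\theta\in[1/3,2]$. Since $V_T$ is smooth and strictly convex with $V_T(0)=V_T'(0)=0$ and $V_T''=e^{-r}$, Taylor's formula with Lagrange remainder gives $V_T(u)=\tfrac12 e^{-\xi}u^2$ for some $\xi$ between $0$ and $u$; this yields $V_T(u)\ge\tfrac12 u^2$ for $u\le0$ and a genuine quadratic lower bound $V_T(u)\ge c_0 u^2$ on $0\le u\le 2$ (e.g. $c_0=e^{-2}/2$), while for $u\ge2$ the elementary bound $V_T(u)\ge u-1\ge u/2$ gives \emph{only linear} growth on the right tail. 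After rescaling these produce $\phi_\beta(s)\le C-c\,s^2$ for all $s\le0$ and for $0\le s\le 2\sqrt\beta$, and $\phi_\beta(s)\le -\tfrac14 s$ for $s\ge 2\sqrt\beta$ when $\beta$ is large; consequently $e^{\phi_\beta(s)}\le e^{C}\big(e^{-c\,s^2}+e^{-s/4}\mathbf{1}_{s>0}\big)=:G(s)$, an integrable envelope independent of $\beta$. Since moreover $\phi_\beta(s)\to-s^2/2$ pointwise (because $\beta V_T(s/\sqrt\beta)\to s^2/2$ and $\theta s/\sqrt\beta\to0$ as $\theta$ stays bounded), Dominated Convergence gives $\int s^k e^{\phi_\beta}\di s\to\int s^k e^{-s^2/2}\di s$ and $\int e^{\phi_\beta}\di s\to\sqrt{2\pi}$; the ratio therefore converges to the $k$-th Gaussian moment and is in particular bounded, proving $\la r^k\ra_\theta=\cO(\beta^{-k/2})$. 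I expect this domination step, forced by the merely linear growth of $V_T$ on the right, to be the main obstacle, since the usual single Gaussian envelope fails and one must split the line.

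For (i) I would fix $\theta$ in the closed interval $[1/3,2]$ and use that $\theta\mapsto\la r\ra_\theta$ is continuous and strictly decreasing, because $\frac{d}{d\theta}\la r\ra_\theta=-\big(\la r^2\ra_\theta-\la r\ra_\theta^2\big)<0$ (differentiation under the integral sign being justified by the envelope $G$). It then remains to check the endpoint signs for $\beta$ large. Writing the sign of $\la r\ra_\theta$ as the sign of $I_\beta:=\int_\R s\,e^{\phi_\beta(s)}\di s$ and factoring $e^{\phi_\beta}=e^{-s^2/2}e^{\psi_\beta}$ with $\psi_\beta(s)=-\theta s/\sqrt\beta+\tfrac{s^3}{6\sqrt\beta}+\cO(\beta^{-1})$, a refined version of the same Dominated Convergence argument gives $\sqrt\beta\, I_\beta\to\sqrt{2\pi}\,(\tfrac12-\theta)$. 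Hence for $\beta$ large $\la r\ra_{1/3}>0$ and $\la r\ra_{2}<0$, so the Intermediate Value Theorem together with strict monotonicity yields a unique $\theta(\beta)\in[1/3,2]$ with $\la r\ra_\theta=0$, completing the proof.
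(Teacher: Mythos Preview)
Your argument is correct and follows a genuinely different route from the paper's. The paper exploits the specific exponential form of the Toda potential: via the substitution $x=e^{-r}$ one has
\[
\int_{\R} r^k\, e^{-(\theta+\beta)r-\beta e^{-r}}\,\di r \;=\;(-1)^k\,\partial_\theta^k\!\left(\frac{\Gamma(\theta+\beta)}{\beta^{\theta}}\right),
\]
so that $\la r\ra_\theta=\log\beta-\psi(\theta+\beta)$ with $\psi$ the digamma function, and the existence of $\theta\in[1/3,2]$ follows from the classical two–sided bound $\log x-\tfrac{1}{x}\le\psi(x)\le\log x-\tfrac{1}{2x}$. The higher moments are then obtained by induction, expressing $\la r^{k+1}\ra_\theta$ as a sum of products $\la r^{k-n}\ra_\theta\,\psi^{(n)}(\theta+\beta)$ and invoking the polygamma asymptotics $\psi^{(n)}(x)\sim x^{-n}$.

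Your approach instead rescales $r=s/\sqrt\beta$ and uses Dominated Convergence toward the Gaussian $e^{-s^2/2}$, carefully building a $\beta$–independent envelope that is Gaussian on the left and on $[0,2\sqrt\beta]$ but only exponential for $s\ge 2\sqrt\beta$, which is exactly where the linear growth of $V_T$ forces a non-Gaussian tail. This is more elementary, avoids special-function identities, and would apply verbatim to the FPUT potential (where no Gamma reduction is available); the paper in fact treats the FPUT case by a rescaling argument of the same flavor. What the paper's method buys is an \emph{exact} closed-form for every moment in terms of polygamma functions, whereas your method gives only the order in $\beta$. One point where you are a bit brisk is the ``refined Dominated Convergence'' yielding $\sqrt\beta\,I_\beta\to\sqrt{2\pi}(1/2-\theta)$: to make this rigorous, subtract off $\int s\,e^{-s^2/2}\di s=0$ and dominate $h_\beta(s):=\sqrt\beta\bigl(e^{\phi_\beta(s)}-e^{-s^2/2}\bigr)$; on $|s|\le\sqrt\beta$ the mean-value theorem together with $|\psi_\beta(s)|\le C(|s|+|s|^3)/\sqrt\beta$ gives $|h_\beta(s)|\le C'(|s|+|s|^3)e^{-cs^2}$, while on $|s|\ge\sqrt\beta$ one splits off a factor $e^{-c\beta}$ from the envelope to absorb the prefactor $\sqrt\beta$. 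With this detail filled in, your proof is complete.
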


\begin{proof}
	First we prove that, for  any $\beta$ large enough, we can chose $ \theta(\beta)$ in a compact interval $\cI$ such that  $\la r_j \ra_\theta = 0$. We notice that: 
	
	\begin{equation}
	\label{eq:general_rmom_t}
	\la r^k \ra_\theta = (-1)^k \frac{\d_\theta^k \int_\R e^{-(\theta + \beta) r - \beta e^{-r}}\di r}{\int_\R e^{-(\theta + \beta) r - \beta e^{-r}}\di r} \overset{\left(e^{-r} = x\right)}{=} (-1)^k\frac{\d_\theta^k \int_{\R^+} x^{\theta + \beta - 1}e^{-\beta x} \di x}{\int_{\R^+} x^{\theta + \beta - 1}e^{-\beta x} \di x} = (-1)^k\frac{\d_\theta^k \frac{\Gamma(\beta + \theta)}{\beta^\theta}}{\frac{\Gamma(\beta + \theta)}{\beta^\theta}} \, ,
	\end{equation}	
	where $\Gamma(z)$ is the usual Gamma function and we used the following equality:
	$$\int_0^\infty 
	t^{z-1} e^{-xt} \di t = \frac{\Gamma(z)}{x^z}\, . $$	
	In the case $k=1$ one obtains
	\begin{equation}
	\label{eq:meannzero}
	\la r \ra_\theta   =\log(\beta)- \frac{\Gamma'(\theta + \beta)}{\Gamma(\theta + \beta)}\,.
		\end{equation}
	{ 	Introducing the digamma function  $\psi(z)=\frac{\Gamma'(z)}{\Gamma(z)}$ \cite{LukeSpecial} and 
	 using the inequality
	$$ \log x-{\frac {1}{x}} \leq  \psi(x)\leq  \log x-{\frac {1}{2x}} , \qquad \forall x >0 , $$
	it is easy to show that there exists $\beta_0 >0$ such that $\forall \beta > \beta_0$ one has 
	$$
	\psi\left(\frac{1}{3} + \beta\right) \leq \log\left(\frac{1}{3} + \beta\right) - \frac{1}{2(1/3 + \beta)} \leq \log \beta 
	$$
	and
	$$
	\psi(2+ \beta) \geq \log(2+\beta) - \frac{1}{2+\beta} \geq \log \beta . 
	$$
	Since  $x \mapsto \psi(x)$ is continuous on $(1, + \infty)$, by the intermediate value theorem there exists  $\theta(\beta) \in [1/3, 2]$ fulfilling
$\psi(\theta + \beta)=\log\beta$	  which implies by \eqref{eq:meannzero} that 
\begin{equation}
\label{psi=log} 
 \la r_j\ra_{\theta} =\log(\beta)- \frac{\Gamma'(\theta + \beta)}{\Gamma(\theta + \beta)}= 0.
 \end{equation} 
 }
	We will prove the remaining part  of the claim by induction; \eqref{eq:general_rmom_t}	leads in the case $k=2$ to:
	\begin{equation}
	\begin{split}
	\la r^2 \ra_\theta& = \frac{\beta^\theta}{\Gamma(\theta +\beta)}\d_\theta\left(\frac{\Gamma'(\theta + \beta) - \ln(\beta)\Gamma(\theta + \beta)}{\beta^\theta}\right) \\
	& = \frac{\beta^\theta}{\Gamma(\theta +\beta)}\d_\theta\left(\frac{\beta^\theta}{\Gamma(\theta +\beta)}\left(\psi(\theta +\beta) - \ln(\beta)\right) \right) \\
	& = \la r_j\ra_\theta\left(\psi(\theta +\beta) - \ln(\beta)\right) + \psi^{(1)}(\theta + \beta) \\ &= \psi^{(1)}(\theta + \beta),
	\end{split}
	\end{equation}
	where $\psi^{(s)}$ is the $s^{th}$ polygamma function defined as 	
	$\psi^{(s)}(z) := \dfrac{\d^s \psi(z) }{\d z ^s}$. 
	For $x\in \R$  it has the following expansion as  $x\to +\infty$ :
	\begin{equation}
	\label{eq:psiexpansion}
	\psi^{(s)}(x) \sim (-1)^{s+1}\sum_{k=0}^{\infty}\frac{(k+s-1)!}{k!}\frac{B_k}{x^{k+s}}, \qquad s \geq 1\, ,
	\end{equation}
	where $B_k$ are the Bernoulli number of the second kind.
	Therefore
	\[
	\la r^2 \ra_\theta=\psi^{(1)}(\theta + \beta) \overset{\beta > \beta_0}{=} \cO\left(\frac{1}{\beta}\right).
	\]
	
	So the first inductive step is proved.
	Next suppose the statement true for $k$ and let us prove it for $k+1$.
		
	\begin{equation}
	\begin{split}
	\la r^{k+1} \ra_\theta& = (-1)^{k+1}\frac{\beta^\theta}{\Gamma(\theta +\beta)}\d^k_\theta\left(\frac{\Gamma'(\theta + \beta) - \ln(\beta)\Gamma(\theta + \beta)}{\beta^\theta}\right) \\
	& = (-1)^{k+1}\frac{\beta^\theta}{\Gamma(\theta +\beta)}\d^k_\theta\left(\frac{\beta^\theta}{\Gamma(\theta +\beta)}\left(\psi(\theta +\beta) - \ln(\beta)\right) \right) \\
	& = (-1)^{k+1}\frac{\beta^\theta}{\Gamma(\theta +\beta)}\d^k_\theta\left(\frac{\beta^\theta}{\Gamma(\theta +\beta)}\right)\left(\psi(\theta +\beta) - \ln(\beta)\right) \\ &
	+(-1)^{k+1}\frac{\beta^\theta}{\Gamma(\theta +\beta)}\sum_{n=1}^{k}\binom{k}{n} \d^{k-n}_\theta\left(\frac{\beta^\theta}{\Gamma(\theta +\beta)}\right)\d_\theta^n\psi(\theta +\beta) \\
	& = 0 + \sum_{n=1}^{k}\binom{k}{n}(-1)^{n+1} \la r^{k-n}\ra_\theta \d_\theta^n\psi(\theta +\beta) = \cO\left( \frac{1}{\beta^{\frac{k}{2}}} \right)\, ,
	\end{split}
	\end{equation}	
	where we used \eqref{psi=log} and \eqref{eq:psiexpansion}.
\end{proof}
We are now ready to prove the last part of Lemma \ref{LEM:ORDINE_BETA} for the Toda chain:

\begin{equation}
\begin{split}
\la r^l\max(1,e^{-nr}) \ra_\theta &= \frac{\int_{\R^+} r^l e^{-(\theta + \beta) r -\beta e^{-r}} \di r + \int_{\R^-} r^l e^{-(\theta+ \beta -n) r -\beta e^{-r}} \di r}{ \int_{\R} e^{-\theta r -\beta e^{-r}} \di r} \\ 
& \leq  \frac{\int_{\R^+} r^l e^{-(\theta + \beta) r -\beta e^{-r}} \di r}{ \int_{\R} e^{-(\theta + \beta) r -\beta e^{-r}} \di r}\, .
\end{split}
\end{equation}

The last integral can be estimated in the same way  as in the previous lemma, moreover the lower bound follows in the same way, so we get the claim also for the Toda chain.
\qed

\section{Measure Approximation}
\label{app:meas.app}
In this section we show how to approximate the measure $\di \mu$, in which the variables are constrained, 
with the measure $\di \mu_\theta$, where  all variables are independent. 
The proof  follows the construction of \cite{BCM14} {(where it is done for Dirichlet boundary conditions)} which  applies   both to  the Gibbs measure of  FPUT \eqref{eq:misura_vera} and  Toda \eqref{eq:misura_vera_toda}.
To simplify the construction  we consider a general potential  $V\colon \R\to\R$ and make the following assumptions:
\begin{itemize}
\item[ (V1)] There exist $\beta_0 >0$ and a compact interval $\cI \subset \R$ such that for any $\beta > \beta_0$, there exists $\theta\equiv \theta(\beta) \in \cI$ such that
\begin{equation}
\label{V1}
\int_{\R} r \, e^{- \theta r  - \beta V(r)} \, \di r  = 0 . 
\end{equation}
\item[(V2)] There exist $\beta_0, \tC_1 , \tC_2 >0$ such that for any $\beta > \beta_0$, with $\theta = \theta(\beta)$ of (V1), one has 
\begin{equation}
\label{V2}
\frac{\tC_1}{\beta^{k/2}} < \int_{\R} |r|^k \, e^{- \theta r  - \beta V(r)} \, \di r < \frac{\tC_2}{\beta^{k/2}} , \qquad k = 0, \ldots, 4 .
\end{equation}
In particular the moments up to  order $4$ are finite.
\item[(V3)] There exists $\beta_0 >0$ such that $\forall \beta > \beta_0$, with  $\theta = \theta(\beta)$ of (V1), one has 
\begin{equation}
\label{V3}
\inf_{r \in \R} \abs{\theta r + \beta V(r)} > - \infty ,
\end{equation}
namely the function $r \mapsto \theta r + \beta V(r)$ is bounded from below.
\end{itemize}

Both   the FPUT potential $V_F(x)$ and the Toda potential  $V_T(x)$ satisfy the assumptions (V1)--(V3) by the results  of Appendix \ref{app:process}.
 
We define the constraint measure $\di \mu^V$ on the restricted phase space $\cM$ as
\begin{equation}\label{eq:misura_vera.ab}
\di \mu^V :=    \frac{1}{Z_V(\beta)} \  e^{-\beta \sum_{j=1}^N \frac{p_j^2}{2}} \ 
e^{-\beta \sum_{j=1}^N V(r_j)}   \ \delta\left(\sum_j r_j =0\right) \ \delta\left(\sum_j p_j =0\right)  \ \di \bp \,  \di \br, 
\end{equation}
and the unconstrained measure $\di \mu^V_{\theta}$  on the extended phase space $\R^N \times \R^N$ as
  \begin{equation}\label{eq:misura_falsa.ab}
\di \mu^V_{\theta} :=    \frac{1}{Z_{V, \theta}(\beta)} \  e^{-\beta \sum_{j=1}^N p_j^2/2} \ 
e^{-\beta \sum_{j=1}^N V(r_j)}  \, e^{- \theta \sum_{j=1}^N r_j }  \ \di \bp \,  \di \br ;
\end{equation}
  as usual $Z_V(\beta)$ and $Z_{V, \theta}(\beta)$ are the   normalizing constants  of $\di \mu^V, \,\di \mu^V_\theta$ respectively . 
We denote the expectation of $f$ with respect to the measure
$\di \mu^V$ as  $\la f \ra_V$, and with respect to the measure $\di \mu^V_\theta$ as $\la f \ra_{V, \theta}$. \\
We also  denote by  $\displaystyle{\norm{f}_{V, \theta}:= \la f^2 \ra_{V, \theta}^{1/2}}$ the $L^2$ norm of $f$ with respect to the measure $\di \mu_\theta^V$.

The main result is the following one:
\begin{theorem}
\label{magic.abs}
Assume that (V1)--(V3) hold true. Fix $K \in \N$ and assume that  $f \colon \R^N\times \R^N \to \R$ have support of size $K$ (according to definition \ref{def:supp}) and finite second order moment with respect to $\di\mu^V_\theta$.
Then there exist $C, N_0$ and $\beta_0 $ such that
for all $N > N_0$, $\beta > \beta_0$ one has  
\begin{equation}
\label{eq:magic:ab}
\abs{\la f \ra_V - \la f \ra_{V,\theta} }\leq  C \frac{K}{N} \sqrt{\la f^2\ra_{V,\theta} - \la f \ra_{V,\theta}^2} . 
\end{equation}
\end{theorem}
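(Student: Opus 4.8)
The plan is to follow the conditioning strategy of \cite{BCM14}, adapted to the \emph{two} simultaneous constraints $\sum_j r_j=0$ and $\sum_j p_j=0$ imposed by the periodic boundary conditions. The starting observation is that the constrained measure $\di\mu^V$ is exactly the conditioning of the product measure $\di\mu^V_\theta$ on the surface $\{S_r=0,\ S_p=0\}$, where I abbreviate $S_r:=\sum_{j=1}^N r_j$ and $S_p:=\sum_{j=1}^N p_j$. Indeed, inserting the factor $e^{-\theta S_r}\equiv 1$ on the support of $\delta(S_r)$ and dividing numerator and denominator by $Z_{V,\theta}(\beta)$, I would write
\begin{equation*}
\la f \ra_V = \frac{\la f\,\delta(S_r)\,\delta(S_p)\ra_{V,\theta}}{\la \delta(S_r)\,\delta(S_p)\ra_{V,\theta}} .
\end{equation*}
This identity trades the awkward constrained measure for $\di\mu^V_\theta$, under which all $2N$ variables are independent, the $p_j$ are i.i.d. $\cN(0,\beta^{-1})$, and the $r_j$ are i.i.d. with mean zero by (V1).

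Next I would exploit that $f$ depends only on the variables indexed by $I:={\rm supp}\,f$, with $|I|\le K$. Splitting $S_r=S_r^I+S_r^{I^c}$ and $S_p=S_p^I+S_p^{I^c}$, and integrating out the $N-K$ complementary independent variables, the two delta functions produce the joint density $\rho_{N-K}$ of $(S_r^{I^c},S_p^{I^c})$ evaluated at $(-S_r^I,-S_p^I)$. Since the $p$'s and $r$'s are independent this density factorizes, $\rho_{N-K}(x,y)=\rho^r_{N-K}(x)\,\rho^p_{N-K}(y)$, with $\rho^p_{N-K}$ exactly the Gaussian density of $\cN(0,(N-K)/\beta)$. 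Setting
\begin{equation*}
w := \frac{\rho_{N-K}(-S_r^I,-S_p^I)}{\rho_{N-K}(0,0)} ,
\end{equation*}
the identity above becomes $\la f\ra_V=\la f\,w\ra_{V,\theta}/\la w\ra_{V,\theta}$, whence
\begin{equation*}
\la f\ra_V - \la f\ra_{V,\theta} = \frac{\la f\,w\ra_{V,\theta} - \la f\ra_{V,\theta}\la w\ra_{V,\theta}}{\la w\ra_{V,\theta}} = \frac{{\rm Cov}_{V,\theta}(f,w)}{\la w\ra_{V,\theta}} .
\end{equation*}
By Cauchy--Schwarz $|{\rm Cov}_{V,\theta}(f,w)| \le \sqrt{\la f^2\ra_{V,\theta}-\la f\ra_{V,\theta}^2}\,\sqrt{\la w^2\ra_{V,\theta}-\la w\ra_{V,\theta}^2}$, so the theorem reduces to the two quantitative estimates $\la w\ra_{V,\theta}\ge c>0$ and $\sqrt{\la w^2\ra_{V,\theta}-\la w\ra_{V,\theta}^2}\le C\,K/N$, uniformly for $\beta$ and $N$ large.

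These last two bounds are where the work lies and constitute the main obstacle. The Gaussian factor is explicit, $\rho^p_{N-K}(-S_p^I)/\rho^p_{N-K}(0)=\exp\bigl(-\beta(S_p^I)^2/(2(N-K))\bigr)$, and since $S_p^I$ is a sum of $K$ i.i.d. mean-zero Gaussians of variance $\beta^{-1}$ its typical size is $\sqrt{K/\beta}$, making this factor $1+O(K/N)$ in the relevant range. For the $r$-factor I would invoke a local central limit theorem for $\rho^r_{N-K}$: using the bounded rescaled moments up to order four from (V2) and the lower bound (V3) (which guarantees that the single-site density is bounded and its characteristic function decays, so the $(N-K)$-fold convolution admits a smooth Edgeworth expansion), one obtains $\rho^r_{N-K}(-x)/\rho^r_{N-K}(0)=1+O\bigl(x^2/((N-K)\sigma_r^2)\bigr)$ with $\sigma_r^2=\la r^2\ra_{V,\theta}\sim\beta^{-1}$. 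The delicate point is to make this ratio estimate uniform in $\beta$ (hence the rescaling $r\mapsto r/\sqrt{\beta}$) and to control it not only at the typical scale but against the full fluctuations and moderate deviations of $(S_r^I,S_p^I)$, so that the $L^2$ bound $\la w^2\ra_{V,\theta}-\la w\ra_{V,\theta}^2=O((K/N)^2)$ genuinely holds; once that is in place, $\la w\ra_{V,\theta}\to 1$ and the claimed inequality \eqref{eq:magic:ab} follows.
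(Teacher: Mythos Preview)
Your proposal is correct and follows essentially the same route as the paper: both represent $\la f\ra_V$ through the density ratio of the $(N-K)$-fold convolution evaluated at $(-S_r^I,-S_p^I)$, center, and apply Cauchy--Schwarz together with a local central limit theorem for that density. Two small differences are worth flagging. First, the paper normalizes the weight by $U_N^{(\theta)}(0)\,\wt U_N(0)$ rather than by your $\rho_{N-K}(0,0)$; with that choice the weight has $\di\mu^V_\theta$-mean exactly $1$, so one works directly with $\mathfrak U^{(\theta)}:=\text{weight}-1$ satisfying $\la\mathfrak U^{(\theta)}\ra_{V,\theta}=0$, and the separate lower bound $\la w\ra_{V,\theta}\ge c$ you need is bypassed. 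Second, the pointwise estimate you write, $\rho^r_{N-K}(-x)/\rho^r_{N-K}(0)=1+O\big(x^2/((N-K)\sigma_r^2)\big)$, is only the Gaussian leading term and by itself does not deliver the $K/N$ rate: the crude local CLT leaves a uniform $O(N^{-1/2})$ remainder that would swamp it. The paper (and, implicitly, your own mention of Edgeworth) pushes the expansion one step further using the fourth moment from (V2), obtaining the explicit odd correction $q_1\big(x/(\sigma\sqrt{N-K})\big)/\sqrt{N-K}$ and a uniform $O(N^{-1})$ remainder; only then does the resulting pointwise bound, after taking $L^2(\di\mu^V_\theta)$ over $(S_r^I,S_p^I)$, yield $\norm{\mathfrak U^{(\theta)}}_{V,\theta}\le CK/N$.
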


  \subsection{Proof of Theorem \ref{magic.abs}}
  \label{proof_magic_abs}
Introduce the structure function
\begin{equation}
\label{struc.func}
\Omega_{N}(x) := \int\limits_{ x_1 + \ldots + x_{N} = x} \, e^{-\beta \sum_{j=1}^{N} V(x_j)} \   \di  x_1 \ldots \di x_{N} , 
\qquad
\forall x \in \R . 
\end{equation}
The important remark is that $\Omega_N(x)$ is $N$-times the convolution of the function $e^{-\beta V(x)}$ with itself thus it is  the density function 
of the sum of $N$ iid random variables distributed as $e^{-\beta V(x)}$.

Next, for $\theta \in \R$, we  define the conjugate distribution
\begin{equation}
\label{UN}
U^{(\theta)}_N(x):= \frac{1}{\left(z_{\theta}(\beta)\right)^N} \  e^{-\theta x} \ \Omega_N( x) , \qquad
z_{\theta}(\beta):= 
\int_{\R}  e^{-\beta V(x) - \theta x}\,  \di x  .
\end{equation}
As before, we remark that $U^{(\theta)}_N(x)$ it is $N$-times the convolution of the function $e^{-\beta V(x) - \theta x}$ with itself thus it is  the density function  of the sum of $N$ iid random variables $\{Y_n^{(\theta)}(\beta)\}_{1 \leq n \leq N}$ distributed as
\begin{equation}
\label{Yntb}
Y_n^{(\theta)}(\beta) \sim Y^{(\theta)} := \frac{1}{z_{\theta}(\beta)}   e^{-\beta V(x)- \theta x}\, \di x ,
\end{equation}
moreover thanks to \eqref{V1} we know that $\la Y^{(\theta)} \ra = 0$.
%

The  central limit theorem says that the rescaled random variable $\displaystyle{\frac{1}{\sigma \sqrt N} \sum_{n=1}^N Y_n^{(\theta)}(\beta)}$ converges in distribution to a normal $\cN(0,1)$. 
We want to apply a more refined version  of this result,  called {\em local central limit theorem}, which describes the asymptotic of this convergence.    

In particular we will use a local central theorem whose  proof can be found in \cite[Theorem VII.15]{petrov}; to state it, we first define the functions
\begin{equation}
\label{def:qu}
\tq_\nu(x) := \frac{1}{\sqrt{2\pi}} e^{- \frac{x^2}{2}} \sum_{\cB(\nu)} \tH_{j+2s}(x) \prod_{d=1}^\nu \frac{1}{k_d !} \left( \frac{\gamma_{d+2}}{(d+2)! \, \sigma^{d+2}} \right)^{k_d}
\end{equation}
where $\tH_j$ is the $j$-th Hermite polynomial, $\gamma_d$ is the $d$-th cumulant\footnote{We recall that $\gamma_d = \sum_{\cC(d)} d!(-1)^{m_1+\ldots+m_d-1}\left(m_1+\ldots+m_d-1\right)!\prod_{l=1}^{d}\frac{\alpha_l^{m_l}}{m_l!(l!)^{m_l}}$ where $\alpha_l$ is the $l^{th}$ moment of the random variable and $\cC(d)$ is the set of all non-negative integer solution of $\sum_l lm_l = d$. } of $ Y_n^{(\theta)}(\beta)$, and ${\cB(\nu)}$ is the set of all  non-negative
integer solutions $k_1, \ldots, k_\nu$ of the equalities $k_1 + 2k_2 + \cdots  + \nu k_\nu = \nu$, and
$s = k_1 + k_2 + \cdots +k_\nu$. 
\begin{theorem}[Local central limit]
\label{thm:loc}
Let $\{X_n\}$ be a sequence of iid variables such that 
\begin{itemize}
\item[(i)] For any $1 \leq n \leq N$, one has $\meanval{X_n} = 0$.
\item[(ii)] There exists $k \geq 3$ such that $\meanval{|X_n|^k} < +\infty$ for all $n$. Moreover $\sigma^2:= \meanval{X_n^2}>0$.
\item[(iii)] The random variable $\frac{1}{\sigma \sqrt{N}} \sum_{n=1}^N X_n $ has  a bounded density $\tp_N(x)$.
\end{itemize}
Then there exists $C >0$ such that 
\begin{equation}
\label{thm:loc:est}
\sup_x \abs{\tp_N(x) - \frac{1}{\sqrt{2\pi}} e^{-\frac{x^2}{2}} + \sum_{\nu=1}^{k-2}\frac{\tq_\nu(x)}{N^{\nu/2}}} \leq \frac{C}{N^{(k-2)/2}} , 
\end{equation}
where the $\tq_\nu$'s are defined in \eqref{def:qu}.
\end{theorem}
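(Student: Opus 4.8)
The statement is the classical local central limit theorem with Edgeworth corrections, and I would establish it by the standard characteristic-function (Fourier) method. Write $\chi(t) := \meanval{e^{\im t X_1}}$ for the common characteristic function of the $X_n$, so that the normalized sum $\frac{1}{\sigma\sqrt N}\sum_{n=1}^N X_n$ has characteristic function $\chi\big(t/(\sigma\sqrt N)\big)^N$. Since by hypothesis (iii) this sum possesses a bounded density $\tp_N$, the Fourier inversion formula applies and gives
\begin{equation}
\label{plan:inv}
\tp_N(x) = \frac{1}{2\pi}\int_\R e^{-\im t x}\, \chi\!\left(\frac{t}{\sigma\sqrt N}\right)^{\!N} \di t .
\end{equation}
The plan is to show that the integrand in \eqref{plan:inv} is well approximated, on a window $|t|\leq \delta\sqrt N$ about the origin, by the Fourier transform of the Gaussian together with its first $k-2$ Edgeworth corrections, and that the contribution of the complementary region is negligible as $N\to\infty$.

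For the central window I would expand the cumulant generating function: by hypothesis (ii) one has, as $t\to 0$, $\log\chi(t) = \sum_{j=2}^{k}\frac{\gamma_j(\im t)^j}{j!} + o(t^k)$, where $\gamma_1 = \meanval{X_1} = 0$ by (i) and $\gamma_2 = \sigma^2$. Substituting $t\mapsto t/(\sigma\sqrt N)$ and multiplying by $N$ gives
\begin{equation}
\label{plan:cum}
N\log\chi\!\left(\frac{t}{\sigma\sqrt N}\right) = -\frac{t^2}{2} + \sum_{d=1}^{k-2}\frac{\gamma_{d+2}(\im t)^{d+2}}{(d+2)!\,\sigma^{d+2}}\,\frac{1}{N^{d/2}} + o\!\left(N^{-(k-2)/2}\right) .
\end{equation}
Exponentiating and expanding the resulting factor as a power series in $N^{-1/2}$ produces $e^{-t^2/2}$ multiplied by polynomials in $t$; the coefficient of $N^{-\nu/2}$ is precisely the sum over the nonnegative integer solutions $\cB(\nu)$ of $k_1 + 2k_2 + \cdots + \nu k_\nu = \nu$ of the monomials appearing in \eqref{def:qu}. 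Inverting the Fourier transform term by term and using the identity $\frac{1}{2\pi}\int_\R e^{-\im t x}(\im t)^m e^{-t^2/2}\di t = \frac{1}{\sqrt{2\pi}}\tH_m(x)e^{-x^2/2}$, one recognizes exactly the functions $\tq_\nu$ of \eqref{def:qu} (up to the sign convention of \eqref{thm:loc:est}), reproducing both the Gaussian main term and the correction terms.

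The main obstacle is the quantitative error control, which I would obtain by splitting the integral in \eqref{plan:inv} into the regions $|t|\leq \delta\sqrt N$ and $|t|>\delta\sqrt N$. On the central region the expansion \eqref{plan:cum} carries an explicit remainder controlled by $\meanval{|X_1|^k}$, and comparing $e^{N\log\chi}$ with the partial sums of its power-series expansion, then integrating against the rapidly decaying Gaussian weight, yields an error of order $N^{-(k-2)/2}$ uniformly in $x$. The delicate point is the tail $|t|>\delta\sqrt N$: here one must show that $\big|\chi(t/(\sigma\sqrt N))\big|^N$ is integrable and exponentially small, and this is exactly where hypothesis (iii) enters. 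Indeed, $\tp_N$ bounded and integrable gives $\tp_N\in L^2$, hence $\chi(\cdot/(\sigma\sqrt N))^N\in L^2$ and therefore $\chi\in L^{2N}$; since $|\chi|\leq 1$, this forces $\chi(\cdot/(\sigma\sqrt N))^N\in L^1$ for $N$ large (justifying \eqref{plan:inv}), while Riemann--Lebesgue yields $\sup_{|u|\geq\epsilon}|\chi(u)|<1$, so the tail contributes a term absorbed into the remainder. Assembling the central and tail estimates gives the uniform bound \eqref{thm:loc:est}. As this is a well-documented classical result, in the paper I would ultimately simply invoke \cite[Theorem VII.15]{petrov}, whose statement is the one reproduced here.
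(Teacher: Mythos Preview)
Your proposal is correct and matches the paper's approach: the paper does not give a proof of this theorem at all but simply cites \cite[Theorem VII.15]{petrov}, exactly as you conclude one should do. Your sketch of the standard characteristic-function/Edgeworth-expansion argument is accurate and would serve as a proof, but it is not needed here.
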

Applying this theorem in case  $X_n = Y_n^{(\theta)}(\beta)$ , one gets the following result:
\begin{corollary}
Assume (V1)--(V3).
There exist $N_0, \beta_0, C >0$ such that for all $N \geq N_0$, $\beta > \beta_0$ one has
\begin{equation}
\label{UN.exp}
\abs{U^{(\theta)}_N( x) - \frac{1}{\sqrt{2\pi \sigma^2 N}} \exp\left(-\frac{x^2}{2\sigma^2 N}\right) + \sum_{\nu=1}^{2}\frac{\tq_\nu(x/\sigma \sqrt{N})}{N^{(\nu+1)/2}\sigma }} \leq \frac{C}{N^{3/2}\sigma} . 
\end{equation}
\end{corollary}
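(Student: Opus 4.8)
The plan is to read the stated estimate as a direct application of the local central limit theorem (Theorem~\ref{thm:loc}) to the i.i.d.\ variables $X_n = Y_n^{(\theta)}(\beta)$, followed by undoing a standardization. By construction (see \eqref{UN}), $U_N^{(\theta)}$ is the density of the sum $S_N := \sum_{n=1}^N Y_n^{(\theta)}(\beta)$ of $N$ i.i.d.\ copies of $Y^{(\theta)}$, whose law is \eqref{Yntb}. Writing $\sigma^2 = \meanval{(Y^{(\theta)})^2}$ and letting $\tp_N$ denote the density of the standardized sum $Z_N := S_N/(\sigma\sqrt N)$, the change of variables $x = \sigma\sqrt N\, z$ gives the exact identity
\begin{equation}
U_N^{(\theta)}(x) = \frac{1}{\sigma\sqrt N}\,\tp_N\!\Big(\frac{x}{\sigma\sqrt N}\Big).
\end{equation}
Thus it suffices to expand $\tp_N$ and transport the expansion through this scaling.

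First I would verify the hypotheses of Theorem~\ref{thm:loc}. Hypothesis (i) ($\meanval{X_n}=0$) is precisely the defining property \eqref{V1} of $\theta$ recalled below \eqref{Yntb}. For (ii) I take $k=4$: by (V2) the moments of $Y^{(\theta)}$ up to order four are finite and $\sigma^2 \geq \tC_1\beta^{-1}>0$. Hypothesis (iii), boundedness of $\tp_N$, follows from (V3): since $\theta r + \beta V(r)$ is bounded below, the single-site density $\frac{1}{z_\theta(\beta)}e^{-\beta V(r)-\theta r}$ lies in $L^1\cap L^\infty$, and because $\|f\star g\|_\infty \leq \|f\|_\infty\|g\|_1$ its $N$-fold convolution $U_N^{(\theta)}$ (hence $\tp_N$) is bounded for every $N\geq 1$. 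Applying Theorem~\ref{thm:loc} with $k=4$ then yields
\begin{equation}
\sup_z \abs{\tp_N(z) - \frac{1}{\sqrt{2\pi}}e^{-z^2/2} + \sum_{\nu=1}^{2}\frac{\tq_\nu(z)}{N^{\nu/2}}} \leq \frac{C}{N}.
\end{equation}
Substituting $z = x/(\sigma\sqrt N)$ and multiplying by $(\sigma\sqrt N)^{-1}$, the Gaussian term becomes $(2\pi\sigma^2 N)^{-1/2}e^{-x^2/(2\sigma^2N)}$, the $\nu$-th corrector acquires the prefactor $N^{-(\nu+1)/2}\sigma^{-1}$, and the error becomes $C/(N^{3/2}\sigma)$; this is exactly \eqref{UN.exp}.

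The step I expect to be most delicate is the uniformity of the constant $C$ with respect to $\beta$, since it is required for all $\beta>\beta_0$ and not merely for fixed $\beta$. The constant furnished by Petrov's theorem depends on the law of $X_n$ only through its standardized cumulants and through a density bound for the standardized sum, and here both are controlled uniformly: the two-sided moment bounds (V2) give $\meanval{|Y^{(\theta)}|^k}\asymp\beta^{-k/2}$ and $\sigma\asymp\beta^{-1/2}$, so the standardized moments $\meanval{|Y^{(\theta)}|^k}/\sigma^k$ remain bounded for $2\leq k\leq 4$ and $\beta>\beta_0$; one then checks that the density bound from (iii) is likewise uniform after rescaling. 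Keeping careful track of these $\beta$-dependencies, which is also why the factor $\sigma$ is displayed explicitly in the error, is the only genuinely non-routine point, the remaining bookkeeping of the rescaling being straightforward.
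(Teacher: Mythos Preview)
Your proof is correct and follows essentially the same route as the paper: verify hypotheses (i)--(iii) of Theorem~\ref{thm:loc} for $X_n=Y_n^{(\theta)}(\beta)$ using (V1), (V2) with $k=4$, and (V3) together with Young's inequality, then apply the theorem and undo the standardization $\tp_N(z)=\sigma\sqrt{N}\,U_N^{(\theta)}(\sigma\sqrt{N}z)$. Your explicit discussion of the $\beta$-uniformity of the constant $C$ is a welcome addition that the paper's proof leaves implicit.
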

\begin{proof}
We verify that the assumptions of Theorem \ref{thm:loc} are met in case $X_n = Y_n^{(\theta)}(\beta)$. \\
Item $(i)$ and $(ii)$ hold true thanks to assumptions  (V1) and (V2), in particular $(ii)$ is true with $k=4$. To verify $(iii)$, we  note that  $\displaystyle{\frac{1}{\sigma \sqrt{N}}\sum_{n=1}^N Y_n^{(\theta)}(\beta)}$ has  density given by  $\sigma \sqrt{N} \, U^{(\theta)}_N(\sigma \sqrt{N} x)$. 
This last function is  $N$-times the convolution of  $g_\theta(r):= e^{-\theta r - \beta V(r)}$.
 By assumption  (V3), $g_\theta \in L^\infty(\R)$ and by (V2) it belongs also to $L^1(\R)$. 
 So  Young's convolution inequality implies that   $\sigma \sqrt{N} \, U^{(\theta)}_N(\sigma \sqrt{N} x)$ is bounded uniformly in $x$, hence $(iii)$ of Theorem \ref{thm:loc} is verified.\\
We apply Theorem \ref{thm:loc} with $\tp_N(x) = \sigma \sqrt{N} \, U^{(\theta)}_N(\sigma \sqrt{N} x)$, then rescale the variable $x$ to get \eqref{UN.exp}.
\end{proof}

 We study also the structure function 
\begin{equation*}
\wt \Omega_N(\xi) := \int\limits_{\xi_1 + \ldots + \xi_N = \xi} e^{- \frac{\beta}{2} \sum_{j=1}^N \xi_j^2}\  \di \xi_1 \ldots \di \xi_N . 
\end{equation*}
and the normalized distribution
\begin{equation}
\label{wtUN}
\wt U_N(\xi) :=  \frac{1}{\left(\wt z_{\theta}(\beta)\right)^N} \  \ \wt\Omega_N( \xi) , \qquad
\wt z_{\theta}(\beta):= 
\int_{\R}  e^{-\frac{\beta}{2} \xi^2 }\,  \di \xi  . 
\end{equation}
We have the following result:
\begin{lemma} 
For any $ N \geq 1$, any $\beta >0$, one has
\begin{equation}
\label{wtUNexp}
\wt U_N(\xi) = \sqrt{\frac{\beta}{{2\pi  N}}}\,  \exp\left(-\frac{\beta \xi^2}{2 N}\right) . 
\end{equation}
\end{lemma}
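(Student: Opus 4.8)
The plan is to recognize that, exactly as for the functions $\Omega_N$ and $U_N^{(\theta)}$ introduced above, the structure function $\wt\Omega_N$ is the $N$-fold convolution of the single Gaussian weight $g(\xi):=e^{-\frac\beta2\xi^2}$ with itself, so that $\wt U_N$ in \eqref{wtUN} is precisely the probability density of the sum $\xi_1+\cdots+\xi_N$ of $N$ independent random variables each distributed as the centered Gaussian $\cN(0,1/\beta)$ with density $g/\wt z_\theta(\beta)$. The statement \eqref{wtUNexp} is then nothing but the stability of the Gaussian law: the sum of independent centered Gaussians is again a centered Gaussian whose variance is the sum of the variances, here $N/\beta$. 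I would record this either through characteristic functions or, to stay self-contained and in the spirit of the explicit computations of this section, through a short induction on $N$. For the Fourier route, the density $g/\wt z_\theta(\beta)$ has characteristic function $e^{-t^2/(2\beta)}$; by the convolution theorem the $N$-fold convolution has characteristic function $e^{-Nt^2/(2\beta)}$, which is the characteristic function of $\cN(0,N/\beta)$, and inverting gives \eqref{wtUNexp} at once.

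For the inductive route I would write $\wt\Omega_N$ in the form $\wt\Omega_N(\xi)=A_N\,e^{-\frac{\beta}{2N}\xi^2}$ and determine $A_N$. The base case $N=1$ gives $\wt\Omega_1(\xi)=g(\xi)$, i.e.\ $A_1=1$. For the inductive step I would compute the convolution
\begin{equation*}
\wt\Omega_{N+1}(\xi)=\int_\R \wt\Omega_N(\xi-y)\,g(y)\,\di y = A_N\int_\R e^{-\frac{\beta}{2N}(\xi-y)^2-\frac{\beta}{2}y^2}\,\di y ,
\end{equation*}
and complete the square in $y$: the exponent equals $-\frac{\beta}{2}\big[\frac{N+1}{N}(y-\tfrac{\xi}{N+1})^2+\frac{\xi^2}{N+1}\big]$, so the Gaussian integral in $y$ produces $\sqrt{2\pi N/(\beta(N+1))}$ and leaves the factor $e^{-\frac{\beta}{2(N+1)}\xi^2}$. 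This yields the recursion $A_{N+1}=A_N\sqrt{2\pi N/(\beta(N+1))}$, hence by the telescoping product $\prod_{k=1}^{N-1}\frac{k}{k+1}=\frac1N$ one gets $A_N=(2\pi/\beta)^{(N-1)/2}N^{-1/2}$. Finally, using $\wt z_\theta(\beta)=\int_\R e^{-\frac\beta2\xi^2}\di\xi=\sqrt{2\pi/\beta}$ (which in fact does not depend on $\theta$), division by $(\wt z_\theta(\beta))^N=(2\pi/\beta)^{N/2}$ gives the normalizing constant $\sqrt{\beta/(2\pi N)}$ and establishes \eqref{wtUNexp}.

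There is no genuine obstacle here: the entire content is the elementary Gaussian convolution, and the only point requiring care is the bookkeeping of the normalization, i.e.\ checking that the completed-square prefactors telescope correctly to $N^{-1/2}$ and combine with $(\wt z_\theta(\beta))^{-N}$ to the stated constant. As a consistency check I note that \eqref{wtUNexp} reproduces exactly the leading Gaussian term of the local limit expansion \eqref{UN.exp}, with all higher cumulant corrections $\tq_\nu$ vanishing, as they must, because the underlying variables are already Gaussian.
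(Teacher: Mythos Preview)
Your proposal is correct and follows essentially the same approach as the paper: both recognize that $\wt U_N$ is the $N$-fold convolution of the normalized Gaussian $\sqrt{\beta/2\pi}\,e^{-\beta\xi^2/2}$ and invoke the stability of the Gaussian law under convolution (sum of variances). The paper's proof is a one-line appeal to this fact, whereas you additionally supply the explicit verification via completing the square, but the underlying idea is identical.
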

\begin{proof}
The function $\wt U_N$ is the $N$-times convolution of Gaussian functions of the form  $g(\xi) := \sqrt{\frac{\beta}{2\pi}}\, e^{- \frac{\beta}{2} \xi^2}$. 
Since convolution of Gaussians is a Gaussian whose variance is the sum of the variances,  \eqref{wtUN} follows.
\end{proof}
We can finally prove  Theorem \ref{magic.abs}:
\begin{proof}[Proof of Theorem \ref{magic.abs}]
The proof follows closely \cite{BCM14}.
We assume that $f$ is supported on $1, \ldots, K$, the other cases being analogous.
Using that 
$$
Z_{V}(\beta) 
 = \Omega_N(0) \, \wt\Omega_N(0) ,
$$
and  denoting $\wt \bp := (p_1, \ldots, p_K)$ and $\wt \br := (r_1, \ldots, r_K)$, we write 
\begin{align*}
\la f(\wt \bp, \wt \br) \ra_V & = 
\int\limits_{\R^K\times \R^K}
f(\wt \bp, \wt \br) \,  \  
\frac{\Omega_{N-K}\left(-\sum_{j=1}^K r_j\right)}{{\Omega_{N}(0)}} \, 
\frac{\wt \Omega_{N-K} \left(-\sum_{j=1}^K p_k\right)}{\wt \Omega_{N}(0)} \, 
\di \wt \mu
\end{align*}
where
$\di \wt \mu :=  \exp\left({-\beta \sum_{j=1}^K \frac{p_j^2}{2} - \beta\sum_{j=1}^K V(r_j)}\right)\di \wt p \di \wt r$. 
As, by  \eqref{UN} and \eqref{wtUN}, 
\begin{align*}
\frac{\Omega_{N-K}(x)}{{\Omega_{N}(0)}} = \frac{U_{N-K}^{(\theta)}(x)}{U_N^{(\theta)}(0)} 
\frac{e^{\theta x}}{(z_\theta(\beta))^K} , 
\qquad
\frac{\wt \Omega_{N-K} (\xi)}{\wt \Omega_{N}(0)} = 
\frac{\wt U_{N-K}^{(\theta)}(\xi)}{\wt U_N^{(\theta)}(0)} 
\frac{1}{(\wt z_\theta(\beta))^K} , 
\end{align*}
we write  the difference
$\la f \ra_V - \la f \ra_{V,\theta}$ as
$$
\la f \ra_V - \la f \ra_{V,\theta} = \int\limits_{\R^K\times \R^K}
f(\wt \bp, \wt \br) \,  \frac{e^{- \theta \sum_{j=1}^K r_j}}{(z_\theta(\beta))^K \, (\wt z_\theta(\beta))^K}
  \
\tU^{(\theta)}(\wt \bp, \wt \br)
 \, 
\di \wt \mu
$$
where
$$
\tU^{(\theta)}(\wt \bp, \wt \br)
:= 
\frac{U_{N-K}^{(\theta)}\left(-\sum_{j=1}^K r_j\right)}{U_N^{(\theta)}(0)} 
\frac{\wt U_{N-K}^{(\theta)}\left(-\sum_{j=1}^K p_j\right)}{\wt U_N^{(\theta)}(0)} - 1 .
$$
Now we use that
\begin{align*}
\int_{\R^K \times \R^K}  \frac{e^{- \theta \sum_{j=1}^K r_j}}{(z_\theta(\beta))^K \, (\wt z_\theta(\beta))^K}
  \
\tU^{(\theta)}(\wt \bp, \wt \br)
 \, 
\di \wt \mu 
=
\la 1 \ra_V  - \la 1 \ra_{V,\theta} = 0
\end{align*}
so that we can write the difference $\la f \ra_V - \la f \ra_{V,\theta}$   as
$$
\la f \ra_V - \la f \ra_{V,\theta} = \int\limits_{\R^K\times \R^K}
\left( f(\wt \bp, \wt \br) - \la f \ra_{V,\theta} \right) \,  \frac{e^{- \theta \sum_{j=1}^K r_j}}{(z_\theta(\beta))^K \, (\wt z_\theta(\beta))^K}
   \
\tU^{(\theta)}(\wt \bp, \wt \br)
 \, 
\di \wt \mu
$$
Using Cauchy-Schwartz we obtain that
$$
\abs{ \la f \ra_V - \la f \ra_{V,\theta} } \leq \norm{f - \la f \ra_{V, \theta}}_{V, \theta} \ 
\norm{  \tU^{(\theta)}}_{V, \theta} , 
$$
so in order to prove \eqref{eq:magic:ab} we are left to show that uniformly in $N$ and $\beta$ one has 
\begin{equation}
\label{UN.est}
\norm{  \tU^{(\theta)}}_{V, \theta}  \leq C \frac{K}{N} .
\end{equation}
Using \eqref{UN.exp} and \eqref{wtUNexp}, we have that
\begin{align*}
\abs{\frac{U_{N-K}^{(\theta)}\left(x\right)}{U_N^{(\theta)}(0)} 
\frac{\wt U_{N-K}^{(\theta)}\left(\xi\right)}{\wt U_N^{(\theta)}(0)} - 1 } 
\leq C  \left(
\abs{ e^{-\frac{x^2}{2\sigma^2 (N-K)} - \frac{\beta \xi^2}{2(N-K)}} -1 } + 
\frac{N}{(N-K)^{3/2}} \tq_1\left(\frac{x}{\sigma \sqrt{N-K}} \right)
+\frac{K}{N-K} \right)
\end{align*}
Next we use that $|e^{-a^2 - b^2} - 1| \leq a^2 + b^2$,  the explicit expression 
$$
\tq_1(x) = \frac{1}{\sqrt{2\pi}} e^{- \frac{x^2}{2}} (x^3 - 3x)  \frac{\gamma_{3}}{6 \, \sigma^{3}} , 
$$
the estimate 
$\displaystyle{\frac{\gamma_{3}}{6 \, \sigma^{3}}} \leq \tC$ for some $\tC $ independent of $\beta$ (which follows by \eqref{V2} as in our case $\gamma_3 \leq C \beta^{-3/2}$), 
to obtain that there exists $C >0$ such that $\forall N \geq N_0$, $\forall \beta \geq \beta_0$, 
$$
\abs{\frac{U_{N-K}^{(\theta)}\left(x\right)}{U_N^{(\theta)}(0)} 
\frac{\wt U_{N-K}^{(\theta)}\left(\xi\right)}{\wt U_N^{(\theta)}(0)} - 1 } 
\leq  \frac{C}{N} 
 \left(
K+ \beta \xi^2  + 
 \frac{x}{\sigma } + \frac{x^2}{\sigma^2 }+ \frac{x^3}{\sigma^3 N} \right) .
$$
Substituting $x\equiv  -\sum_{j=1}^K r_j$, $\xi \equiv - \sum_{j=1}^K p_j$, and computing the $L^2$ norm (with respect to $\di \mu^V_\theta$) of the terms in the r.h.s. of the last formula give the claimed estimate \eqref{UN.est}.
\end{proof}

\section{Proof of Theorem \ref{LEM:STRUCT}}

\label{app:motzkin_path}

 In this appendix we prove Theorem \ref{LEM:STRUCT}.
 From the  structure of the matrix Lax matrix $L$  in \eqref{jacobi}, we immediately get
 
 $$[L^m]_{jj}(\ba,\bb) = S_{j-1}\left([L^m]_{11}(\ba,\bb)\right),$$
 where $S_{j}$ is the shift defined in \eqref{shift},
 thus we have to prove formula \eqref{eq:general_super_motzkin} just for the case $j=1$.
 
 To accomplish this result  we need to introduce the notion of  super  Motzkin path and  super   Motzkin polynomial, that  generalize the notion of    Motzkin path and     Motzkin polynomial  \cite{stanleyenumerative,VandeJeugt}.
  \begin{definition}
 	A super Motzkin path $p$ of size $m$ is a path in the integer plane $\mathbb{N}_0\times\mathbb{Z}$ from $(0,0)$ to $(m,0)$ where the permitted steps from $(0,0)$
	are: the step up $(1,1)$, the step down $(1,-1)$ and the horizontal step $(1,0)$. A similar definition applies to all other vertices  of  the path.
 \end{definition}
 
 The set of all super Motzkin paths of size $m$ will be denoted by $s\cM_m$.
 
 In order to introduce the  super  Motzkin polynomial associated to these paths we have to define their {\em weight}. This is done in the following way: to each up step  that occurs at height $k$, i.e. it joins the points $(l,k)$ and $(l+1,k+1)$, we associate the weight $a_k$, to a down step that joins the points $(l,k)$ and $(l+1,k-1)$ we associate the weight $a_{k-1}$, 
  to each horizontal step from $(l,k)$ to $(l+1,k)$   we associate the weight $b_k$.
  Since $k\in \Z$, the index of $a_k$ and $b_k$ are understood modulus $N$.
At this point we can define the total weight $w(p)$ of a super Motzkin path $p$ to be the product of weights of its individual steps. So it is a monomial in the commuting variables $(\bb,\ba) = (b_{-\wt{m}},\ldots,b_\wt m,a_{-\wt{m}},\ldots,a_{\wt m})$, where $\wt m = \floor{m/2}$. We remark that the total weight do not characterize uniquely the path. 
 We are now ready to give the definition of Motzkin polynomial:
  \begin{definition}
 	The super Motzkin polynomial $sP_m(\ba,\bb)$ is the sum of all weight of the elements of $s\cM_m$:
 	\begin{equation}
	\label{sMp}
 	sP_m(\ba,\bb) = \sum_{p\in s\cM_m} w(p) \, .
 	\end{equation}
 \end{definition} 
 We are now ready to relate the  Toda integrals  to the  super Motzkin polynomial $sP_m(\ba,\bb)$.
 \begin{proposition}
 	\label{prop:equivalence}
 	Given the   Lax matrix $L$ in \eqref{jacobi} then:
 	\begin{equation}\label{Eeq:tracce_motzkin}
 	\left[L^m\right]_{1,1}(\ba,\bb) = sP_m(\ba,\bb)\, .
 	\end{equation}
 	where  the super Motzkin  polynomial   $sP_m(\ba,\bb)$  is defined  in \eqref{sMp} and $a_j \equiv a_{j \mod N}, \, b_j \equiv b_{j \mod N}$.
 \end{proposition}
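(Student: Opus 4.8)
The plan is to interpret the matrix power $[L^m]_{1,1}$ combinatorially as a weighted sum over closed walks of length $m$ based at the first vertex, and then to exhibit a weight-preserving bijection between such walks and the super Motzkin paths in $s\cM_m$. Expanding the product of matrices entrywise gives
\[
[L^m]_{1,1} = \sum_{v_0 = 1,\, v_1, \dots, v_{m-1},\, v_m = 1} \ \prod_{t=0}^{m-1} L_{v_t, v_{t+1}},
\]
where the sum runs over all sequences $v_0, \dots, v_m$ of indices in $\Z/N\Z$ with $v_0 = v_m = 1$. Because $L$ is the periodic Jacobi matrix \eqref{jacobi}, a factor $L_{v_t,v_{t+1}}$ is nonzero only when $v_{t+1}-v_t \in \{-1,0,+1\} \pmod N$, the corresponding weights being $L_{i,i+1}=L_{i+1,i}=a_i$ for a step to a neighbour and $L_{i,i}=b_i$ for a step that stays put. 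Thus each nonzero term is a closed nearest-neighbour walk on the cycle $C_N$ equipped with a self-loop at every vertex, and its contribution is the product of the traversed edge- and loop-weights.

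First I would record the elementary but essential fact that, since $m \le N-1$, such a walk cannot wind around the cycle. Lifting the walk to the universal cover $\Z$ (starting the lift at $0$) produces a lattice sequence $\wt h_0, \dots, \wt h_m \in \Z$ with $\wt h_0 = 0$ and increments $\wt h_{t+1}-\wt h_t \in \{-1,0,+1\}$; its displacement satisfies $|\wt h_m| \le m < N$ while $\wt h_m \equiv v_m - v_0 \equiv 0 \pmod N$, forcing $\wt h_m = 0$. Reading $t$ as the horizontal coordinate and $\wt h_t$ as the height, the lifted walk is exactly a path from $(0,0)$ to $(m,0)$ with up, down and horizontal steps, i.e. an element of $s\cM_m$. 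Moreover the heights stay in $[-\wt m, \wt m]$ with $2\wt m < N$, so the projection $k \mapsto (1+k) \bmod N$ is injective on the range of heights and every super Motzkin path projects back to a unique such closed walk. This is the desired bijection.

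It then remains to check that the weight dictionary defining $sP_m$ reproduces the matrix weights under this bijection. An up step joining heights $k$ and $k+1$ corresponds to traversing the edge between the vertices at those heights, of weight $a_k$; a down step from height $k$ to $k-1$ traverses the edge one level lower, of weight $a_{k-1}$; and a horizontal step at height $k$ is the self-loop of weight $b_k$. These are precisely the labels assigned in the definition of the super Motzkin polynomial, so the contribution $\prod_{t} L_{v_t,v_{t+1}}$ of each walk equals the total weight $w(p)$ of the associated path $p$. Summing over all walks and invoking the bijection yields $[L^m]_{1,1} = \sum_{p \in s\cM_m} w(p) = sP_m(\ba,\bb)$, as claimed, and the case of general $j$ follows by the shift relation $[L^m]_{jj} = S_{j-1}\left([L^m]_{11}\right)$ already noted.

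I expect the only genuinely delicate point to be the no-winding step together with the bookkeeping of indices modulo $N$: one must make sure the lift is well defined and that the height-to-vertex identification matches the weight labels $a_k, b_k$ exactly, in particular the asymmetric convention $a_{k-1}$ for the down step. Everything else is the standard walk interpretation of matrix powers, and once the bijection and weight matching are stated cleanly, the subsequent extraction of the explicit formula \eqref{eq:general_super_motzkin} — by classifying super Motzkin paths according to their numbers of up, down and horizontal steps at each height, which is what produces the binomial factors in $\rho^{(m)}$ and the support constraints in $\cA^{(m)}$ — should follow transparently.
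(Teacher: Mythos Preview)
Your proof is correct and follows essentially the same approach as the paper: expand $[L^m]_{1,1}$ as a sum over closed nearest-neighbour walks and biject these with super Motzkin paths via the height profile, checking that the Jacobi weights $a_k,b_k$ match the step weights. If anything, your argument is more careful than the paper's, since you make the no-winding step (using $m\le N-1$) explicit, whereas the paper simply declares the map ``clearly a bijection''.
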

 \begin{proof}
 	
 	In general we have that:
 	
 	\begin{equation}
 	\label{eq:general_trace}
 		\left[L^m\right]_{1,1} = \sum_{\bj \in \mathbb{N}^{m-1}} L_{1,j_1}L_{j_1,j_2}\ldots L_{j_{m-1},1}
 	\end{equation}
	To every element of the sum we associate the path with vertices
	\[
	(0,0),\; (1,\wt{j_1}-1,), \;(2,\wt{j_2}-1),\dots, (\ell,{\wt{j_\ell}}-1),\dots, (m-1,{\wt j_{m-1}}-1), \;(m,0)
	\]
	where
	\[
	\wt{j_k}=
	\begin{cases}
	j_k&\mbox{if $j_k<\wt{m}$}\\
	j_k-N& \mbox{if $j_k\geq \wt{m}$}
	\end{cases}
	\]
	This is a super Motzkin path $p_{\bj}$ and we can associate the weight  $w(p_{\bj})$  as in  the description above therefore we have
	\[
	L_{1,j_1}L_{j_1,j_2}\ldots L_{j_{m-1},1}=w(p_{\bj})
	\]
	This is clearly a bijection. The sum of the weights of  all possible super Motzkin  paths,  is defined to be the super Motzkin  polynomial $sP_m(\ba,\bb)$
	and  thus we get the claim.
 \end{proof}

	Proceeding as in \cite[Proposition 1]{VandeJeugt}, we are  able to prove the following result, which together with Proposition \ref{prop:equivalence} proves  Theorem \ref{LEM:STRUCT}:

 \begin{proposition}
 	The super Motzkin polynomial of size $m$ is given explicitly as
	 	\begin{equation}\label{Eeq:super_motzkin}
 	sP_m (\ba,\bb)= \sum_{(\bn,\bk)\in \cA^{(m)}} \rho^{(m)}(\bn,\bk) \prod_{i = -\wt m }^{\wt m} a_i^{2n_i}  b_i^{k_i}
 	\end{equation}
 	
 where $\cA^{(m)}$ is the set  
 \begin{equation}
 \label{EcAm}
 \begin{split}
 \cA^{(m)} := \Big\{(\bn,\bk) \in \N^{m}_0 \times \N^{m}_0 \ \colon \ \ \ 
 & \sum_{i= -\wt m }^{\wt m} \left(2n_i + k_i\right) = m  , \\
 & \forall i \geq 0, \ \ \ n_i = 0 \Rightarrow n_{i+1} = k_{i+1} = 0,  \,  
 \\
 & \forall i < 0, \ \ \ n_{i+1} = 0 \Rightarrow n_{i}= k_i = 0 
 \Big\}
 \end{split}\, , 
 \end{equation}
 
 where $\wt m = \floor{m/2}$ 
 and $\rho^{(m)}(\bn, \bm) \in \N $ is 
 given by 
 \begin{align}
 \label{Erhom}
 \rho^{(m)}(\bn,\bk) := &\binom{n_{-1} + n_0 + k_0}{k_0}\binom{n_{-1} + n_0}{n_0}
 \prod_{i=-\wt m \atop i \neq -1}^{ \wt m-1 }\binom{n_i + n_{i+1} +k_{i+1} -1}{k_{i+1}}\binom{n_i + n_{i+1} -1}{n_{i+1}} \, .
 \end{align}	
 \end{proposition}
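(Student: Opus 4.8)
The plan is to prove the explicit formula \eqref{Eeq:super_motzkin} by grouping the super Motzkin paths according to their combinatorial \emph{profile} and counting the paths in each group, following the scheme of \cite{VandeJeugt}. First I would observe that the weight $w(p)$ of a super Motzkin path $p$ depends only on such a profile. Indeed, let $n_i$ be the number of upward crossings made by $p$ of the rung between heights $i$ and $i+1$, and let $k_i$ be the number of horizontal steps of $p$ at height $i$. Since $p$ starts and ends at height $0$, every upward crossing of a given rung must be matched by a downward crossing of the same rung, so there are exactly $n_i$ downward crossings as well. Because an up step leaving height $i$ and a down step landing on height $i$ both carry the weight $a_i$, one gets $w(p)=\prod_i a_i^{2n_i}b_i^{k_i}$, which is constant on each profile $(\bn,\bk)$ and reproduces the monomial in \eqref{Eeq:super_motzkin}. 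Counting the total number of steps gives $\sum_i(2n_i+k_i)=m$.

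Next I would determine which profiles are realized by at least one path, thereby recovering the set $\cA^{(m)}$ of \eqref{EcAm}. The length identity above is the first constraint. The remaining two are connectivity conditions: for $i\ge 0$, if the rung between $i$ and $i+1$ is never crossed then $p$ never reaches height $i+1$ or above, forcing $n_{i+1}=k_{i+1}=0$; the symmetric statement on the negative side handles $i<0$. These are exactly the implications listed in \eqref{EcAm}, and conversely any profile satisfying them is realizable.

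The core of the argument is to count, for each realizable $(\bn,\bk)\in\cA^{(m)}$, the number $\rho^{(m)}(\bn,\bk)$ of distinct super Motzkin paths with that profile, via a level-by-level decomposition. At level $0$ the path makes $n_{-1}+n_0+k_0$ departures (down, up, or horizontal); choosing which $k_0$ of them are horizontal and then which $n_0$ of the remaining ones go up gives the factor $\binom{n_{-1}+n_0+k_0}{k_0}\binom{n_{-1}+n_0}{n_0}$, with \emph{no} $-1$ in the binomials because at the base level one simply orders the departures (a multinomial count). At every other height $h$, the arrivals from the adjacent level toward $0$ partition the sojourn at and above (resp.\ below) $h$ into groups, and one must distribute the $k_h$ horizontal steps and the excursions to the next level among these groups; this is a weak-composition (stars-and-bars) count, which produces the $-1$ shifts and yields the factors $\binom{n_i+n_{i+1}+k_{i+1}-1}{k_{i+1}}\binom{n_i+n_{i+1}-1}{n_{i+1}}$. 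Taking the product over all levels gives \eqref{Erhom}; combined with Proposition \ref{prop:equivalence} this proves Theorem \ref{LEM:STRUCT}.

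I expect the main obstacle to lie precisely in this last counting step: matching arrivals with departures consistently at every height, correctly isolating the boundary factor at level $0$ (multinomial, no $-1$) from the generic factors at all other levels (weak compositions, with $-1$), and checking that the connectivity conditions defining $\cA^{(m)}$ are exactly what guarantees each binomial in \eqref{Erhom} is well defined and nonzero on realizable profiles while excluding all non-realizable ones.
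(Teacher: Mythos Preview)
Your proposal is correct and follows essentially the same route as the paper: both arguments group super Motzkin paths by their profile $(\bn,\bk)$, identify the weight as $\prod_i a_i^{2n_i}b_i^{k_i}$, derive the constraints defining $\cA^{(m)}$ from the length and connectivity of the path, and then count the paths with a given profile level by level, distinguishing the base level $0$ (where the extra initial position kills the $-1$) from the other levels (stars-and-bars / multisets with repetition). The only difference is in phrasing---you speak of ordering departures and weak compositions, the paper of counting positions after which a step can occur and of multisets---but the combinatorics is identical.
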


 \begin{proof}
 For a give super Motzkin path  $p$  starting at $(0,0)$ and finishing at $(0,m)$  let $k_i$ be  the number 
 of horizontal steps at height $i$ and let $n_i$ be the number of step up from height $i$ to $i+1$.   We remark the number $n_i$ of step up from height $i$ to $i+1$ is equal to 
 the number of step down from $i+1$ to $i$.
 We define the vectors  $\bk = (k_{-\wt m} , k_{-\wt m +1}, \ldots , k_{\wt m})$  and $\bn = (n_{-\wt m} , n_{-\wt m+1}, \ldots , n_{\wt m})$  and we associate
 the product
 \[
 \prod_{i = -\wt m }^{\wt m} a_i^{2n_i}  b_i^{k_i}.
 \]
 Next we need to sum over all possible  super Motzkin path  $p$   of length $m$ connecting $(0,0)$ to $(0,m)$.
  	Since the number of steps up is equal to the number of steps down, one necessarily have 
 	\[
	 \sum_{i= -\wt m }^{\wt m} \left(2n_i + k_i\right) = m\,.
 	\]
	Furthermore since the path is connected it follows that   it is not possible to have a vertex at height $i+1$ without have a vertex at height $i>0$ and the other way round if $i<0$.
	Therefore  one has
	\begin{align*}
	 & \forall i \geq 0, \ \ \ n_i = 0 \Rightarrow n_{i+1} = k_{i+1} = 0,  \,  
 \\
 & \forall i < 0, \ \ \ n_{i+1} = 0 \Rightarrow n_{i}= k_i = 0\,.
\end{align*}	
This proves the definition of the set $\cA^{(m)}$ in \eqref{EcAm}.
The final step of the proof is to count the number of paths  associated to the vectors $\bk = (k_{-\wt m} , k_{-\wt m +1}, \ldots , k_{\wt m})$  and $\bn = (n_{-\wt m} , n_{-\wt m+1}, \ldots , n_{\wt m})$.  We want to show that this number is equal to  $\rho^{(m)}(\bn,\bk) $.

	A horizontal step at height $i$ can occur just after  a  step up to  height $i$, another
 	horizontal step at height $i$, or a step  down  to  height $i$.
	This leaves a total of $n_i + n_{i+1}$  different positions at which a
 	horizontal step at height $i$ can occur. Since we have $k_i$ of horizontal  steps, the number of different
 	configurations with these step counts is the number of ways to choose $k_i$ elements from a
 	set of cardinality  $n_i + n_{i+1}$ with repetitions allowed, i.e.$\binom{n_i + n_{i+1} +k_i -1}{k_i}$.

	The number of different configurations with $n_i$ steps at height $i$ and
$n_{i+1}$ at height $i+ 1$  is given by the number of multi-sets of cardinality
$n_{i+1}$ taken from a set of cardinality $n_i$ and this number is equal to 
	$\binom{n_i + n_{i+1} -1}{n_{i+1}}. $
	
	For the horizontal steps at height 	$0$, they can also occur at the beginning of the path, this increase the number of possible positions by $1$, so the number of these configurations with these steps counts is $\binom{n_0 + n_{-1} + k_0}{k_0}$.   In this way we have obtained the coefficient $ \rho^{(m)}(\bn,\bk) $.

 \end{proof}

{\noindent \bf Acknowledgments.}
This manuscript was initiated during the research in pairs that took place in May 2019 at the
Centre International des Rescontres math\'ematiques (CIRM), Luminy, France  during the chair Morlet semester "Integrability and randomness in mathematical physics".
 The authors  thank CIRM for the   generous support and the hospitality.
This project has also  received funding from the European Union's H2020 research and innovation program under the Marie Sk\l owdoska--Curie grant No. 778010 {\em  IPaDEGAN}. \\
Finally we are grateful to Miguel Onorato  for very useful discussions.

\end{document}